\newtheorem{lemma}{Lemma}
\newtheorem{theorem}{Theorem}
\newtheorem{definition}{Definition}
\newtheorem{proposition}{Proposition}
\newtheorem{corollary}{Corollary}
\newtheorem{claim}{Claim}  
\newtheorem{example}{Example}
\renewenvironment{proof}{\par\noindent\underline{Proof}: }{\hfill$\square$\par\vskip10pt}
\renewcommand{\baselinestretch}{1.3}\normalsize
\begin{document}

\title{Strategy-proofness with single-peaked and single-dipped preferences}

\author{Jorge Alcalde--Unzu\thanks{\footnotesize{Department of Economics and INARBE, Universidad P\'ublica de Navarra, Campus Arrosadia, 31006
Pamplona, Spain. \texttt{Email:\,jorge.alcalde@unavarra.es}. Financial support from
the Spanish Ministry of Economy and Competitiveness, through project PID2021-127119NB-I00 (funded by MCIN/AEI/10.13039/501100011033 and by ``ERDF A way of making Europe'') is gratefully acknowledged.}}, \,
Oihane Gallo\thanks{\footnotesize{Corresponding author. Faculty of Business and Economics, University of Lausanne, Internef, 1015, Lausanne,
Switzerland. \texttt{Email:\,oihane.gallo@unil.ch}. Main financial support from the Swiss National Science Foundation (SNSF) through project 100018$\_$192583, and additional support from the Spanish Ministry of Economy and Competitiveness through projects PID2021-127119NB-I00 and PID2019-107539GB-I00 (funded by MCIN/AEI/10.13039/501100011033 and by ``ERDF A way of making Europe'') is gratefully acknowledged.}}\, and Marc
Vorsatz\thanks{\footnotesize{Departamento de An\'alisis Econ\'omico,
Universidad Nacional de Educaci\'on a Distancia (UNED), Paseo Senda del Rey 11, 28040
Madrid, Spain. \texttt{Email:\,mvorsatz@cee.uned.es}. Financial support from
the Spanish Ministry of Economy and Competitiveness, through project PID2021-122919NB-I00 (funded by MCIN/AEI/10.13039/501100011033 and by ``ERDF A way of making Europe'') is gratefully acknowledged.}} }
\thispagestyle{empty}
\date{March 2024}
\maketitle
\begin{abstract}\vspace{0.1cm}
\noindent We analyze the problem of locating a public facility in a domain of single-peaked and single-dipped preferences when the social planner knows the type of preference (single-peaked or single-dipped) of each agent. 
Our main result characterizes all strategy-proof rules and shows that they can be decomposed into two steps. 
In the first step, the agents with single-peaked preferences are asked about their peaks and, for each profile of reported peaks, at most two alternatives are preselected. 
In the second step, the agents with single-dipped preferences are asked to reveal their dips to complete the decision between the preselected alternatives. 
Our result generalizes the findings of \cite{moulin1980strategy} and \cite{barbera1994characterization} for single-peaked and of \cite{manjunath2014efficient} for single-dipped preferences.
Finally, we show that all strategy-proof rules are also group strategy-proof and analyze the implications of Pareto efficiency. \vspace{0.15cm}
\end{abstract}
%\par
\noindent \textit{Keywords:} social choice rule, strategy-proofness, single-peaked preferences, single-dipped preferences.\vspace{0.15cm}
%\par

\noindent \textit{JEL-Numbers:} D70, D71, D79.

\newpage
\renewcommand{\baselinestretch}{1.5}\normalsize
\section{Introduction} 

\subsubsection*{Motivation and main results}

\noindent Governments continually improve their cities by constructing new public facilities such as schools, hospitals, or parks. When deciding upon the location of these facilities, public officials consider technical constraints (\emph{e.g.}, not all locations may be feasible) and monetary limitations (\emph{e.g.}, the construction costs could differ from one location to another) but they may also take the preferences of the affected population into account. However, since preferences are private information and agents are strategic, it cannot be ruled out that agents misrepresent their preferences, which could adversely affect the final decision. 
This paper seeks to construct social choice rules that always incentivize agents to reveal their preferences truthfully, a property known as strategy-proofness. 

\medskip

\noindent A central result in the literature on strategy-proofness is the impossibility of \cite{gibbard1973manipulation} and \cite{satterthwaite1975strategy}, which states that any strategy-proof rule on the universal preference domain with more than two alternatives in its range is dictatorial. Therefore, to construct non-dictatorial social choice rules that induce truth-telling, one has to restrict either the range of the rules to two alternatives or the domain of admissible preferences. Since rules with a range of two alternatives are not Pareto efficient on the universal preference domain, the literature has focused on identifying situations in which the preference domain can be naturally restricted. 

\medskip

\noindent As we explain in more detail in our literature review, the domain of single-peaked and the domain of single-dipped preferences have received special attention when decision makers decide where to construct a new public facility. In our general approach, we combine these two preference domains to address the problem of locating a public facility in  any subset of the real line. To be more specific, in our model some agents have single-peaked preferences, while others have single-dipped preferences. Furthermore, the type of preference of each agent (single-peaked or single-dipped) is commonly known but the location of the peak/dip together with the rest of the preference is private information. In this setting, the set of admissible preferences for an agent with single-peaked (single-dipped) preferences is equal to the set of all single-peaked (single-dipped) preferences.

\medskip

\noindent Our main result (Theorem 1) characterizes all strategy-proof rules on the aforementioned domain. In particular, we find that all strategy-proof rules comply with the following two-step procedure. In the first step, each agent with single-peaked preferences is asked about her best alternative in the range of the rule (her ``peak"), and for each profile of reported peaks, at most two alternatives are preselected. If only one alternative is preselected, this is the final outcome. If two alternatives are preselected, these two alternatives necessarily form a pair of contiguous alternatives in the range of the rule. In the second step, each agent with single-dipped preferences is asked about her worst alternative in the range of the rule (her ``dip"). Then, taking into account the information of all ``peaks" and ``dips", one of the two preselected alternatives is chosen. 

\medskip

\noindent We also show that all strategy-proof rules are group strategy-proof (Theorem 1). Finally, the range of any strategy-proof and Pareto efficient rule is either equal to the set of alternatives or coincides with the ``extreme points" of the set of alternatives (Proposition 5).

\subsubsection*{Related literature}

There is a considerable literature on strategy-proofness that studies domains in which the Gibbard-Satterthwaite impossibility can be avoided; see \cite{barbera2011strategyproof} for a survey. For the case of locating public facilities, the domains of single-peaked preferences and single-dipped preferences have been analyzed in depth.

\medskip

\noindent If the facility is a public good, \emph{e.g.}, a school or a hospital, the domain of single-peaked preferences is appealing. \cite{black1948decisions, black1948rationale} was the first to discuss single-peaked preferences and to show that the median voter rule, which selects the median of the declared peaks, is strategy-proof and selects the Condorcet winner. Later, \cite{moulin1980strategy} and \cite{barbera1994characterization} characterize all strategy-proof rules on this domain: the {\it generalized median voter rules}. By contrast, if the facility is considered a public bad, \emph{e.g.} a prison, the single-dipped preference domain emerges naturally. \cite{barbera2012domains} and \cite{manjunath2014efficient} show that on this domain all strategy-proof rules have a range of at most two.

\medskip

\noindent However, there are public facilities which do not give rise to unanimous opinions, that is, the preferences of some agents are single-peaked, while the preferences of others are single-dipped.
Examples include dog parks, soccer stadiums, or dance clubs. For example, dog owners may have single-peaked preferences on the location of a dog park, yet individuals who do not like dogs probably have single-dipped preferences. Consequently, in order to deal with such situations, a domain that includes both single-peaked and single-dipped preferences is needed. If the set of admissible preferences of each agent coincides with the set of all single-peaked and all single-dipped preferences, \cite{berga2000maximal} and \cite{achuthankutty2018dictatorship} show that the Gibbard-Satterthwaite impossibility applies. Hence, the preference domain needs to be further constrained. 

\medskip

\noindent The closest works to our analysis are \cite{alcalde2018strategy}, \cite{thomson2022should} and \cite{feigenbaum2015strategyproof}. \cite{alcalde2018strategy} characterize all strategy-proof rules when the peak/dip of each agent, which coincides with the location of the agent on the real line, is public information but the social planner neither knows the type of preference of an agent (single-peaked or single-dipped) nor the rest of the agent's preference. Unlike them, we consider a domain in which the type of preference of each agent (single-peaked or single-dipped) is public information but the social planner neither knows the location of the peak/dip nor the rest of the preference of an agent. Observe that in our setting, in contrast to \cite{alcalde2018strategy}, agents could misrepresent their preferences by lying about the location of the peak/dip, but not about their type of preference (\emph{i.e.}, an agent with single-peaked preferences cannot declare that she has single-dipped preferences, and vice versa).

\medskip

\noindent \cite{thomson2022should} studies a restricted domain with two agents. In this domain, the social planner has the following information: (i) the first agent has single-peaked and the second agent has single-dipped preferences,  and (ii) the peak of the preference of the first agent and the dip of the preference of the second agent are situated at the same publicly known location. \cite{thomson2022should} shows that in this domain, the only Pareto efficient and strategy-proof rules are dictatorial. Observe that \cite{thomson2022should} differs from our model in two aspects. First, we do not restrict the number of agents and allow for any distribution of agents with single-peaked and single-dipped preferences. Second, and most importantly, in \cite{thomson2022should} the social planner has more information about the agents' preferences. In fact, \cite{thomson2022should} assumes that the social planner knows the locations of the peak and dip (and even that they coincide), while this information is not available to the social planner in our domain. And, as we will detail later, this difference is crucial for our possibility results. %For instance, consider the following rule with range $\{1, 2, 3\}$ restricted to the two-agent case in our domain: if the agent with single-peaked preferences declares that her ``peak" is at $1$ or at $2$, the rule selects her ``peak", otherwise the agent with single-dipped preferences chooses with her ``dip" the final outcome between $2$ and $3$ ($2$ is chosen only if the ``dip" is at $3$). This rule is strategy-proof, non-dictatorial and has a range of more than two alternatives. However, the knowledge the social planner has of the peaks and dips in the domain of \cite{thomson2022should} prevents the construction of such a rule therein.

\medskip

\noindent Finally, \cite{feigenbaum2015strategyproof} introduce a domain where the type of preference of each agent (single-peaked or single-dipped) is public information, while the location of the peaks and dips of the preferences are unknown, as it happens in our domain. Unlike us, they assume that the preferences over the alternatives outside the peak or dip are cardinally determined by the distance to the peak or dip. Furthermore, the objective of both papers also differs as \cite{feigenbaum2015strategyproof} seek strategy-proof rules that best approximate certain social values like utilitarianism or Rawlsianism.

\medskip

%{\color{red} Finally, \cite{feigenbaum2015strategyproof} introduces a domain with some similarities to ours.} {\color{blue} In both domains, the type of preference of each agent (single-peaked or single-dipped) is public information, while the location of the peaks and dips of the preferences are unknown. Furthermore, in their domain, preferences over the alternatives outside the peak or dip are assumed to be cardinally determined by the distance to the peak or dip, an assumption that is not included in the domain analyzed in the current paper. The objectives of both papers also differ as \cite{feigenbaum2015strategyproof} look for strategy-proof rules that best approximate certain social values like utilitarianism or Rawlsianism.}\medskip

\noindent  Our characterization establishes that the first step of the strategy-proof rules on our domain is similar to the strategy-proof rules on the single-peaked preference domain, and  the second step is similar to the strategy-proof rules on the single-dipped preference domain. As a consequence, previous results in the literature are generalized. To see this, remember that in the first step of the two-step procedure, only the agents with single-peaked preferences are asked about their peaks and that either a single alternative or a pair of contiguous alternatives is preselected.  Proposition 3 shows that if we define an intuitive order on the possible sets of preselected alternatives, a generalized median voter function has to be applied in the first step. Moreover, in the second step of the two-step procedure, a binary decision problem is faced and we find that the choice between the two preselected locations is made in the same way as in the strategy-proof rules on the single-dipped preference domain (Proposition 4). Thus, if the set of agents with single-dipped preferences is empty, only the first step applies and only single alternatives appear as the outcome of the first step. Consequently, the two-step procedure reduces to the standard generalized median voter rules of \cite{moulin1980strategy} and \cite{barbera1994characterization}. Similarly, if the set of agents with single-peaked preferences is empty, either a unique singleton or a unique pair of alternatives is preselected. If a unique singleton is preselected, the rule is constant. If a unique pair of alternatives is preselected, the two-step procedure reduces to the rules characterized in \cite{manjunath2014efficient}. We will be more specific about these relations after presenting our main characterization in Section 3.

\subsubsection*{Remainder}

\noindent The remainder of the paper is organized as follows. Section \ref{sec1} presents the model. Our results are in Section \ref{sec2}. Section \ref{sec4} concludes. All proofs are relegated to the Appendix.

\section{The model}\label{sec1}

Let $X \subseteq \mathbb{R}$ be a set of feasible alternatives and let $N=\{1,\ldots,n\}$ be a finite set of agents that is partitioned into two groups: the set of agents $A$ with cardinality $|A|=a \in \mathbb{N} \cup \{0\}$ and the set of agents $D=N\setminus A$ with cardinality $|D|=n-a \in \mathbb{N} \cup \{0\}$. 
Let $R_i$ be the preference of agent $i \in N$ over $X$. 
Formally, $R_i$ is a complete, transitive, and antisymmetric binary relation. 
The strict preference relation induced by $R_i$ is denoted by $P_i$.
Let ${\cal R}_{i}$ be the preference domain of agent $i$. 
We refer to ${\cal R} = \times_{i \in N} {\cal R}_i$ as the domain.

\medskip

\noindent The agents belonging to $A$ have single-peaked preferences. 
That is, each $R_i \in {\cal R}_i$ of an agent $i \in A$ satisfies the following condition: there is an alternative $\rho(R_i) \in X$ (called the \emph{peak}) such that for each $x, y \in X$ with $\rho(R_i) \geq x > y$ or $\rho(R_i) \leq x < y$, it follows that $x \, P_i \, y$. 
Furthermore, each agent $i \in D$ has single-dipped preferences.
That is, each $R_i \in {\cal R}_i$ of an agent $i \in D$ satisfies the following condition: there is an alternative $\delta(R_i) \in X$ (called the \emph{dip}) such that for each $x, y \in X$ with $\delta(R_i) \geq x > y$ or $\delta(R_i) \leq x < y$, it follows that $y \, P_i \, x$. 
If $i\in A$, then ${\cal R}_{i}$ is the set of all single-peaked preferences over $X$.
Similarly, if $i \in D$, then ${\cal R}_{i}$ is the set of all single-dipped preferences over $X$.

\medskip

\noindent A preference profile is a list of preferences $R \equiv (R_i)_{i \in N} \in {\cal R}$. 
For each $S \subset N$, let ${\cal R}^S = ({\cal R}_i)_{i \in S}$ be the subdomain of ${\cal R}$ restricted to the agent set $S$.  
Given profile $R \in {\cal R}$, $R_S \in {\cal R}^S$ and $R_{-S} \in {\cal R}^{N \setminus S}$ are obtained by restricting $R$ to $S$ and $N\setminus S$, respectively.
We can thus write $R=(R_A,R_D)$.
Given $R=(R_A,R_D) \in {\cal R}$, $\rho(R)$ denotes the vector of peaks of the agent set $A$ at $R$.
Similarly, $\delta(R)$ is the vector of dips of the agent set $D$ at $R$.

\medskip

\noindent A social choice rule is a function $f: {\cal R} \rightarrow X$ that selects for each preference profile $R \in {\cal R}$ an alternative $f(R) \in X$. 
Let $\Omega = \{x \in X \, : \, \exists R \in {\cal R} \mbox{ such that } f(R) = x\}$ be the range of $f$. Then, $\Omega(\boldsymbol{\rho})= \{x \in \Omega : \exists R \in {\cal R} \mbox{ such that } \rho(R) = \boldsymbol{\rho} \mbox{ and } f(R) = x\}$ is the range of $f$ for the subdomain of preferences when the agent set $A$ has the vector of peaks $\boldsymbol{\rho}$.
For each $i \in A$ and each $R_i \in {\cal R}_i$, let $p(R_i) = \{x \in \Omega \, : \, x \, P_i \, y \mbox{ for each } y \in \Omega \setminus \{x\}\}$ be the $\Omega$-restricted peak of $R_i$.
Similarly, for each $i \in D$ and each $R_i \in {\cal R}_i$, let $d(R_i) = \{x \in \Omega \, : \, y \, P_i \, x \mbox{ for each } y \in \Omega \setminus \{x\}\}$ be the $\Omega$-restricted dip of $R_i$. Observe that given any value of $\rho(R_i)$ for any $i \in A$ (respectively, $\delta(R_i)$ for any $i \in D$), there are at most two possible values for $p(R_i)$ (respectively, $d(R_i)$). We define the vector of $\Omega$-restricted peaks at $R \in{\cal R}$ as $p(R)\equiv(p(R_i))_{i \in A}$.
Similarly, $d(R) \equiv (d(R_i))_{i \in D}$ is the vector of $\Omega$-restricted dips at $R \in{\cal R}$. To ensure the existence of $p(R)$ and $d(R)$, we assume that $\mathbb{R} \setminus \Omega$ is either empty or the union of open sets.\footnote{In fact, if this condition is not satisfied, it is not only that $p(R)$ and $d(R)$ cannot be defined, but also that there are instances in which no strategy-proof rule exists.} 
Denote the range of $f$ for the subdomain of preferences when the agent set $A$ has the $\Omega$-restricted peaks $\textbf p \in \Omega^a$ by $\Omega(\textbf p)$.
That is, $\Omega(\textbf p) = \{x \in \Omega : \exists R \in {\cal R} \mbox{ such that } p(R) = \textbf p \mbox{ and } f(R) = x\}.$
Finally, let $\Omega^2 = \{(x, y) \in X^2 \, : \, x, y \in \Omega \mbox{ and } x \neq y\}$ be the set of pairs formed by alternatives of $\Omega$ and denote by $\Omega^{2}_{C}=\{(x,y) \in \Omega^2 : x < y  \mbox{ and } (x, y) \cap \Omega=\emptyset\}$ the set of all ordered pairs of contiguous alternatives of $\Omega$.\footnote{Observe that $\Omega_C^2$ is empty if either ($i$) $|\Omega| = 1$ and, thus, $f$ is constant, or ($ii$) $\Omega$ is an interval and, thus, there are no contiguous alternatives.}$^{,}$\footnote{Several concepts introduced in the paragraph depend on the actual rule $f$. Since we do not consider variations of $f$ throughout the manuscript, this dependence is not made explicit in the notation.} 

\medskip

\noindent We focus on rules that are incentive compatible. 
A rule $f$ is said to be manipulable by group $S \subseteq N$ if there is a preference profile $R\in {\cal R}$ and a subprofile $R'_S\in{\cal R}^{S}$ such that for each $i \in S$, $f(R'_S, R_{-S}) \, P_i \, f(R)$.
Then, a rule $f$ is Group Strategy-Proof (GSP) if it is not manipulable by any group $S \subseteq N$. 
Similarly, $f$ is said to be Strategy-Proof (SP) if it is not manipulable by any group $S \subseteq N$ of size $|S| = 1$. 
A rule $f$ is Pareto Efficient (PE) if for each $R\in{\cal R}$, there is no $x\in X$ such that $x \, P_i \, f(R)$ for each $i\in N$. 

\section{Strategy-proof rules}\label{sec2}

\noindent In this section, we provide a characterization of all strategy-proof rules on our domain. 
Since the social choice rule $f$ is necessarily SP whenever $|\Omega| = 1$, we focus only on rules with $|\Omega| \geq 2$.
Our first result is essential and establishes that on the subdomain of preferences induced by any vector of $\Omega$-restricted peaks $\textbf p \in \Omega^a$, the size of the range is at most 2.

\begin{proposition}
\label{barbera}
Let $f:\mathcal{R}\rightarrow\Omega$ be SP. Then, for each vector of $\Omega$-restricted peaks $\emph{\textbf p} \in \Omega^a$, $|\Omega(\emph{\textbf p})| \leq 2$.
\end{proposition}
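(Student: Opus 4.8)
The plan is to reduce the statement to the known fact that every strategy-proof rule on a domain of single-dipped preferences has a range of at most two alternatives (\cite{barbera2012domains}, \cite{manjunath2014efficient}), while controlling the extra freedom coming from the single-peaked agents. First I would record a preliminary simplification: since $f(R)\in\Omega$ for every profile, whenever two preferences $R_i$ and $R_i'$ of an agent induce the same ranking on $\Omega$, strategy-proofness applied in both directions forces $f(R_i,R_{-i})=f(R_i',R_{-i})$, because each would have to be weakly preferred to the other under a common antisymmetric order on $\Omega$. Hence $f$ depends only on the $\Omega$-restricted preferences, and one may argue entirely in terms of orders on $\Omega$. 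Fixing the $\Omega$-restricted peaks at $\mathbf{p}$, the restriction of $f$ to the subdomain $\{R\in\mathcal{R}:p(R_A)=\mathbf{p}\}$ is again strategy-proof and $\Omega(\mathbf{p})$ is precisely its range, so the goal becomes to bound this range by two.

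The main tool is the option-set characterization of strategy-proofness: for each agent $i$ and each $R_{-i}$, the outcome $f(R_i,R_{-i})$ is the $R_i$-maximal element of the option set $o_i(R_{-i})=\{f(R_i',R_{-i}):R_i'\in\mathcal{R}_i\}\subseteq\Omega$. Two consequences drive the argument. For a single-dipped agent, the maximum of any subset of $\Omega$ is attained at one of its two extreme points, so by varying her preference she can induce only $\min o_i(R_{-i})$ or $\max o_i(R_{-i})$. For a single-peaked agent whose $\Omega$-restricted peak is fixed at $p_i$, the maximum of any subset of $\Omega$ is the element of that subset closest to $p_i$, the side being selected by her preference; hence, by varying the rest of her peak-preserving preference she can induce at most the two elements of $o_i(R_{-i})$ immediately to the left and to the right of $p_i$. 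In particular, each single-peaked agent with her peak frozen can move the outcome only between two adjacent alternatives.

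With these facts in hand, fixing the single-peaked agents at any one profile with peaks $\mathbf{p}$ turns $f$ into a strategy-proof rule on the pure single-dipped subdomain of the remaining agents, whose range is at most two by \cite{manjunath2014efficient}. The substantive content of the proposition is therefore to show that the union of these ranges over all peak-preserving single-peaked subprofiles does not produce a third outcome, and this is the step I expect to be the main obstacle: a priori the single-peaked agents (each toggling between two adjacent alternatives) and the single-dipped agents (each toggling between two extremes) could compose to reach many alternatives. I would resolve it by contradiction, assuming $u<w<v$ all lie in $\Omega(\mathbf{p})$ and choosing a profile $R^\star$ with $f(R^\star)=w$. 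By the option-set consequences above, $w$ must simultaneously be an extreme point of every single-dipped agent's option set and the peak-nearest point of every single-peaked agent's option set at $R^\star$. I would then move each single-dipped agent, one at a time, to a preference whose dip is at $w$, so that $w$ becomes her unique worst alternative, applying strategy-proofness along this sequence to track the outcome; since $w$ is each such agent's worst point, the outcome can only be pushed away from $w$ toward an extreme of $\Omega(\mathbf{p})$, while the single-peaked agents, with peaks fixed, cannot pull it back past their two adjacent option-set alternatives. Iterating this construction from profiles attaining $u$ and $v$ pins every outcome to $\{\min\Omega(\mathbf{p}),\max\Omega(\mathbf{p})\}$, contradicting the presence of the interior point $w$. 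The delicate parts of this last step are the bookkeeping of the single-dipped comparisons along the deviation sequence and the careful accounting of the single-peaked agents' limited influence, and this is where the bulk of the work lies.
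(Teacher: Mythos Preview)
Your preliminary reductions are sound and match the paper: the observation that $f$ depends only on the $\Omega$-restricted preferences is exactly the paper's Lemma~\ref{independencia}, and your option-set remarks about single-dipped and peak-frozen single-peaked agents are correct. The reduction ``freeze $R_A$ and invoke the single-dipped result of \cite{manjunath2014efficient}/\cite{barbera2012domains}'' is also used in the paper (it appears as Step~4 there, after both peaks and dips have been fixed).

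The genuine gap is the contradiction step. You assert that moving the single-dipped agents to dip at $w$ ``can only push the outcome away from $w$ toward an extreme of $\Omega(\mathbf{p})$,'' but strategy-proofness only tells you the new outcome lies in that agent's option set $o_j(R^\star_{-j})$---there is no reason this option set is contained in $\Omega(\mathbf{p})$ or that its extremes coincide with $\min\Omega(\mathbf{p})$ and $\max\Omega(\mathbf{p})$. Worse, after the first such move the outcome may no longer be $w$, so the argument does not iterate: the next dipped agent is being switched at a profile whose outcome is some unknown $x$, and your ``dip at $w$'' construction carries no information about her preference between $x$ and the other extreme of her option set. The concluding sentence, that iterating from profiles realizing $u$ and $v$ ``pins every outcome to $\{\min\Omega(\mathbf{p}),\max\Omega(\mathbf{p})\}$,'' is not an argument---it is the conclusion restated. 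Nothing in your sketch rules out, for instance, that one peak-preserving $R_A$ yields range $\{u,w\}$ over $\mathcal{R}^D$ and another yields $\{w,v\}$; showing that these two-element ranges must coincide is exactly the hard part, and your tools (each peaked agent toggles between two option-set neighbours, each dipped agent between two option-set extremes) do not by themselves prevent a chain of such toggles from visiting three alternatives.

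The paper's proof is structured quite differently and confronts this difficulty head-on. It first fixes \emph{both} peaks and dips and invokes \cite{alcalde2018strategy} to get $|\Omega(o(R))|\leq 2$ (Step~1). The substantive work is then Step~2: a case analysis (Lemmas~\ref{internal}--\ref{impossibility} and five cases on the position of $d(R_i),d(R'_i)$ relative to $l,r$) proving that when $|\Omega(o(R))|=2$, changing a single dipped agent's dip leaves the pair $\{l,r\}$ \emph{unchanged}. This invariance is what your sketch is missing, and the paper's proof of it is long precisely because the option-set extremes can in principle shift when a dip moves; ruling that out requires tracking both endpoints simultaneously through all the relative positions. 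Only after this invariance is established does the paper combine it with the single-dipped range bound (Steps~3--5) to conclude. If you want to rescue your approach, you would need an argument of comparable strength showing that the two-element ranges obtained for different peak-preserving $R_A$ (or different $R_D$) coincide; the bookkeeping you defer to ``the bulk of the work'' is not a detail but the entire proof.
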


\noindent Proposition \ref{barbera} implies that any SP rule can be decomposed into two steps. 
In the first step, the agents with single-peaked preferences indicate their $\Omega$-restricted peaks.
This first step is thus independent of the agents with single-dipped preferences.
The outcome of this first step consists of a set of at most two preselected alternatives. Formally, this first step is a function $\omega : \Omega^a \rightarrow \Omega \cup \Omega^2$ that assigns to each vector of $\Omega$-restricted peaks $\textbf p$ a set $\omega(\textbf p) = \Omega(\textbf p)$ of size $|\Omega(\textbf p)| \leq 2$.\footnote{If the set of agents with single-peaked preferences is empty, \emph{i.e.}, $a=0$, we similarly define $\omega(\emptyset) \in \Omega\cup\Omega^{2}$.}
We define $\underline{\omega}(\textbf{p}) = \min \Omega(\textbf{p})$ and $\overline{\omega}(\textbf{p}) = \max\Omega(\textbf{p})$. 
We can observe at this point that if $\omega(\textbf p)$ is a singleton and $\underline \omega(\textbf p) = \overline\omega(\textbf p)$, then alternative $\omega(\textbf p)$ is chosen by $f$ for all $R \in {\cal R}$ such that $p(R)=\textbf p$.
If two alternatives are preselected and $\underline{\omega}(\textbf{p}) \neq \overline{\omega}(\textbf{p})$, then a binary decision function $g_{\omega(\textbf p)} : {\cal R} \rightarrow \omega(\textbf p)$ is applied to each profile $R \in {\cal R}$ with $p(R)=\textbf p$. Observe that the second-step binary decision function may depend on the outcome $\omega(\textbf p)$ of the first step. Corollary \ref{structure0} summarizes this two-step procedure.\footnote{\cite{alcalde2023structure} show that SP rules on any domain can be decomposed into a two-step procedure.}

\begin{corollary}
\label{structure0}
Let $f: {\cal R} \rightarrow \Omega$ be SP. Then, there is a function $\omega : \Omega^a \rightarrow \Omega \cup \Omega^2$ and a set of binary decision functions $\{g_{\omega(\emph{\textbf p})} : {\cal R} \rightarrow \omega(\emph{\textbf p})\}_{\omega(\emph{\textbf p}) \in \Omega^2}$ such that for each $\emph{\textbf p} \in \Omega^a$ and each $R \in {\cal R}$ with $p(R)= \emph{\textbf p}$, $$f(R) = \left\{
	\begin{array}{ll}
	\omega({\bf p})  & \mbox{ if } \omega({\bf p}) \in \Omega \\*[5pt]
	
	g_{\omega({\bf p})}(R) & \mbox{ if } \omega({\bf p}) \in \Omega^2.
	\end{array}
	\right.$$
\end{corollary}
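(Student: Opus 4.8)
The plan is to read the decomposition directly off Proposition \ref{barbera}, so the argument is essentially bookkeeping once that result is available. First I would define the first-step function by setting $\omega(\mathbf p) = \Omega(\mathbf p)$ for each $\mathbf p \in \Omega^a$. Before this is legitimate I must check that $\Omega(\mathbf p)$ is a nonempty set of size at most two, so that it is genuinely an element of $\Omega \cup \Omega^2$. The upper bound $|\Omega(\mathbf p)| \le 2$ is exactly Proposition \ref{barbera}. For nonemptiness I would note that every $\mathbf p = (p_i)_{i \in A} \in \Omega^a$ is realized by some profile: for each $i \in A$ choose a single-peaked preference $R_i$ with $\rho(R_i) = p_i \in \Omega \subseteq X$, so that the $\Omega$-restricted peak of $R_i$ is $p_i$, and complete the profile with arbitrary single-dipped preferences for the agents in $D$; the resulting $R$ satisfies $p(R) = \mathbf p$, whence $f(R) \in \Omega(\mathbf p)$ and $\Omega(\mathbf p) \ne \emptyset$. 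Thus $\omega(\mathbf p) \in \Omega$ when $|\Omega(\mathbf p)| = 1$ and $\omega(\mathbf p) \in \Omega^2$ when $|\Omega(\mathbf p)| = 2$ (ordering the two elements as $\underline\omega(\mathbf p) < \overline\omega(\mathbf p)$). The degenerate case $a = 0$ is identical, with the single empty peak vector giving $\omega(\emptyset) = \Omega$, again of size at most two by Proposition \ref{barbera}.

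Next I would verify the two branches of the displayed formula. Fix $\mathbf p \in \Omega^a$ and any $R \in \mathcal{R}$ with $p(R) = \mathbf p$. By the very definition of $\Omega(\mathbf p)$ we have $f(R) \in \Omega(\mathbf p) = \omega(\mathbf p)$. If $\omega(\mathbf p) \in \Omega$, that is $\Omega(\mathbf p) = \{\omega(\mathbf p)\}$, this forces $f(R) = \omega(\mathbf p)$, which is the first case. If instead $\omega(\mathbf p) \in \Omega^2$, then $f(R)$ equals one of the two preselected alternatives, and it remains only to package this selection as a binary decision function.

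For the second branch I would, for each pair $(x,y) \in \Omega^2$ that occurs as $\omega(\mathbf p)$ for some $\mathbf p$, define $g_{(x,y)} : \mathcal{R} \to \{x,y\}$ by $g_{(x,y)}(R) = f(R)$ for every profile $R$ with $\omega(p(R)) = (x,y)$, and by an arbitrary value (say constant equal to $x$) on the remaining profiles, on which $g_{(x,y)}$ is never invoked. The codomain is correct because $f(R) \in \Omega(p(R)) = (x,y)$ whenever $\omega(p(R)) = (x,y)$, by the computation of the previous paragraph. With this definition, for every $R$ with $p(R) = \mathbf p$ and $\omega(\mathbf p) \in \Omega^2$ we obtain $f(R) = g_{\omega(\mathbf p)}(R)$, which is the second case.

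The only genuinely delicate point, and the step I would treat most carefully, is the well-definedness of the family $\{g_{\omega(\mathbf p)}\}$ when it is indexed by the \emph{pair} $\omega(\mathbf p) \in \Omega^2$ rather than by the peak vector $\mathbf p$, since distinct peak vectors $\mathbf p \ne \mathbf p'$ may satisfy $\omega(\mathbf p) = \omega(\mathbf p')$. I would resolve this by observing that no conflict can arise: each profile $R$ determines a unique vector of $\Omega$-restricted peaks $p(R)$, hence a unique pair $\omega(p(R))$, so a given $R$ is assigned to exactly one function $g_{(x,y)}$ and the prescription $g_{(x,y)}(R) = f(R)$ is unambiguous. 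Consequently the two-step representation holds on all of $\mathcal{R}$.
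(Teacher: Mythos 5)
Your proposal is correct and follows essentially the same route as the paper: the corollary is obtained directly from Proposition \ref{barbera} by setting $\omega(\mathbf p)=\Omega(\mathbf p)$ and letting each $g_{\omega(\mathbf p)}$ be the restriction of $f$ to the profiles whose preselected pair is $\omega(\mathbf p)$, exactly as in the discussion preceding the corollary. Your extra care about nonemptiness of $\Omega(\mathbf p)$ and about well-definedness when the family is indexed by the pair rather than by $\mathbf p$ is sound (the paper itself later distinguishes $g_{\mathbf p}$ from $g_{\omega(\mathbf p)}$ for this reason) but does not change the argument.
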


\noindent In the remainder of this section we analyze the additional conditions SP imposes on $\omega$ and on the set of binary decision functions $\{g_{\omega(\emph{\textbf p})}\}_{\omega(\emph{\textbf p}) \in \Omega^2}$.

\subsection*{The structure of $\omega$}

\noindent We first discuss the conditions SP imposes on the range of $\omega$.
Denote the range of $\omega$ by $r_{\omega} \equiv \{\alpha \in \Omega \cup \Omega^2 \, : \, \exists {\bf p} \in \Omega^a \mbox{ such that } \omega({\bf p}) = \alpha\}$. A first implication is that if $\omega(\textbf p) \in \Omega^2$, then the two preselected alternatives are contiguous in $\Omega$ and, thus, $\omega(\textbf p)$ is an element of $\Omega^{2}_{C}$. This implies that $r_\omega$ is a subset of $\Omega\cup\Omega^{2}_{C}$. A second implication is that for all elements in the interior of $\Omega$, \emph{i.e.}, for all $x \in \mathbf{int}(\Omega) \equiv \Omega \setminus \{\min \Omega, \max \Omega\}$, there is a vector of $\Omega$-restricted peaks $\textbf p$ such that $x$ is the unique preselected alternative. Finally, we have an implication concerning the extreme elements of $\Omega$ (when they exist). As they belong to $\Omega$, each of them should be preselected by $\omega$ either alone or with its contiguous alternative for some vector of $\Omega$-restricted peaks.\footnote{Observe that if there is $x \in \mathbf{int}(\Omega)$ such that $[\min \Omega, x] \subseteq \Omega$ (respectively, $[x, \max \Omega] \subseteq \Omega$), then there exists a vector of $\Omega$-restricted peaks such that $\min \Omega$ (respectively, $\max \Omega$) is the unique preselected alternative. Otherwise, the first implication would be contradicted.} This is formally stated in the following proposition.

\begin{proposition}
\label{rangeomega}
\label{omega}
Let $f: {\cal R} \rightarrow \Omega$ be SP. Then, $r_{\omega}$ is such that:
\begin{itemize}
    \item[$(i)$] $\emph{\textbf{int}}(\Omega) \subseteq r_{\omega} \subseteq \Omega\cup \Omega^{2}_{C}$.
    \item[$(ii)$] If $\min \Omega$ exists and there is no $x \in \Omega$ such that $(\min \Omega, x) \in r_{\omega}$, then $\min \Omega \in r_{\omega}$.
    \item[$(iii)$] If $\max \Omega$ exists and there is no $x \in \Omega$ such that $(x, \max \Omega) \in r_{\omega}$, then $\max \Omega \in r_{\omega}$.
\end{itemize}
\end{proposition}

\noindent The following example illustrates the implications of Proposition \ref{rangeomega}.

 \begin{example}
\label{ex1}
Let $X = \mathbb{R}$. If $f$ is SP and has range $\Omega = [1, 2] \cup \{3, 4\}$, then the range $r_{\omega}$ of the first-step function $\omega$ must be a subset of $\Omega \cup \Omega^2_C = [1, 2] \cup \{3, 4\} \cup \{(2,3), (3,4)\}$. Moreover, the interior of $\Omega$, $\textbf{int}(\Omega) = (1, 2] \cup \{3\}$, must belong to $r_{\omega}$. Observe that there is freedom to include the elements of $\Omega^2_C$ into $r_{\omega}$, but there are some limitations in the case of  $(3, 4)$: if $4\notin r_{\omega}$, then $(3, 4)\in r_{\omega}$, as $4 \in \Omega$. This always occurs with the elements of $\Omega^2_C$ that involve an extreme alternative of $\Omega$. Note that if there is no element of $\Omega^2_C$ that involves one of these extremes (because that extreme does not have a contiguous alternative in $\Omega$), as it occurs here with $1$, then this extreme alternative must belong to $r_{\omega}$ given that this is the unique possibility to include it in $\Omega$. Therefore, we have six possibilities for $r_{\omega}$ in this example: $[1, 2] \cup \{(2, 3), 3, 4\}$, $[1, 2] \cup \{(2, 3), 3, (3, 4)\}$, $[1, 2] \cup \{(2, 3), 3, (3, 4), 4\}$, $[1, 2] \cup \{3, 4\}$, $[1, 2] \cup \{3, (3, 4)\}$ and $[1, 2] \cup \{3, (3, 4), 4\}$.
\end{example}
%\tobecontinued

\noindent We establish in Proposition~\ref{generalized} that $\omega$ is a generalized median voter rule. For that, we first define the complete and transitive binary relation $\leq^*$ over $\Omega\cup \Omega^{2}_{C}$, where $<^*$ represents its asymmetric part, as follows:
\begin{itemize}
\item[(i)] for each $x, y \in \Omega$, $x \leq^* y \Leftrightarrow x \leq y$,
\item[(ii)] for each $(x,y) \in \Omega^{2}_{C}$, $x <^* (x,y) <^* y$.
\end{itemize}

\noindent It can be seen that $\leq^*$ orders all real numbers of $\Omega$ in the standard way. 
And, as Figure $1$ shows for the case when $\Omega = [1, 2] \cup \{3, 4\}$, each pair of contiguous alternatives $(x,y) \in \Omega^{2}_{C}$ is inserted between $x$ and $y$.

\medskip

\begin{figure}[ht]
	\centering
	\begin{tikzpicture}[scale=0.85]%[snake=zigzag, line before snake = 5mm, line after snake = 5mm]
	\tikzset{lw/.style = {line width=1pt}}
	% draw horizontal line   
	%    \draw (0,0) -- (2,0);
	%    \draw (2,0) -- (4,0);
	%    \draw (4,0) -- (5,0);
	%\draw [->](-3.5,0) -- (14.5,0);
	\draw [-, line width=0.45mm](0,0) -- (6,0);
	
	% draw vertical lines
	
	\foreach \x in {0}
	\draw[-, line width=0.45mm] (\x cm,3pt) -- (\x cm,-3pt);
	
%	\foreach \x in {1}
%	\draw[-] (\x cm,1.5pt) -- (\x cm,-1.5pt);
	
	\foreach \x in {2}
	\draw[-, line width=0.45mm] (\x cm,3pt) -- (\x cm,-3pt);
	
	\foreach \x in {3}
	\draw[-] (\x cm,1.5pt) -- (\x cm,-1.5pt);
	
	\foreach \x in {4}
	\draw[-, line width=0.45mm](\x cm,3pt) -- (\x cm,-3pt);
	
	\foreach \x in {5}
	\draw [-](\x cm,1.5pt) -- (\x cm,-1.5pt);
	
	\foreach \x in {6}
	\draw[-, line width=0.45mm] (\x cm,3pt) -- (\x cm,-3pt);

	%{\setstretch{1}
	% draw nodes
	\draw (0,0) node[below=3pt] {$\textbf{1}  $} node[above=3pt] {$   $};
%	\draw (1,0) node[below=3pt] {$\textbf{(1,2)}  $} node[above=3pt] {$   $};
	\draw (2,0) node[below=3pt] {$\textbf{2}  $} node[above=3pt] {$   $};
	\draw (3,0) node[below=3pt] {$\textbf{(2,3)}  $} node[above=3pt] {$   $};
	\draw (4,0) node[below=3pt] {$\textbf{3}  $} node[above=3pt] {$   $};
	\draw (5,0) node[below=3pt] {$\textbf{(3,4)}  $} node[above=3pt] {$   $};
	\draw (6,0) node[below=3pt] {$\textbf{4}  $} node[above=3pt] {$   $};

	\end{tikzpicture}
	%}
	\caption{ Order $<^{*}$ over $\Omega\cup \Omega^{2}_{C} = [1,2] \cup \{(2,3), 3, (3,4), 4\}$.}
\end{figure}
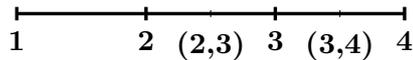

\medskip

\noindent In order to apply $\omega$ we only require a set of agents with single-peaked preferences and a set $r_{\omega} \subseteq \Omega\cup \Omega^{2}_{C}$ of elements with an order $\leq^*$. It is well-known in the literature \citep[see][]{barbera2011strategyproof} that the SP rules on a domain in which all agents have single-peaked preferences over an ordered set are generalized median voter rules. However, this result cannot be applied here because even though the preferences of the agents of $A$ are single-peaked, the domain of the sub-society $A$ over $\Omega\cup \Omega^{2}_{C}$ cannot be described by the cartesian product of individual preference domains.\footnote{As a consequence, the domain belongs to the category of \emph{interdependent values environments} \citep[see, for instance,][]{barbera2022restricted}.} To see this, consider an agent whose peak is at $x \in \Omega$. Then, it seems that this agent is going to strictly prefer $(x, y)$ over $y$ as the outcome of the function $\omega$. However, this is not always the case because the outcome depends on the entire preference profile. For example, the preference domain of agent $i$ only includes the preferences for which $(x, y) \, I_i \, y$ and not those for which $(x, y) \, P_i \, y$ whenever the actual preferences of the remaining agents imply that, if $x$ and $y$ are preselected in the first step, $y$ is finally selected in the second step. Nevertheless, it turns out that the unique $\omega$ functions that are compatible with a SP rule on our domain are also generalized median voter functions.\footnote{ \cite{rodriguez2017single} also analyzes the SP rules on a context where agents have single-peaked preferences and the outcome of the rule can be either a single alternative or a pair of adjacent alternatives. However, we cannot apply his result directly because the author assumes that preferences are also single-peaked over $\Omega\cup \Omega^{2}_{C}$.} To define this family of rules, we need to introduce first the concept of a left coalition system.

\begin{definition}
\label{leftsystem}
\emph{Given a social choice rule} $f: {\cal R} \rightarrow \Omega$, \emph{with} $r_{\omega}$ \emph{satisfying the conditions of Proposition \ref{omega}}, \emph{a} left coalition system on $r_{\omega}$ \emph{is a correspondence} $\mathcal{L} : r_{\omega} \rightarrow 2^A$ \emph{that assigns to each} $\alpha \in r_{\omega}$ \emph{a collection of coalitions} $\mathcal{L}(\alpha)$ \emph{such that:}
\begin{enumerate}
\item[$(i)$] \emph{if} $C\in \mathcal{L}(\alpha)$ \emph{and} $C \subset C'$, \emph{then} $C' \in \mathcal{L}(\alpha)$,
\item[$(ii)$] \emph{if} $\alpha <^* \beta$ \emph{and} $C \in \mathcal{L}(\alpha)$, \emph{then} $C \in \mathcal{L}(\beta)$, \emph{and}
\item[$(iii)$] \emph{if} $\max\Omega$ \emph{exists and} $\max \Omega \notin r_{\omega}$, \emph{then} $\emptyset\in {\cal L}(\max r_{\omega}) \setminus{\cal L}(\alpha)$ \emph{for each} $\alpha\in r_{\omega}\setminus\{\max r_{\omega}\}$.
\item[$(iv)$] \emph{if} $\max\Omega$ \emph{does not exist, then} $\emptyset\notin {\cal L}(\alpha)$ \emph{for each} $\alpha\in r_{\omega}$.
\end{enumerate}
\end{definition}

\noindent A left coalition system on $r_{\omega}$ includes for each element $\alpha \in r_{\omega}$ a set of coalitions ${\cal L}(\alpha)$. The coalitions in ${\cal L}(\alpha)$ can be interpreted as ``support" or ``winning coalitions" that are needed so that an alternative to the left of or equal to $\alpha$ under the order $\leq^{*}$ is implemented. Condition $(i)$ states that if a coalition is winning at $\alpha$, all its supercoalitions are also winning at $\alpha$. Condition $(ii)$ states that if a coalition is winning at $\alpha$, it is also winning at any $\beta>^{*}\alpha$. Finally, as we will see later, conditions $(iii)$ and $(iv)$ guarantee that any alternative in $\Omega$ is effectively an outcome of $f$.

\medskip

 \noindent We make use of Example \ref{ex1} to illustrate the concept of a left coalition system.

\medskip

\noindent \textbf{Continuation of Example \ref{ex1}}: \emph{Suppose that $r_{\omega} = [1, 2] \cup \{(2, 3) , 3, (3, 4)\}$ and let $A = \{i_1, i_2, i_3\}$. An example of a left coalition system is as follows: ${\cal L}(x) = \{S \subseteq A \, : \, |S| \geq 2\}$ for each $x \in [1, 2]$, ${\cal L}(2, 3) = {\cal L}(3) = \{S \subseteq A \, : \, |S| \geq 1\}$ and ${\cal L}(3,4) = 2^A$. Observe that for each element in $r_{\omega}$, the winning coalitions are those that exceed a minimal size. Moreover, this minimal size is weakly decreasing in $\leq^{*}$. Therefore, conditions $(i)$ and $(ii)$ are obviously satisfied. Similarly, since $\max \Omega = 4 \not\in r_{\omega}$, it is required by $(iii)$ that $\emptyset \in {\cal L}(3,4) \setminus {\cal L}(3)$, which is also satisfied in this construction.}

\medskip

\noindent We formally define a generalized median voter function.

\begin{definition}
\label{voterscheme}
\emph{Given a social choice rule $f: \mathcal {R}\rightarrow \Omega$ and a left coalition system $\mathcal{L}$ on $r_{\omega}$, with $r_{\omega}$ satisfying the conditions of Proposition \ref{omega}, %$\emph{\textbf{int}}(\Omega) \subseteq r_{\omega} \subseteq \Omega\cup \Omega^{2}_{C}$, 
the} associated generalized median voter function $\mathbf{\omega}$ \emph{is as follows: for each ${\bf p} \in \Omega^a$ and each $R \in {\cal R}$ such that} $p(R) = {\bf p}$,
$$\omega(\mathbf{p})= \alpha \mbox{\emph{ if and only if}}$$ $$ \{i \in A: p(R_i) \leq^* \alpha\} \in \mathcal{L}(\alpha)$$ \
\emph{and} 
$$\{i \in A: p(R_i) \leq^* \beta\}\notin \mathcal{L}(\beta)\, \mbox{ \emph{for each} } \beta \in r_{\omega} \mbox{ \emph{such that} } \beta <^* \alpha.$$
\end{definition}

\noindent The generalized median voter function $\omega$ associated with $f$ and the left coalition ${\cal L}$ on $r_{\omega}$ chooses the first element $\alpha\in r_{\omega}$, starting from the left, such that the set of agents whose $\Omega$-restricted peaks are to the left of or at $\alpha$ belongs to ${\cal L}(\alpha)$.\footnote{See \cite{jennings2023new} for alternative terminology and definitions of the strategy-proof rules in the single-peaked context.}

\medskip

 \noindent \textbf{Continuation of Example \ref{ex1}}: \emph{We discuss the outcome of $\omega$ at some subprofiles for the left coalition system ${\cal L}$ of the example:}

\begin{itemize}
\item \emph{Consider a subprofile $R_A \in {\cal R}^A$ such that $p(R_A) =(p(R_{i_{1}}), p(R_{i_{2}}), p(R_{i_{3}}))=(1, 1.5, 4)$. Since $\{i \in A \, : \, p (R_i) \leq^{*} 1.5\} = \{i_1, i_2\} \in {\cal L}(1.5)$ and $\{i \in A \, : \, p(R_i) \leq^{*} x\} = \{i_1\} \not\in {\cal L}(x)$ for each $x <^* 1.5$, we have that $\omega(p(R_A)) = 1.5 \in \Omega$. Then, $f(R) = 1.5$.}
\item \emph{Consider a subprofile $R'_A \in {\cal R}^A$ such that $p(R'_A) =(p(R'_{i_{1}}), p(R'_{i_{2}}), p(R'_{i_{3}}))= (1, 3, 4)$. Since $\{i \in A \, : \, p(R'_i) \leq^{*} (2, 3)\} = \{i_1\} \in {\cal L}(2, 3)$ and $\{i \in A \, : \, p(R'_i) \leq^{*} 2\} = \{i_1\} \not\in {\cal L}(2)$, we have that $\omega(p(R'_A)) = (2, 3)$.}
\item \emph{Consider a subprofile $R''_A \in {\cal R}^A$ such that $p(R''_A) =(p(R''_{i_{1}}), p(R''_{i_{2}}), p(R''_{i_{3}}))=(4, 4, 4)$. Since $\{i \in A \, : \, p (R''_i) \leq^{*} (3, 4)\} = \emptyset \in {\cal L}(3, 4)$ and $ \{i \in A \, : \, p(R''_i) \leq^{*} 3\} = \emptyset \notin {\cal L}(3)$, we have that $\omega(p(R''_A)) = (3, 4)$.}
\end{itemize}

%\noindent \emph{Finally, observe that $\omega$ is simply the median of the peaks of the preferences of the three agents of $A$ and {\color{red} the four values} $1$, $1$, $(2, 3)$ and $(3, 4)$.} {\color{blue} I am not sure about the objective and relevance of this sentence.}

\medskip

\noindent We are now ready to explain the relevance of conditions $(iii)$ and $(iv)$ of Definition \ref{leftsystem}. 
First, observe that if $\emptyset \in{\cal L}(\alpha)$ for some $\alpha\in r_{\omega}$, there is no alternative to the right of $\alpha$ that is selected by $\omega$. The underlying reasoning is as follows. If $\emptyset\in{\cal L}(\alpha)$ for some $\alpha\in r_{\omega}$, then, by condition $(i)$ of Definition \ref{leftsystem}, ${\cal L}(\alpha)=2^A$. Therefore, for each $R\in{\cal R}$, $\{i \in A: p(R_i) \leq^* \alpha\} \in \mathcal{L}(\alpha)$.
This implies that $\omega(p(R))\leq^{*}\alpha$. Consequently, if $\emptyset\in{\cal L}(\alpha)$ for some $\alpha\in r_{\omega}$, then $\max r_{\omega}$ exists, which implies that $\max\Omega$ exists as well. Thus, condition $(iv)$ of Definition \ref{leftsystem} requires that if $\max\Omega$ does not exist, then $\emptyset \not \in {\cal L}(\alpha)$ for any $\alpha \in r_{\omega}$. To understand condition $(iii)$, suppose that $\Omega = [1, 2] \cup \{3,4\}$ and that $4 \not\in r_{\omega}$. On the one hand, if $\emptyset\in{\cal L}(\alpha)$ for some $\alpha\neq(3,4)$, then by condition $(i)$ of Definition \ref{leftsystem}, ${\cal L}(\alpha)=2^A$. Therefore, for each $R\in{\cal R}$, $\omega(p(R))\leq^{*} \alpha$ and $(3,4)$ will never appear as the outcome of $\omega$. Hence, alternative $4$ will never be the outcome of $f$, which is not possible given that $4\in \Omega$. On the other hand, consider subprofile $R''_A$ in Example \ref{ex1}. If $\emptyset\notin {\cal L}(\alpha)$ for each $\alpha\in r_{\omega}$, then $\omega(p(R''_A))>^{*}(3,4) = \max r_{\omega}$, which is not possible.

%If,  %consider, for instance, a rule with $\Omega=[1,2] \cup \{3,4\}$ and $r_{\omega}=(\Omega \setminus\{4\})\cup\Omega^{2}_{C}$. We first show that $\emptyset \in {\cal L}(\alpha)$ for some $\alpha\in r_{\omega}$. To see this, consider $R\in{\cal R}$ such that $p(R_i)=4$ for each $i\in A$. If $\emptyset\notin {\cal L}(\alpha)$ for each $\alpha\in r_{\omega}$, then $\omega(p(R))>^{*}(3,4)=\max r_{\omega}$, which is not possible. Second, we show that $\emptyset\not\in{\cal L}(\alpha)$ for each $\alpha\neq(3,4)$. Suppose otherwise that for some $\alpha\neq(3,4)$, $\emptyset\in{\cal L}(\alpha)$. Then, by condition $(i)$ of Definition \ref{leftsystem}, ${\cal L}(\alpha)=2^A$. Therefore, for each $R\in{\cal R}$, $\omega(p(R))\leq^{*} \alpha$ and $(3,4)$ will never appear as the outcome of $\omega$. Hence, the alternative $4$ will never be the outcome of $f$, which is not possible given that $4\in \Omega$. 

\medskip

\noindent Finally, the following proposition introduces the structure of the first step of each SP rule on our domain.

\begin{proposition}
\label{generalized}
Let $f: {\cal R} \rightarrow \Omega$ be SP. Then, the function $\omega$ is a generalized median voter function on a set $r_{\omega}$, with $r_{\omega}$ satisfying the conditions of Proposition \ref{omega}. %$\emph{\textbf{int}}(\Omega) \subseteq r_{\omega} \subseteq \Omega\cup \Omega^{2}_{C}$.
\end{proposition}

\subsection*{The structure of the functions $g_{\omega(\mathbf{p})}$}

\noindent We analyze the structure SP imposes on the second step; \emph{i.e.}, on the collection of binary decision rules $\{g_{\omega(\mathbf{p})} : {\cal R} \rightarrow \omega(\mathbf{p})\}_{\omega(\mathbf{p}) \in \Omega^2}$. 
As we have already indicated before, if $\omega(\mathbf{p}) \in \Omega^2$, then the two preselected alternatives are contiguous. Therefore, we can concentrate on the binary decision rules $\{g_{\omega(\mathbf{p})}\}_{\omega(\mathbf{p}) \in r_{\omega}\cap\Omega^2_{C}}$. 

\medskip

\noindent We define a particular class of binary decision functions called voting by collections of left-decisive sets. 
Each of these functions $g_{\omega(\mathbf{p})}$ is defined by specifying a set of coalitions $W(g_{\omega(\mathbf{p})}) \subseteq 2^N$ (called left-decisive sets) such that $g_{\omega(\mathbf{p})}$ chooses $\underline{\omega}(\mathbf{p})$ if and only if the set of agents that prefer $\underline{\omega}(\mathbf{p})$ to $\overline{\omega}(\mathbf{p})$ is a superset of any coalition that belongs to $W(g_{\omega(\mathbf{p})})$. To introduce the formal definition, we need the following concepts.
For each $\omega(\mathbf{p}) \in r_{\omega}\cap\Omega^2_{C}$ and each $R \in {\cal R}$ such that $p(R)=\mathbf{p}$, let $L_{\omega(\mathbf{p})}(R)$ be the set of agents that prefer $\underline{\omega}(\mathbf{p})$ to $\overline{\omega}(\mathbf{p})$ at $R$.
We say that a coalition $S\subseteq N$ is minimal in a set of coalitions $V\subseteq 2^N$ if $S\in V$ and for each $S'\subset S, S'\notin V$. 
Then, a set of coalitions $V\subseteq 2^N$ is minimal if all its coalitions are minimal in that set.

\begin{definition}
\label{winningdef}
 \emph{Consider a social choice rule} $f: \mathcal {R}\rightarrow \Omega$ \emph{and a generalized median voter function} $\omega$ \emph{on a set} $r_{\omega}$, \emph{with} $r_{\omega}$ \emph{satisfying the conditions of Proposition \ref{omega}}, \emph{associated with a left coalition system} ${\cal L}$. \emph{Then, for each} ${\bf p} \in \Omega^a$ \emph{such that} $\omega(\mathbf{p}) \in r_{\omega} \cap \Omega^2_{C}$, \emph{the associated binary decision function} $g_{\omega({\bf p})}$ \emph{is a} voting by collections of left-decisive sets \emph{if there is a minimal set of coalitions} $W(g_{\omega({\bf p})}) \subseteq 2^N$ \emph{such that for each} $R \in {\cal R}$ \emph{with} $p(R) = {\bf p}$,
	$$g_{\omega({\bf p})}(R) = \left\{
	\begin{array}{ll}
	\underline{\omega}(\mathbf{p}) & \mbox{ \emph{if} } C \subseteq L_{\omega({\bf p})}(R) \mbox{ \emph{for some} } C \in W(g_{\omega({\bf p})}) \\*[5pt]
	\overline{\omega}(\mathbf{p}) & \mbox{ \emph{otherwise}, } 
	\end{array}
	\right. \vspace{0.2cm}$$
	\emph{and the following conditions are satisfied:}
	\begin{itemize}
		\item[(i)] \emph{For each} $C\in W(g_{\omega({\bf p})})$, $C\cap D\neq \emptyset$.
		\item[(ii)] \emph{For each minimal coalition} $B$ of ${\cal L}(\omega(\mathbf{p}))\setminus{\cal L}(\underline{\omega}(\mathbf{p}))$, \emph{there is} $C\in W(g_{\omega(\mathbf{p})})$ \emph{such that} $C\cap A=B$.
		\end{itemize}
\end{definition}

\noindent We illustrate Definition~\ref{winningdef} with help of Example \ref{ex1}.

\medskip

\noindent \textbf{Continuation of Example \ref{ex1}}: \emph{Suppose that $D = \{j_1, j_2, j_3\}$ and let a voting by collections of left-decisive sets be such that $W(g_{(2,3)}) = W(g_{(3, 4)}) = \{S \subseteq N \, : \, |S| = 3\, \mbox{ and }\, |S\cap A|=1\}$. We discuss the outcome of $g_{(2, 3)}$ and $g_{(3, 4)}$ at some profiles:}
\begin{itemize}
\item \emph{Consider a subprofile $R'_D \in {\cal R}^D$ such that $d(R') = (d(R'_{j_1}), d(R'_{j_2}), d(R'_{j_3})) = (1, 3, 3)$. To determine $f(R') = f(R'_A, R'_D)$, remember that $\omega(p(R')) = (2,3)$ and $p(R'_A) =(p(R'_{i_{1}}), p(R'_{i_{2}}), p(R'_{i_{3}}))= (1, 3, 4)$. Then, we analyze $g_{(2,3)}$.} Observe that $L_{(2, 3)}(R') = \{i_1, j_2, j_3\}$. Given that $|\{i_1, j_2, j_3\}|=3$ and  $|\{i_1, j_2, j_3\}\cap A|=1$, we have $L_{(2, 3)}(R')\in W(g_{(2, 3)})$. Thus, $g_{(2,3)}(R') = 2$ and $f(R') = 2$.
\item \emph{Consider a subprofile $R''_D \in {\cal R}^D$ such that $d(R'') = (d(R''_{j_1}), d(R''_{j_2}), d(R''_{j_3})) =(1, 2, 4)$. To determine $f(R'') = f(R''_A, R''_D)$, remember that $\omega(p(R'')) = (3, 4)$ and $p(R''_A) =(p(R''_{i_{1}}), p(R''_{i_{2}}), p(R''_{i_{3}}))= (4, 4, 4)$. Then, we analyze $g_{(3,4)}$.} Observe that $L_{(3, 4)}(R'') = \{j_3\}$. Given that $|\{j_3\}|=1$, we have that there is no $C \in W(g_{(3, 4)})$ such that $C \subseteq L_{(3, 4)}(R'')$. Thus, $g_{(3,4)}(R'') = 4$ and $f(R'') = 4$.
\end{itemize}

%\noindent \emph{Finally, observe that both $g_{(2,3)}$ and $g_{(3,4)}$ simply correspond with a majority rule in which, in case of a tie, the left alternative of the pair is chosen.}  {\color{blue} I am not sure about the objective and relevance of this sentence.}

\medskip

\noindent Observe that two additional conditions are required to complete the description of the minimal set of coalitions $W(g_{\omega(\mathbf{p})})$. These conditions guarantee that, given any $\mathbf{p}\in \Omega^{a}$ such that $\omega(\mathbf{p})\in r_{\omega}\cap\Omega^2_{C}$, for each alternative $\underline{\omega}(\mathbf{p})$ and $\overline{\omega}(\mathbf{p})$, there is a profile with vector of $\Omega$-restricted peaks $\mathbf{p}$ such that the alternative is chosen by $f$. To see this, observe that for each $\mathbf{p}\in \Omega^{a}$, with $\omega(\mathbf{p})\in r_{\omega}\cap\Omega^2_{C}$, we have that the set of agents of $A$ that prefer $\underline{\omega}(\mathbf{p})$ to $\overline{\omega}(\mathbf{p})$, say $B$, is in ${\cal L}(\omega(\mathbf{p}))\setminus{\cal L}(\underline{\omega}(\mathbf{p}))$. Then, condition $(ii)$ requires that there is a coalition of $W(g_{\omega(\mathbf{p})})$ such that its agents with single-peaked preferences are exactly those agents of $B$. Additionally, note that given any $\mathbf{p}\in \Omega^{a}$ such that $\omega(\mathbf{p})\in r_{\omega}\cap\Omega^2_{C}$, the preference between $\underline{\omega}(\mathbf{p})$ and $\overline{\omega}(\mathbf{p})$ of all agents of $A$ is known. Then, if the preferences of the agents with single-dipped preferences are not considered, the outcome of $f$ will be either always $\underline{\omega}(\mathbf{p})$ or always $\overline{\omega}(\mathbf{p})$ for all profiles with vector of $\Omega$-restricted peaks $\mathbf{p}$, which is not possible because $\omega(\mathbf{p})\in r_{\omega}\cap\Omega^2_{C}$. Therefore, condition $(i)$ requires that in each coalition of $W(g_{\omega(\mathbf{p})})$ there has to be at least one agent with single-dipped preferences.

\medskip

\noindent The next proposition indicates the structure of the second step of each SP rule on our domain.

\begin{proposition}
\label{second-step2}
Let $f: {\cal R} \rightarrow \Omega$ be SP and $\omega$ be its associated generalized median voter function on $r_\omega$, with $r_{\omega}$ satisfying the conditions of Proposition \ref{omega}. Then, the family of binary decision functions $\{g_{\omega(\mathbf{p})} :  {\cal R} \rightarrow \omega(\mathbf{p})\}_{\omega(\mathbf{p})\in r_{\omega}\cap\Omega^2_{C}}$ is such that for each $\omega(\mathbf{p}) \in r_{\omega}\cap\Omega^2_{C}$, $g_{\omega(\mathbf{p})}$ is a voting by collections of left-decisive sets.
\end{proposition}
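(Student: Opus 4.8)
The plan is to fix a pair $\omega(\mathbf{p})=(x,y)\in r_{\omega}\cap\Omega^{2}_{C}$, write $x=\underline{\omega}(\mathbf{p})$ and $y=\overline{\omega}(\mathbf{p})$, and exploit that $x,y$ are contiguous in $\Omega$. First I would record the two basic facts about preferences over $\{x,y\}$ on the relevant subdomain $\{R:\omega(p(R))=(x,y)\}$: every agent ranks $x$ and $y$ strictly; a single-peaked agent $i\in A$ prefers $x$ to $y$ if and only if $p(R_i)\leq^{*}x$, so $L_{(x,y)}(R)\cap A=\{i\in A:p(R_i)\leq^{*}x\}$ is pinned down by the $\Omega$-restricted peaks; and a single-dipped agent can realize either ranking (a dip at $y$ makes her prefer $x$, a dip at $x$ makes her prefer $y$). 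Using Definition \ref{voterscheme} together with the fact that $\emptyset\notin\mathcal{L}(\beta)$ for every $\beta<^{*}(x,y)$ (which holds because $(x,y)\in r_{\omega}$ is actually attained), I would show that the sets $B$ arising as $L_{(x,y)}(R)\cap A$ over this subdomain are exactly the elements of $\mathcal{D}:=\mathcal{L}((x,y))\setminus\mathcal{L}(x)$; the construction places every peak at $x$ or at $y$.

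The core step is to prove that $g_{(x,y)}(R)$ depends only on $L_{(x,y)}(R)$. For this I would introduce, for each realizable pair $(B,T)$ with $B\in\mathcal{D}$ and $T\subseteq D$, the canonical profile $R^{B,T}$ in which each $i\in B$ has peak $x$, each $i\in A\setminus B$ has peak $y$, each $j\in T$ has dip $y$, and each $j\in D\setminus T$ has dip $x$. I would then transform an arbitrary $R$ with $\omega(p(R))=(x,y)$ and $L_{(x,y)}(R)=B\cup T$ into $R^{B,T}$ one agent at a time, checking that each move leaves both $\omega$ and $f$ unchanged. For single-dipped agents this is immediate: changing $j$'s preference while preserving her $\{x,y\}$-ranking does not affect $p(R)$, hence not $\omega$, and by SP cannot change the outcome (otherwise $j$ would manipulate toward her preferred element). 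For single-peaked agents the subtle point is that moving a peak can alter $\omega$; the resolution is that moving the peak of $i\in B$ up to $x$, or of $i\in A\setminus B$ down to $y$, only changes the cumulative sets $\{k:p(R_k)\leq^{*}\alpha\}$ for $\alpha$ on the ``wrong'' side of $(x,y)$, leaving them unchanged at and before $(x,y)$; by monotonicity of $\mathcal{L}$ this keeps $\omega=(x,y)$, and SP then forces $f$ to be constant along the move. This yields a well-defined $h$ with $g_{(x,y)}(R)=h\bigl(L_{(x,y)}(R)\bigr)$.

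Next I would establish monotonicity of $h$: if $L\subseteq L'$ are both realizable and $h(L)=x$, then $h(L')=x$. Incrementing one agent at a time, all intermediate configurations stay realizable (because $\mathcal{L}(x)$ is an up-set, so any $B''$ with $B\subseteq B''\subseteq B'\in\mathcal{D}$ remains outside $\mathcal{L}(x)$ and inside $\mathcal{L}((x,y))$, hence in $\mathcal{D}$); each increment switches one agent toward $x$, and SP rules out the outcome flipping to $y$. With $h$ monotone on the poset of realizable configurations and $N$ finite, I would define $W(g_{(x,y)})$ as the set of minimal realizable configurations at which $h=x$; then $h(L)=x$ iff $C\subseteq L$ for some $C\in W$, and $W$ is minimal by construction, which is exactly voting by collections of left-decisive sets.

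Finally I would verify conditions $(i)$ and $(ii)$ of Definition \ref{winningdef} from the achievability of both preselected alternatives. Since $\omega(\mathbf{p})=(x,y)$ means $\Omega(\mathbf{p})=\{x,y\}$ for every such $\mathbf{p}$, both $x$ and $y$ are outcomes on each fibre $L\cap A=B$, $B\in\mathcal{D}$; taking $T=\emptyset$ and using monotonicity gives $h(B)=y$ for every $B\in\mathcal{D}$, so no $C\in W$ can have $C\cap D=\emptyset$, which is $(i)$. For $(ii)$, given a minimal $B^{*}\in\mathcal{D}$, achievability of $x$ on that fibre provides some $T$ with $h(B^{*}\cup T)=x$; choosing $T$ minimal with this property and using minimality of $B^{*}$ in $\mathcal{D}$ shows $B^{*}\cup T$ is a minimal winning configuration, i.e.\ an element of $W$ with $(B^{*}\cup T)\cap A=B^{*}$. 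The main obstacle throughout is the single-peaked reduction in the core step, where one must move peaks without leaving the subdomain on which $\omega=(x,y)$; everything else reduces to the standard equivalence between strategy-proofness on a binary range and monotone voting by committees.
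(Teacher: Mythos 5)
Your proposal is correct and follows essentially the same strategy as the paper: extract $W(g_{\omega(\mathbf{p})})$ as the minimal coalitions at which $\underline{\omega}(\mathbf{p})$ is chosen, use strategy-proofness (plus the monotonicity of $\mathcal{L}$ to keep $\omega$ fixed when peaks move) to show the outcome depends only on, and monotonically on, $L_{\omega(\mathbf{p})}(R)$, and then derive conditions $(i)$ and $(ii)$ from the achievability of both preselected alternatives on each fibre. The only differences are cosmetic: the paper performs the swaps as single group deviations (invoking the SP--GSP equivalence and Proposition 3), whereas you normalize to canonical profiles and argue agent by agent.
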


\subsection*{The characterization}\label{result}

Our main result establishes that the necessary conditions derived from SP in the former propositions are also sufficient. Moreover, since our domain ${\cal R}$ satisfies the condition of \emph{indirect sequential inclusion} introduced in \cite{barbera2010individual}, there is an equivalence between SP and GSP. Therefore, all characterized rules are also GSP.

\begin{theorem}
\label{theorem}
The following statements are equivalent:
\begin{itemize}
\item[(i)] $f:\mathcal{R}\rightarrow\Omega$ is SP.
\item[(ii)] $f:\mathcal{R}\rightarrow\Omega$ is GSP.
\item[(iii)] There is a generalized median voter function $\omega$ on a set $r_\omega$, with $r_{\omega}$ satisfying the conditions of Proposition \ref{omega}, and a set of voting by collections of left-decisive sets $\{g_{\omega(\mathbf{p})} :  {\cal R} \rightarrow \omega(\mathbf{p})\}_{\omega(\mathbf{p})\in r_{\omega}\cap\Omega^2_{C}}$ such that for each $R\in {\cal R}$ with $p(R)=\mathbf{p}$, $$f(R) = \left\{
	\begin{array}{ll}
	\omega({\bf p})  & \mbox{ if } \omega({\bf p}) \in \Omega \\*[5pt]
	
	g_{\omega({\bf p})}(R) & \mbox{ if } \omega({\bf p}) \in \Omega^{2}_{C}.
	\end{array}
	\right.$$
\end{itemize}
\end{theorem}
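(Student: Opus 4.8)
The plan is to prove the equivalence of the three statements via the implication chain $(ii) \Rightarrow (i) \Rightarrow (iii) \Rightarrow (ii)$. The implication $(ii) \Rightarrow (i)$ is immediate, since GSP requires non-manipulability by \emph{all} groups $S \subseteq N$, whereas SP only requires it for singletons; thus every GSP rule is automatically SP. The direction $(i) \Rightarrow (iii)$ is essentially a matter of assembling the necessary conditions already established: Proposition \ref{barbera} and Corollary \ref{structure0} give the two-step decomposition with the function $\omega$ and the binary decision functions $g_{\omega(\mathbf{p})}$; Propositions \ref{omega} and \ref{generalized} establish that $\omega$ is a generalized median voter function on a set $r_\omega$ with $\textbf{int}(\Omega)\subseteq r_\omega \subseteq \Omega\cup\Omega^2_C$; and Proposition \ref{second-step2} establishes that each $g_{\omega(\mathbf{p})}$ is a voting by collections of left-decisive sets. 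Putting these together yields exactly the representation in $(iii)$.

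The real work is the sufficiency direction $(iii) \Rightarrow (ii)$: I must show that any rule built from a generalized median voter function $\omega$ together with a family of votings by collections of left-decisive sets is in fact GSP (and hence, a fortiori, SP). First I would fix an arbitrary coalition $S\subseteq N$, an arbitrary profile $R$, and an arbitrary deviation $R'_S$, and suppose for contradiction that every agent in $S$ strictly prefers $f(R'_S,R_{-S})$ to $f(R)$. I would split $S$ into its single-peaked part $S\cap A$ and single-dipped part $S\cap D$, and analyze the effect of the deviation on the two steps separately. The deviation of $S\cap A$ can only change the outcome through the first step by altering the reported $\Omega$-restricted peaks $p(R_i)$, moving $\omega(p(R))$ along the order $<^*$; the deviation of $S\cap D$ can only change the outcome through the second step via the sets $L_{\omega(\mathbf{p})}$ of decisive coalitions. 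The key structural facts to exploit are the monotonicity conditions $(i)$ and $(ii)$ in Definition \ref{leftsystem} for $\omega$, and the fact that $g_{\omega(\mathbf{p})}$ is determined by a monotone collection of left-decisive sets.

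The core argument is a single-crossing / monotonicity contradiction. Because the agents in $A$ have single-peaked preferences over $X$ and $\omega$ is a generalized median voter function respecting $<^*$, an agent in $A$ cannot induce a first-step outcome she strictly prefers to $\omega(p(R))$ by misreporting her peak: any profitable shift of the median away from $\omega(p(R))$ would have to move it past her true peak in the $<^*$ order, which single-peakedness forbids. Simultaneously, the agents in $D$ face, for fixed preselected pair $\omega(\mathbf{p})\in\Omega^2_C$, a binary choice governed by a monotone family $W(g_{\omega(\mathbf{p})})$; by the standard characterization of voting by committees on a binary range, no group of these agents can jointly manipulate the binary decision in the direction all of them prefer. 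The subtlety, and what I expect to be the main obstacle, is the \emph{interaction} between the two steps: a joint deviation by $S\cap A$ could change which pair $\omega(\mathbf{p})$ is preselected, thereby changing the very binary problem that $S\cap D$ faces, so the two steps cannot be treated in complete isolation. To handle this I would argue that if the first-step outcome changes at all, single-peakedness of the $A$-agents forces the new first-step outcome to be $<^*$-between the old one and the peaks of the deviators, which, combined with how the second step resolves contiguous pairs, precludes a strict Pareto improvement for all members of $S$ simultaneously; the deviators in $A$ and in $D$ cannot both be made strictly better off because the move that helps one group along the $<^*$ order hurts the other. Finally, I would invoke the remark preceding the theorem that ${\cal R}$ satisfies indirect sequential inclusion in the sense of \cite{barbera2010individual}, which gives the abstract equivalence between SP and GSP on such domains and thus closes the loop $(iii)\Rightarrow(i)\Rightarrow(ii)$; alternatively the direct coalitional argument above establishes GSP outright.
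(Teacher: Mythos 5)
Your overall architecture coincides with the paper's: the implication $(i)\Rightarrow(iii)$ is indeed just the assembly of Proposition \ref{barbera}, Corollary \ref{structure0}, and Propositions \ref{omega}, \ref{generalized} and \ref{second-step2}, and the equivalence of SP and GSP is obtained in the paper exactly as in your fallback, via indirect sequential inclusion (Lemma \ref{equivalence}), so that only $(iii)\Rightarrow(i)$ --- resistance to \emph{individual} deviations --- needs to be verified directly. Your primary route, a direct coalitional argument for $(iii)\Rightarrow(ii)$, could also be made to work, but the two justifications you give for its key step are incorrect as stated, and this is where the sketch would fail if followed literally.

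First, you claim that if a deviation by $S\cap A$ changes the first-step outcome, single-peakedness forces the new outcome to lie $<^*$-between the old outcome and the deviators' peaks. That is backwards; if it were true it would describe a move the deviators \emph{like}, not rule one out. The correct statement (the one the paper proves for $|S|=1$) is an uncompromisingness property: if every deviator strictly prefers $y=f(R'_S,R_{-S})$ to $x=f(R)$ with $y>x$, then contiguity of preselected pairs forces $p(R_i)>^*\omega(p(R))$ for each $i\in S\cap A$, so no deviator belongs to $\{j\in A: p(R_j)\leq^*\beta\}$ for any $\beta\leq^*\omega(p(R))$; the misreport can therefore only enlarge these sets, and conditions $(i)$ and $(ii)$ of Definition \ref{leftsystem} then yield $\omega(p(R'_S,R_{-S}))\leq^*\omega(p(R))$ --- the outcome moves weakly \emph{away from} the deviators' peaks, contradicting $y>x$. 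Second, the assertion that deviators in $A$ and in $D$ cannot both gain ``because the move that helps one group hurts the other'' is not the operative mechanism: agents of $A$ and $D$ can perfectly well agree on which element of a preselected pair they prefer. What actually blocks a joint manipulation when $\omega(p(\cdot))$ is unchanged is the monotonicity of $W(g_{\omega(\mathbf{p})})$ together with the fact that the deviators are already ``voting'' truthfully for the alternative they want: every deviator preferring $\overline{\omega}(\mathbf{p})$ lies outside $L_{\omega(\mathbf{p})}(R)$, so the deviation can only enlarge the set of agents counted for $\underline{\omega}(\mathbf{p})$ and cannot dislodge it; symmetrically, every deviator preferring $\underline{\omega}(\mathbf{p})$ already lies inside $L_{\omega(\mathbf{p})}(R)$, so the deviation can only shrink that set and cannot install $\underline{\omega}(\mathbf{p})$. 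With these two corrections your plan closes (and, once corrected, subsumes the paper's individual-deviation argument); as written, the central sufficiency step does not go through.
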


\medskip 

\noindent We discuss several important aspects of Theorem \ref{theorem}. First and foremost, Theorem \ref{theorem} generalizes \cite{moulin1980strategy} and \cite{barbera1994characterization} (who consider the case when all agents have single-peaked preferences) and \cite{manjunath2014efficient} (who assumes that all agents have single-dipped preferences).
Their results emerge from our theorem in the following way.
\begin{itemize}
\item Proposition \ref{second-step2} establishes that if $\omega({\bf p})  \in \Omega_C^2$, then $g_{\omega(\bf p)}$ is a voting by collection of left-decisive sets. 
By part ($i$) of Definition 3, the winning coalitions of $W(g_{\omega(\bf p)})$ must contain some agent with single-dipped preferences.
So, if all agents have single-peaked preferences, there is no voting by collections of left-decisive sets that can be applied in the second step.
Therefore, the range of $\omega$ is $\Omega$ and Proposition \ref{generalized} dictates that $f$ is a generalized median voter rule on $\Omega$ \cite[see][]{moulin1980strategy, barbera1994characterization}.

\item If all agents have single-dipped preferences, the $\Omega$-restricted vector of peaks ${\bf p}$ is not determined ---{\it i.e}, ${\bf p}={\bf \emptyset}$--- 
and, by footnote 2, $\omega({\bf \emptyset}) \in \Omega \cup \Omega^2$. 
Hence, if $\omega({\bf \emptyset}) \in \Omega$, $f$ is a constant rule. And if $\omega({\bf \emptyset})\in \Omega^2$, $f$ is a voting by collections of left-decisive sets between the two preselected alternatives \cite[see][]{manjunath2014efficient}.
\end{itemize}

\noindent Second, it is worth highlighting that not much information about the agents' preferences is needed in order to determine the outcome of the characterized rules: for each preference profile $R \in \mathcal{R}$, any SP rule $f$ only requires the vectors of $\Omega$-restricted peaks and dips, $p(R)$ and $d(R)$, as input.
This is because $(i)$ $\omega$ only takes the vector of $\Omega$-restricted peaks $p(R)$ into account, and $(ii)$ $g_{\omega({p(R)})}$ only requires the agents' preferences between the preselected alternatives $\underline{\omega}({p(R)})$ and $\overline{\omega}({p(R)})$. Since $\Omega$ does not contain any alternative strictly between $\underline{\omega}({p(R)})$ and $\overline{\omega}({p(R)})$, the vectors of $\Omega$-restricted peaks and dips, $p(R)$ and $d(R)$, provide the necessary information. 

\medskip

\noindent Third, and perhaps surprisingly, the characterized rules show a strong asymmetry between the agents with single-peaked and those with single-dipped preferences. 
In fact, we have seen that only the agents with single-peaked preferences matter in the first step.
The preferences of the agents with single-dipped preferences are only considered in the second step when a decision between two contiguous alternatives has to be taken. This implies that the agents with single-dipped preferences are unable to provoke huge outcome swings on their own. Moreover, if the range of the rule is an interval, the agents with single-dipped preferences are totally irrelevant to determine the outcome: since there are no contiguous alternatives in the range of the rule, it never happens that two alternatives pass to the second step.

\medskip

\noindent Fourth, the social planner can design any of the characterized rules by applying the following procedure. 
In the beginning, the possible outcomes of the rule are determined; \emph{i.e.}, the social planner chooses a range $\Omega \subseteq X$ for the rule. Next, the social planner determines $r_{\omega}$ by deciding both which pairs of $\Omega^{2}_{C}$ are included in $r_{\omega}$ and whether or not $\min \Omega$ and/or $\max \Omega$, if they exist, belong to $r_{\omega}$ with the restrictions imposed in Proposition \ref{omega}. %\footnote{{\color{red} Observe that the decision about including $\min \Omega$ (respectively, $\max \Omega$) is only necessary when there is no $x > \min \Omega$ with $[\min \Omega, x] \subseteq \Omega$ (respectively, $x < \max \Omega$ with $[x, \max \Omega] \subseteq \Omega$). Otherwise, $\min \Omega$ (respectively, $\max \Omega$) should belong to $r_{\omega}$ because this is the unique possibility for this alternative to belong to $\Omega$.}}
Afterwards, a left coalition system ${\cal L}$ on $r_{\omega}$ is defined in such a way that $\omega$ is a generalized median voter function associated with it. Finally, since the outcome of $\omega$ may belong to $\Omega^{2}_{C}$, the social planner specifies for each $\omega(\mathbf{p})\in r_{\omega}\cap\Omega^2_{C}$, a set of minimal left-decisive sets $W(g_{\omega(\mathbf{p})})$. This procedure can be used to define many rules beyond those characterized in \cite{moulin1980strategy}, \cite{barbera1994characterization} and \cite{manjunath2014efficient}. One example is the rule that we have constructed step by step in Example \ref{ex1} throughout this section. That rule takes into account the preferences of the agents of both $A$ and $D$.

\subsubsection*{Strategy-proofness and Pareto efficiency}\label{sec3}

Since there are many SP rules on our setting, we impose PE as an additional axiom in order to tighten the family characterized in Theorem \ref{theorem}. The results are summarized in the following proposition.

\begin{proposition}\label{PE}
The following statements hold:
\begin{itemize}
\item[$(i)$] All SP rules $f$ such that $\Omega = X$ are PE.
\item[$(ii)$] All SP rules $f$ such that $\Omega \not\in \{X, \{\min X, \max X\}\}$ are not PE.
\item[$(iii)$] All SP rules $f$ such that $\Omega =\{\min X, \max X\}\neq X$ are PE if and only if $A=\emptyset$ or $r_\omega = \{(\min X, \max X)\}$.
\end{itemize}
\end{proposition}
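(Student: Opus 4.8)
The plan is to prove each of the three parts of Proposition~\ref{PE} separately, exploiting the structural characterization from Theorem~\ref{theorem}. The central observation tying everything together is that Pareto efficiency fails precisely when there is an alternative in $X\setminus\Omega$ that every agent could unanimously prefer to the selected outcome, and that the agents with single-peaked preferences can always ``agree'' on any interior alternative while the single-dipped agents are drawn toward the extremes of $X$.

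For part $(i)$, suppose $\Omega = X$ and, for contradiction, that $f$ is not PE. Then there is a profile $R$ and an $x \in X$ with $x\,P_i\,f(R)$ for every $i\in N$. First I would observe that since $x\in X=\Omega$, we may compare $x$ with $f(R)$ under each agent's preference directly. The key step is to show that no single alternative can strictly dominate $f(R)$ for all agents simultaneously: an agent of $A$ with peak at $f(R)$ never strictly prefers another alternative, and more carefully, for $x$ to beat $f(R)$ for every single-peaked agent, all peaks must lie strictly on one side of $f(R)$ (say to one side with $x$ between the peaks and beyond $f(R)$), while for every single-dipped agent $f(R)$ must be closer to the agent's dip than $x$. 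I would then use the generalized-median-voter structure of $\omega$ (Proposition~\ref{generalized}) to derive that $f(R)$ could not have been selected if all peaks lie strictly on the far side of $x$, yielding a contradiction. This is the cleanest of the three and mostly follows from the two-step decomposition.

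For part $(ii)$, assume $\Omega\notin\{X,\{\min X,\max X\}\}$; I must exhibit a profile at which PE fails. There are two cases depending on why $\Omega$ differs from these two sets. The plan is to find an alternative $z\in X\setminus\Omega$ that is ``surrounded'' or ``dominated-by-consensus.'' If $\Omega$ omits some interior point or some intermediate alternative, I would construct a profile in which every agent (both types) strictly prefers that missing $z\in X\setminus\Omega$ to the forced outcome $f(R)\in\Omega$: single-peaked agents place their peaks at $z$, and single-dipped agents have dips far enough away that $z$ beats every alternative of $\Omega$ attainable at that profile. The main obstacle here is bookkeeping over the possible shapes of $\Omega$ (whether it fails to contain $\min X$, $\max X$, or an interior alternative of $X$), since the dominating alternative $z$ and the witnessing profile must be chosen differently in each subcase; I expect this case analysis to be the technically heaviest part of the whole proposition.

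For part $(iii)$, assume $\Omega=\{\min X,\max X\}\neq X$, so $\Omega^2_C=\{(\min X,\max X)\}$ and any nonconstant rule reduces to a single binary decision. I would prove the biconditional in two directions. For sufficiency, if $A=\emptyset$ the rule is a voting by collections of left-decisive sets (à la \cite{manjunath2014efficient}) choosing between the two extreme points, and since single-dipped agents' best alternatives among $\{\min X,\max X\}$ are exactly these extremes, no interior $z$ can beat the outcome for all of them simultaneously; if instead $r_\omega=\{(\min X,\max X)\}$, then $\omega$ always preselects the pair and the second step again guarantees the chosen extreme is unanimously undominated. For necessity, I would argue the contrapositive: if $A\neq\emptyset$ and $r_\omega\neq\{(\min X,\max X)\}$, then $r_\omega$ must contain a singleton (namely $\min X$ or $\max X$), so there is a peak profile $\mathbf{p}$ for which $\omega(\mathbf{p})$ is a single extreme point forced as the outcome regardless of the single-dipped agents; I then build a profile where every agent strictly prefers some interior $z\in X\setminus\Omega$ to that forced extreme, violating PE. The delicate step is verifying that such a forced singleton outcome genuinely exists whenever $r_\omega\neq\{(\min X,\max X)\}$, which I expect to follow from condition $(iii)$ of Definition~\ref{leftsystem} together with the requirement $\mathbf{int}(\Omega)\subseteq r_\omega$ (here $\mathbf{int}(\Omega)=\emptyset$, so the constraint is only on the endpoints).
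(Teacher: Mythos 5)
Your overall architecture (three parts, case analysis on the shape of $\Omega$, witness profiles for failures of PE, the structure from Theorem~\ref{theorem} for the positive directions) matches the paper, and your part $(ii)$ and the necessity direction of part $(iii)$ are essentially the paper's own constructions. However, two steps have genuine gaps. For part $(i)$ you replace a one-line argument with a structural one that does not close. The paper's proof is: if $x\in X=\Omega$ Pareto dominates $f(R)$, take $R'$ with $f(R')=x$; then the grand coalition $N$ manipulates $f$ at $R$ via $R'$, contradicting GSP, which is equivalent to SP by Lemma~\ref{equivalence}. Your route instead tries to show no $x$ can dominate, but the intermediate claim is wrong as stated: from $x\,P_i\,f(R)$ for a single-peaked agent one only gets that her peak lies strictly on the $x$-side of $f(R)$, \emph{not} on the far side of $x$ --- a peak strictly between $x$ and $f(R)$ puts the two alternatives on opposite sides of the peak, where the ranking is unconstrained. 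Moreover your sketch says nothing about the case $A=\emptyset$, nor about profiles at which $\omega(p(R))$ is a pair and the outcome is decided in the second step by the single-dipped agents; the generalized-median-voter structure of $\omega$ alone cannot rule those out.

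The second gap is the sufficiency direction of part $(iii)$ when $A\neq\emptyset$ and $r_\omega=\{(\min X,\max X)\}$: the sentence ``the second step again guarantees the chosen extreme is unanimously undominated'' is precisely what must be proved, and the key mechanism is absent from your plan. Since $\max\Omega=\max X\notin r_\omega$, condition $(iii)$ of Definition~\ref{leftsystem} forces $\emptyset\in{\cal L}((\min X,\max X))$, and condition $(ii)$ of Definition~\ref{winningdef} then yields a left-decisive set $C\in W(g_{(\min X,\max X)})$ with $C\cap A=\emptyset$, i.e., a nonempty $C\subseteq D$. Whenever $\max X$ is selected, $C\not\subseteq L_{(\min X,\max X)}(R)$, so some agent of $C\subseteq D$ prefers $\max X$ to $\min X$ and, being single-dipped, has $\max X$ as her global optimum in $X$, which blocks Pareto domination; whenever $\min X$ is selected, condition $(i)$ of Definition~\ref{winningdef} supplies an agent of $D$ inside the realized left-decisive set whose global optimum is $\min X$. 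Without isolating this all-$D$ decisive coalition the claim does not follow: if every left-decisive set required members of $A$, the rule could select $\max X$ at a profile where every dip sits at $\max X$ and every peak is interior, and an interior alternative would then Pareto dominate. (Your $A=\emptyset$ case should likewise be phrased as ``the chosen extreme is the global best in $X$ for some agent,'' which follows from $\emptyset\notin W(g)\neq\emptyset$, rather than from the observation that dipped agents' favorites among $\{\min X,\max X\}$ are the extremes.)
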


\noindent It can be seen that PE imposes a strong restriction on the range of $f$: only a range equal to the set of feasible alternatives or a range equal to the extreme alternatives of that set is compatible with SP and PE. In particular, any SP rule whose range is equal to the set of feasible alternatives is PE, while only a subset of those SP rules with range equal to the extremes of the set of feasible alternatives is PE. Note that if the set of feasible alternatives does not have a minimum or maximum, then PE can be only achieved when the range of $f$ is the entire set of feasible alternatives. 

\medskip

\noindent The intuition behind Proposition \ref{PE} is as follows. Note first that if a SP rule $f$ has a range equal to $X$ but is not PE, then there is an alternative in $X$ that is unanimously preferred to the one chosen by $f$. Since that alternative is in the range of $f$, the rule is not GSP and, by Theorem \ref{theorem}, nor is SP. Observe that this argument only explains part $(i)$ in Proposition \ref{PE}. Now we focus on the remaining cases.

\medskip

\noindent First, it is easy to check that if $\min X$ and $\max X$ exist, then they must be in the range of $f$ to guarantee PE. To see why, assume, for instance, that $\min X$ exists, but it is not in the range of $f$ and consider a preference profile in which all the peaks are in $\min X$ and all the dips are in $\max X$. Then, $\min X$ Pareto dominates any alternative in the range of $f$, so $f$ is not PE.

\medskip

\noindent Second, if $\min X$ and $\max X$ exist, but they are not the only elements in the range of $f$, then there is an interior alternative in $\Omega$. Moreover, since the range of $f$ does not coincide with $X$, there is another interior alternative in $X\setminus \Omega$. Let $x\in X\setminus \Omega$ and $y\in \Omega$, with $y$ being the closest alternative to $x$. Consider a preference profile where all the peaks are located in $x$ and all the $\Omega$-restricted peaks and dips are in $y$. It can be seen that any SP rule chooses $y$ in that profile by Theorem \ref{theorem}. However, $x$ Pareto dominates $y$, which contradicts PE. Therefore, if a SP rule $f$, with $\Omega \neq X$, is PE, then $\Omega=\{\min X,\max X\}$.

\medskip

\noindent Finally, if $\min X$ and $\max X$ exist and the range of $f$ is equal to $\{\min X,\max X\}$, then more restrictions are needed to guarantee PE. These restrictions are provided in part $(iii)$ of Proposition \ref{PE}. Observe that these conditions require the set $D$ to be non-empty. Moreover, by the role of the agents of $D$ in $W(g_{(\min X,\max X)})$ (see condition ($i$) in Definition 3), $r_{\omega} = \{(\min X, \max X)\}$ implies that if $f$ chooses $\min X$ (respectively, $\max X$) it is because there is an agent of $D$ that prefers $\min X$ (respectively, $\max X$) to any other alternative of $X$, which guarantees PE.

\section{Concluding remarks}\label{sec4}

We analyze the problem of locating a public facility that is considered by some agents in the society a good and by others a bad. Since the Gibbard-Satterthwaite impossibility applies if the set of admissible preferences of each agent includes all single-peaked and all single-dipped preferences, preferences have to be further restricted. We propose a new domain according to which the type of preference of each agent (single-peaked or single-dipped) is known but it is private information as to where each agent's peak/dip is located and how each agent ranks the rest of the alternatives. This domain fits well with situations in which, even though the location for each agent is publicly known, that location may not necessarily coincide with her peak/dip. For instance, if the public facility is a nursery, parents with children may consider this facility desirable, but it might be undesirable for others without children or for those who prefer to live in a quiet neighborhood. Moreover, despite the fact that home addresses are registered, people spend a considerable amount of time at work and some parents may prefer to have a nursery close to their workplace rather than to their home. Note that such situations cannot be accommodated in the domain of \cite{alcalde2018strategy}, who assume that the peak/dip of each agent coincides with the publicly known location of the agent. In contrast, to allow each agent total flexibility as to the location of her peak/dip, the social planner needs to have full information about the type of preference of each agent, while in the domain of \cite{alcalde2018strategy} that information is private. 

\medskip

\noindent We characterize all strategy-proof rules on this new domain. It is relevant to point out that the characterized family generalizes existing results in the literature; in particular, the results in \cite{moulin1980strategy} and  \cite{barbera1994characterization} for only single-peaked and the result in \cite{manjunath2014efficient} for only single-dipped preferences. We also find out that the characterized family can be easily implemented in two steps and with few information. Finally, we establish that all strategy-proof rules are also group strategy-proof and show that Pareto efficiency implies a strong restriction on the range of the strategy-proof rules.

\medskip

\noindent The family of strategy-proof rules characterized here shows some similarities with and differences from the family of strategy-proof rules characterized in \cite{alcalde2018strategy}. The strategy-proof rules on the domain of \cite{alcalde2018strategy} also follow a two-step procedure. In their model, the location of the peak/dip of each agent is known, so the first step of their rules asks which agents have single-peaked preferences. As a result of the first step, both the type of preference of each agent and the location of the peaks and dips are known. In the domain analyzed here, the type of preference of each agent is public information and in the first step we ask agents with single-peaked preferences about their peaks. As a result, the type of preference of each agent and the location of all peaks are known. Note that even though the social planner in our domain has less information after the first step  (since she does not know the location of the dips), at most two alternatives are preselected in both settings. If two alternatives are preselected, the second step of \cite{alcalde2018strategy} asks all agents to order the preselected alternatives. Unlike them, we only need to ask agents with single-dipped preferences about their dips. Thus, less information is required for the strategy-proof rules on our domain to be implemented: we only need to know the location of the peaks and dips, while they need to know the type of preferences of each agent and the ordinal preference between any pair of preselected alternatives that may appear as the outcome of the first step. Moreover, we provide a closed-form characterization of the strategy-proof rules on our domain, while \cite{alcalde2018strategy} do not. 

\medskip

\noindent We next comment on the possibility to extend our characterization result to some relevant subdomains, that is, if the preference domains of the agents are further restricted by assuming that the set of admissible preferences for the agents of $A$ are not all single-peaked preferences and/or  the set of admissible preferences for the agents of $D$ are not all single-dipped preferences. Example \ref{ex2} below shows that even if only very minor additional restrictions are imposed, new strategy-proof rules can emerge. 

\begin{example}
\label{ex2}
Let $N = \{i, j\}$ and $X = \{1, 3, 4\}$. Assume that the preference domain of agent $i$ consists of all single-peaked preferences over $X$ and that the preference domain of agent $j$ consists of all single-dipped preferences over $X$ but preference $4 P_j 1 P_j 3$. Consider a rule $f$, with $\Omega = X$, such that for each $R \in {\cal R}$, 
%$$f(R) = \left\{
%	\begin{array}{ll}
%	1  & \mbox{ if } [1 P_j 3 \mbox{ and } 1 P_j 4] \\*[5pt]
%	3  & \mbox{ if } 3 P_i 4 \mbox{ and } \neg [1 P_j 3 \mbox{ and } 1 P_j 4] \\*[5pt]
%	4  & \mbox{ if } 4 P_i 3 \mbox{ and } \neg [1 P_j 3 \mbox{ and } 1 P_j 4]
%	\end{array}
%	\right.$$
%{\color{blue} DECIDIR COMO ESCRIBIR ESTA FUNCION YA QUE $\neg [1 P_j 3 \mbox{ and } 1 P_j %4]$ ES EXACTAMENTE UNA UNICA PREFERENCIA. PROPUESTA:
$$f(R) = \left\{
	\begin{array}{ll}
	
	3  & \mbox{ if } 3 P_i 4 \mbox{ and } 4 P_j 3 P_j 1 \\*[5pt]
	4  & \mbox{ if } 4 P_i 3 \mbox{ and } 4 P_j 3 P_j 1\\*[5pt]
 1  & \mbox{ otherwise. } 
	\end{array}
	\right.$$
 
\noindent This rule is SP, but it does not belong to the family characterized in Theorem \ref{theorem}. To see this, note that $\omega(p(R))$ of this rule takes the following values: $\omega(1) = \omega(3) = (1, 3)$ and $\omega(4) = (1, 4)$. Since pair $(1,4)$ does not contain contiguous alternatives, condition $(i)$ of Proposition \ref{omega} is violated.
\end{example}

\noindent In Example \ref{ex2} preferences over alternatives for agent $j$ are determined by the distance to the dip, where each feasible alternative can potentially be the dip. We can construct a similar subdomain for agent $i$ that includes all single-peaked preferences with peak in one of the feasible alternatives and such that preferences are defined by the distance to the peak. This subdomain is obtained by eliminating the preference $3 \, P_i \, 1 \, P_i \, 4$ from the preference domain of agent $i$ in Example \ref{ex2}. Note that the rule constructed in Example \ref{ex2} does not belong to the characterized family in this subdomain either (again, $\omega(4)$ is the non-contiguous pair $(1,4)$), but it is strategy-proof. Consequently, it follows that if preferences are determined by the distance to the peak/dip, our characterization result does not necessarily hold if peaks and dips are assumed to be in feasible locations.

\medskip

%\noindent On a related note, \cite{thomson2022should} studies our two-agent subdomain, one agent $i$ with single-peaked and the other agent $j$ with single-dipped preferences, that is obtained by assuming that the location of the peak $p_i$ and the location of the dip $d_j$ are both public information (fixed for all preference profiles). If one assumed additionally that both $p_i$ and $d_j$ belong to the range $\Omega$ of the rule, then, by our characterization result, the outcome of any SP rule is unique and agent $i$ is the dictator. Otherwise, if either $p_i$ or $d_j$ does not belong to $\Omega$, there is potentially room for non-dictatorial rules, but, as \cite{thomson2022should} shows, one arrives necessarily at the dictatorial rules if the attention is restricted to PE rules. Example 3 below highlights that the assumption that $p_i$ and $d_j$ are public information is crucial for the impossibility result in \cite{thomson2022should} because in our domain there are SP and ND rules with more two alternatives in the range even if $|A| = |D| = 1$.

\noindent Another interesting subdomain is due to \cite{thomson2022should} who studies a domain for the two-agent case, one agent $i$ with single-peaked and another agent $j$ with single-dipped preferences. This domain, unlike ours, assumes that the location of the peak of $i$ and the location of the dip of $j$ coincide in a publicly known location. \cite{thomson2022should} establishes then that under SP and PE, all rules are dictatorial. However, Example \ref{ex3} below highlights that the assumption about the location of the peak and dip is crucial because in our domain there are non-dictatorial SP and PE rules with more than two alternatives in the range even if $|A| = |D| = 1$.

\begin{example}\label{ex3}
Let $X= \{1, 2, 3\}$, $A = \{i\}$, and $D = \{j\}$. Consider the rule $f$, with range $\Omega = X$, such that for each $R \in {\cal R}$:
$$f(R) = \left\{
	\begin{array}{ll}
	p(R_i)  & \mbox{ if } p(R_i) \in \{1, 2\} \\*[5pt]
	
	2 & \mbox{ if } p(R_i) = d(R_j) = 3 \\*[5pt]

 3 & \mbox{ otherwise. }
	\end{array}
	\right.$$

\medskip

\noindent This rule is non-dictatorial, SP, PE and has three alternatives in the range. The information the social planner has about the peaks and dips in the domain of  \cite{thomson2022should} prevents the construction of such rule therein.   
\end{example}

\medskip

\noindent Considerations for further research include restrictions of the family characterized by imposing additional axioms such as anonymity or extending the model by allowing for indifferences. Note that it is not possible to apply the standard definition of anonymity because the set of admissible preferences differs from one agent to another. However, we could define a new property, ``type-anonymity", which would require that the outcome of the rule is independent of permutations of the agents with single-peaked preferences, and of permutations of the agents with single-dipped preferences. The introduction of this property would restrict the characterized family in Theorem \ref{theorem} to a subfamily that would include, among others, the rule introduced in Example \ref{ex1}. Moreover, note that we assume throughout the paper that preferences are linear orders. Allowing for indifferences would require that the preference domains include single-plateau \citep{moulin1984generalized, berga1998strategy} and single-basined \citep{bossert2014single} preferences, respectively.

\section*{Appendix} 

We introduce two preliminary results. 
First, since our domain ${\cal R}$ satisfies the condition of \emph{indirect sequential inclusion} introduced in \cite{barbera2010individual}, it follows from their Theorem 2 that on our domain, all SP rules are also GSP.\footnote{Example ($viii$) of Section 4.5 in \cite{barbera2010individual} mentions that our domain satisfies indirect sequential inclusion. The formal proof is available upon request.}

\begin{lemma}
\label{equivalence}
The social choice rule $f$ is SP if and only if it is GSP.
\end{lemma}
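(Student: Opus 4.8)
The plan is to prove the two implications separately, with essentially all of the work concentrated in one direction. The implication GSP $\Rightarrow$ SP is immediate from the definitions: a singleton $\{i\} \subseteq N$ is itself a coalition, so if $f$ cannot be manipulated by any group $S \subseteq N$, then in particular it cannot be manipulated by any group of size one. Hence every GSP rule is SP, and no domain restriction is needed for this direction.

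The substance lies in the converse, SP $\Rightarrow$ GSP. Here I would not argue directly from the structure of single-peaked and single-dipped preferences, but instead invoke the general equivalence result of \cite{barbera2010individual}. Their Theorem 2 establishes that on any product domain satisfying the condition of \emph{indirect sequential inclusion}, every strategy-proof rule is group strategy-proof. The proof therefore reduces to a single verification: that our domain ${\cal R} = \times_{i \in N} {\cal R}_i$, in which ${\cal R}_i$ is the full single-peaked domain for each $i \in A$ and the full single-dipped domain for each $i \in D$, satisfies indirect sequential inclusion. Once this is in hand, applying Theorem 2 closes the argument.

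This verification is the main obstacle. Indirect sequential inclusion is a combinatorial condition on how the individual domains ${\cal R}_i$ interconnect within the product: informally, it requires that whenever an agent's ranking of two alternatives is reversed between two admissible preferences, this reversal can be realized through a chain of admissible preferences whose successive changes are nested in a prescribed way, and that such chains can be assembled compatibly across agents and profiles. I would establish the condition by checking it separately on the single-peaked domain (agents in $A$) and on the single-dipped domain (agents in $D$), exploiting that both families are tightly structured by the linear order on $X \subseteq \mathbb{R}$ together with the location of the peak or dip. Because \cite{barbera2010individual} explicitly list a mixed single-peaked/single-dipped domain of exactly this form among their examples (Example $(viii)$ of their Section 4.5), the verification is routine though notationally heavy, and I would present it—or cite it—rather than reproving their Theorem 2 from scratch.

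Combining the two directions then yields the stated equivalence. The one point requiring care is that indirect sequential inclusion is a joint condition on the product domain ${\cal R}$, not merely a coordinate-wise property: the chains constructed for agents in $A$ and for agents in $D$ must be woven into a single admissible sequence of profiles connecting any profitable group deviation to a profitable individual deviation. I expect this bookkeeping, rather than any conceptual difficulty, to be where the real effort of the verification concentrates.
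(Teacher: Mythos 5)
Your proposal is correct and follows exactly the same route as the paper: the trivial direction from singleton coalitions, and the substantive direction by invoking Theorem 2 of \cite{barbera2010individual} together with the fact (their Example ($viii$) of Section 4.5) that this mixed single-peaked/single-dipped domain satisfies indirect sequential inclusion. The paper likewise does not print the verification of indirect sequential inclusion, relegating it to a footnote stating that the formal proof is available upon request.
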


%\medskip

\noindent Next, we show that $f$ is independent of the preferences over alternatives that are not in the range of $f$.
To do that, given profile $R \in {\cal R}$ and a set $T \subset X $, let $R|_T$ be the restriction of $R$ to the set of alternatives $T$. 

\begin{lemma}
\label{independencia}
Let $f:\mathcal{R}\rightarrow\Omega$ be SP. 
Then, for each $R,R'\in{\cal R}$ such that $R|_{\Omega}=R'|_{\Omega}$, $f(R)=f(R')$.
\end{lemma}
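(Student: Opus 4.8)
The plan is to prove Lemma \ref{independencia} by contradiction, exploiting strategy-proofness to show that changing preferences only over out-of-range alternatives cannot alter the outcome. First I would reduce the statement to a single-agent deviation: since $R$ and $R'$ agree on $\Omega$ (i.e. $R|_\Omega = R'|_\Omega$) but may differ for many agents simultaneously, I would build a finite chain of profiles $R = R^0, R^1, \ldots, R^n = R'$ where $R^k$ is obtained from $R^{k-1}$ by switching agent $k$'s preference from $R_k$ to $R'_k$. Because $R_k|_\Omega = R'_k|_\Omega$ for every agent $k$, each consecutive pair differs in exactly one agent's preference, and that agent's ranking of the alternatives in $\Omega$ is unchanged. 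It then suffices to show $f(R^{k-1}) = f(R^k)$ for each $k$, and transitivity of equality finishes the argument.

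For the single-agent step, fix an agent $i$ and two preferences $R_i, R_i' \in {\cal R}_i$ with $R_i|_\Omega = R_i'|_\Omega$, and write $x = f(R_i, R_{-i})$ and $y = f(R_i', R_{-i})$. Since $f$ maps into $\Omega$, both $x, y \in \Omega$. Strategy-proofness applied to agent $i$ at the profile $(R_i, R_{-i})$, considering the deviation to $R_i'$, gives $x \, R_i \, y$ (otherwise $i$ would manipulate by reporting $R_i'$). Symmetrically, strategy-proofness at $(R_i', R_{-i})$ with the deviation to $R_i$ gives $y \, R_i' \, x$. Now the key observation: because $x, y \in \Omega$ and $R_i$ agrees with $R_i'$ on $\Omega$, the relations $x \, R_i \, y$ and $y \, R_i' \, x$ are statements about the same two in-range alternatives under the same restricted ordering. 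Hence $x \, R_i \, y$ is equivalent to $x \, R_i' \, y$, and combined with $y \, R_i' \, x$ and antisymmetry of $R_i'$ we conclude $x = y$, i.e. $f(R_i, R_{-i}) = f(R_i', R_{-i})$.

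The main obstacle, though it is a mild one, is making rigorous the claim that the two strategy-proofness inequalities can be combined despite living in ostensibly different preference relations $R_i$ and $R_i'$. This is precisely where the hypothesis $R_i|_\Omega = R_i'|_\Omega$ does the work: since both outcomes lie in $\Omega$, only the common restriction matters, so the two inequalities contradict each other unless $x = y$. I would state this step carefully, perhaps noting explicitly that antisymmetry (the preferences are linear orders, as assumed in the model) rules out $x \neq y$ once we have both $x \, R_i' \, y$ and $y \, R_i' \, x$. Assembling the single-agent equalities along the chain then yields $f(R) = f(R')$, completing the proof.
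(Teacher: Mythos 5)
Your proposal is correct and follows essentially the same route as the paper: a one-agent-at-a-time chain from $R$ to $R'$, with each step settled by applying strategy-proofness in both directions and using $R_i|_\Omega = R'_i|_\Omega$ (plus antisymmetry) to force the two outcomes to coincide. The only difference is presentational—the paper argues by contradiction over the whole chain, locating the step where the outcome changes, while you prove each step directly—but the underlying manipulation argument is identical.
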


\begin{proof}
Let $R,R'\in{\cal R}$ be such that $R|_{\Omega}=R'|_{\Omega}$. Suppose by contradiction that $f(R) \neq f(R')$. Starting at $R$, construct a sequence of profiles in which the preferences of all agents $i\in N$ are changed one by one from $R_i$ to $R'_i$ such that the sequence ends at $R'$. 
Since $f(R) \neq f(R')$, the outcome of the function must change along this sequence. 
Let $S \subset N$ be the set of agents that have changed preferences in the sequence the last time the rule selects $f(R)$. 
That is, $f(R'_{S}, R_{-S})=f(R)$. 
Let $i\in N\setminus S$ be the next agent changing preferences in the sequence. 
Then, by construction, $f(R'_{S\cup \{i\}}, R_{-(S\cup \{i\})}) \neq f(R'_{S}, R_{-S})$. 
If $f(R'_{S\cup \{i\}}, R_{-(S\cup \{i\})}) \, P_i \, f(R'_{S}, R_{-S})$, agent $i$ manipulates $f$ at $(R'_{S}, R_{-S})$ via $R'_i$. 
If, however, $f(R'_{S}, R_{-S}) \, P_i \, f(R'_{S\cup \{i\}}, R_{-(S\cup \{i\})})$, it follows from $R|_{\Omega}=R'|_{\Omega}$ that $f(R'_{S}, R_{-S}) \, P'_i \, f(R'_{S\cup \{i\}},$ $ R_{-(S\cup \{i\})})$. 
Agent $i$ then manipulates $f$ at $(R'_{S\cup \{i\}}, R_{-(S\cup \{i\})})$ via $R_i$.
\end{proof}

\noindent Given Lemma \ref{independencia}, we assume from now on that for each $R \in {\cal R}$, the peaks and dips coincide with the $\Omega$-restricted peaks and dips: $\rho(R) = p(R)$ and $\delta(R) = d(R)$.

\subsection*{Proof of Proposition \ref{barbera}}

Throughout the proof we will need the following notation. For each $i \in N$ and each $R_i \in {\cal R}_i$, we use the notation $o(R_i)$ to refer to $p(R_i)$ when $i \in A$ and to refer to $d(R_i)$ when $i \in D$. Similarly, $o(R)$ contains the vector of $\Omega$-restricted peaks and dips at profile $R$: $o(R) = (p(R), d(R))$. Then, $\Omega(o(R))$ contains the alternatives of $\Omega$ that appear as the outcome of $f$ for profiles such that the vector of $\Omega$-restricted peaks and dips is $o(R)$: $\Omega(o(R))=\{x \in \Omega : \exists R' \in {\cal R} \mbox{ such that } o(R') = o(R) \mbox{ and } f(R') = x\}$. 

\medskip

\noindent Let $R \in {\cal R}$. The proof is divided into five steps.

\bigskip

\noindent \textit{Step 1: We prove that $|\Omega(o(R))| \leq 2$.}

\medskip

\noindent \cite{alcalde2018strategy} analyzes the SP rules on a model in which the location of each agent is public information and the peak or dip of an agent's preference is at her location, but the social planner does not know if an agent has single-peaked or single-dipped preferences. Proposition 1 in \cite{alcalde2018strategy} shows that the range for the subdomain when all agents have declared their type of preference (single-peaked or single-dipped) contains at most two alternatives. Observe that in that subdomain, the types of preferences and the locations of all peaks and dips are known. In our model, types of preferences are publicly known. So, the subdomain that arises in our model once information about the locations of the peaks and dips, \emph{i.e.}, $(p(R), d(R))$, becomes available, is exactly the same as the one analyzed in Proposition 1 of \cite{alcalde2018strategy}. Hence, we must have that $|\Omega(o(R))| \leq 2$.

\bigskip

\noindent \textit{Step 2: We prove that if $|\Omega(o(R))|=2$, then for each $i \in D$ and each $R'_i \in {\cal R}_i$, $\Omega(o(R'_i, R_{-i})) = \Omega(o(R))$.}

\medskip

\noindent Suppose that $|\Omega(o(R))|=2$ and define $\min \Omega(o(R)) \equiv l < r \equiv \max\Omega(o(R))$. Given $R \in {\cal R}$, $N(R) = \{i \in N \, : \, o(R_i) \in (l, r)\}$ refers to the set of individuals who have their $\Omega$-restricted peaks/dips between $l$ and $r$ at profile $R$. The proof of Step 2 is based on intermediary results that we establish first. Lemma \ref{internal} shows that only the agents in $N(R)$ can affect the outcome of $f$.

\begin{lemma}
\label{internal}
Let $f: {\cal R} \rightarrow \Omega$ be SP. Then, for each $R, R' \in {\cal R}$ such that $o(R) = o(R')$, if $R_{N(R)} = R'_{N(R)}$, then $f(R) = f(R')$.
\end{lemma}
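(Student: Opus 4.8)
The plan is to prove Lemma~\ref{internal} via a sequential-deviation argument, showing that changing the preference of any single agent outside $N(R)$ cannot alter the outcome, and then iterating. Fix $R, R' \in {\cal R}$ with $o(R) = o(R')$ and $R_{N(R)} = R'_{N(R)}$. Since $R$ and $R'$ agree on $N(R)$ and share the same vector of $\Omega$-restricted peaks/dips, they can differ only in how agents outside $N(R)$ rank the alternatives \emph{other than their own restricted peak/dip}. Recall that by Step~1, $\Omega(o(R)) = \{l, r\}$ with $l < r$, so for each profile with this vector of restricted peaks/dips the outcome lies in $\{l, r\}$. The key observation is that every agent $i \notin N(R)$ has $o(R_i) \notin (l, r)$, which means $i$ has a strict, \emph{fixed} preference between $l$ and $r$ that is already pinned down by $o(R_i)$ alone: a single-peaked agent with restricted peak $\le l$ (or $\ge r$) ranks $l$ above $r$ (resp.\ $r$ above $l$), and symmetrically for single-dipped agents with restricted dip outside $(l,r)$.

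First I would set up the one-agent deviation. Take any $i \notin N(R)$ and consider moving from $R$ to the profile where only agent $i$'s preference is switched from $R_i$ to $R'_i$. Since $o(R_i) = o(R'_i)$ and both lie outside $(l,r)$, agents $i$'s ranking of $l$ versus $r$ is identical under $R_i$ and $R'_i$. The outcome both before and after the switch is in $\{l, r\}$ (as the restricted peak/dip vector is unchanged). If the outcome changed---say from $l$ to $r$---then the agent strictly prefers one of them, and because her preference between $l$ and $r$ is the same under $R_i$ and $R'_i$, she would have a profitable manipulation at one of the two profiles via the other preference. This contradicts SP. Hence the one-agent switch leaves $f$ invariant.

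Next I would iterate: enumerate the agents in $N \setminus N(R)$ and switch their preferences from $R$ to $R'$ one at a time, applying the single-deviation argument at each step. Because every intermediate profile still has restricted peak/dip vector equal to $o(R)$ and still agrees with $R$ (hence with $R'$) on $N(R)$, the invariance at each step is justified by exactly the same reasoning, and after exhausting all agents outside $N(R)$ we arrive at $R'$ with $f(R) = f(R')$.

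The main obstacle is making the one-agent step fully rigorous when one is not careful about \emph{which} direction the outcome moves and \emph{which} preference witnesses the manipulation. The clean way to handle this is to note that the binary choice $\{l, r\}$ reduces the problem to a two-alternative setting, where strategy-proofness of $f$ restricted to these two outcomes is equivalent to the statement that agent $i$'s selected outcome is monotone in her reported ranking of $l$ versus $r$; since that ranking is held fixed by $o(R_i) = o(R'_i)$, no change can occur. I would phrase the argument so that both potential directions of an outcome change are covered symmetrically, using the fact that the manipulating preference is always available to the agent precisely because her $\{l,r\}$-ranking coincides under $R_i$ and $R'_i$.
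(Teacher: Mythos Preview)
Your proposal is correct and follows essentially the same approach as the paper: a one-agent-at-a-time sequential deviation from $R$ to $R'$, using that every agent $i\notin N(R)$ has her $\{l,r\}$-ranking pinned down by $o(R_i)=o(R'_i)\notin(l,r)$, so any change of outcome along the sequence yields a manipulation. The only cosmetic difference is that the paper argues by contradiction (locating the last step where the outcome is $l$) whereas you argue invariance at every step; the content is identical.
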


\begin{proof} Let $R,R'\in {\cal R}$ be such that $o(R)=o(R')$ and $R_{N(R)} = R'_{N(R)}$. Suppose by contradiction that $f(R) \neq f(R')$. Assume without loss of generality that $f(R) < f(R')$. Since $o(R)=o(R')$, $\Omega(o(R)) = \Omega(o(R'))$ by definition. And since $f(R) \neq f(R')$, $|\Omega(o(R))| \neq  1$. Consequently, it follows from Step 1 and $f(R) < f(R')$ that $f(R) = l$ and $f(R') = r$.

\medskip
 
\noindent Starting at $R$, construct a sequence of profiles in which the preferences of all agents $i \in N$ are changed one by one from $R_i$ to $R'_i$ so that the sequence ends at $R'$. 
In all profiles of the sequence, the vector of $\Omega$-restricted peaks and dips is the same and therefore, for each profile of the sequence, the outcome of $f$ is either $l$ or $r$. 
Since $f(R) \neq f(R')$, the outcome must have changed along the sequence. 
So, let $S \subset N$ be the set of agents that have changed preferences in the sequence the last time the rule selects $f(R)$, and let $i\in N$ be the next agent changing preferences in the sequence. 
Then, $f(R) = f(R'_S, R_{-S}) = l \neq  r = f(R'_{S \cup \{i\}}, R_{-(S \cup \{i\})}) = f(R')$. 
Since $R_{N(R)}=R'_{N(R)}$ by assumption, $i\notin N(R)$. 
If [$i \in A$ and $p(R_i) \leq  l$] or [$i \in D$ and $d(R_i) \geq r$], then $l \, P'_i \, r$.
Agent $i$ then manipulates $f$ at $(R'_{S\cup\{i\}}, R_{-(S\cup\{i\})})$ via $R_i$. 
Otherwise, if [$i \in A$ and $p(R_i) \geq r$] or [$i \in D$ and $d(R_i) \leq  l$], then $r \, P_i \, l$.
Agent $i$ then manipulates $f$ at $(R'_S, R_{-S})$ via $R'_i$. \end{proof}

\noindent Lemma \ref{internal} implies that if $|\Omega(o(R))| = 2$, then there is an agent whose $\Omega$-restricted peak/dip at profile $R$ is strictly between $l$ and $r$. This implies the following corollary.

\begin{corollary}
\label{medionovacio}
Let $f: {\cal R} \rightarrow \Omega$ be SP. Then, for each $R \in {\cal R}$ with $|\Omega(o(R))|=2$, $\Omega \cap (l,r) \neq \emptyset$.
\end{corollary}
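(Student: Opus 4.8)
The plan is to establish the corollary by contraposition, piggybacking on Lemma \ref{internal}; since that lemma already does the substantive work, essentially nothing new needs to be proved. Concretely, I would assume toward a contradiction that $\Omega \cap (l,r) = \emptyset$ and deduce that $f$ is constant on the subdomain of profiles sharing the vector of $\Omega$-restricted peaks and dips $o(R)$, which forces $|\Omega(o(R))| = 1$ and contradicts the hypothesis $|\Omega(o(R))| = 2$.

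First I would record the elementary fact that every $\Omega$-restricted peak or dip lies in $\Omega$: for $i \in A$ we have $o(R_i) = p(R_i) \in \Omega$ by definition of the $\Omega$-restricted peak, and for $i \in D$ we have $o(R_i) = d(R_i) \in \Omega$ by definition of the $\Omega$-restricted dip. Consequently, if $\Omega \cap (l,r) = \emptyset$, then no agent can satisfy $o(R_i) \in (l,r)$, so the set $N(R) = \{i \in N : o(R_i) \in (l,r)\}$ is empty.

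Next I would apply Lemma \ref{internal} with this empty $N(R)$. Take any $R' \in {\cal R}$ with $o(R') = o(R)$. Because $N(R) = \emptyset$, the hypothesis $R_{N(R)} = R'_{N(R)}$ of Lemma \ref{internal} is satisfied vacuously, so the lemma yields $f(R) = f(R')$. As $R'$ ranges over all profiles with the same vector of $\Omega$-restricted peaks and dips, this shows that $f$ takes only the value $f(R)$ on the entire subdomain determined by $o(R)$; that is, $\Omega(o(R)) = \{f(R)\}$, so $|\Omega(o(R))| = 1$. This contradicts $|\Omega(o(R))| = 2$, and therefore $\Omega \cap (l,r) \neq \emptyset$.

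I do not anticipate any genuine obstacle, as the difficulty is entirely contained in Lemma \ref{internal}. The only point requiring a moment of care is the \emph{vacuous} invocation of that lemma when $N(R) = \emptyset$: one must check that its conclusion $f(R) = f(R')$ then holds for \emph{every} profile $R'$ with $o(R') = o(R)$, since it is precisely this universal quantification that collapses the range $\Omega(o(R))$ to a singleton and delivers the contradiction.
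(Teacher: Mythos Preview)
Your proposal is correct and follows essentially the same approach as the paper: the paper notes in one sentence that Lemma~\ref{internal} forces $N(R)\neq\emptyset$ whenever $|\Omega(o(R))|=2$, whence some $o(R_i)\in(l,r)\cap\Omega$. Your contrapositive formulation---showing $\Omega\cap(l,r)=\emptyset$ implies $N(R)=\emptyset$ and then invoking Lemma~\ref{internal} vacuously to collapse $\Omega(o(R))$ to a singleton---is just the same reasoning unfolded more explicitly.
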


\noindent Next, we show that if all agents of $N(R)$ prefer the same alternative from $\Omega(o(R))$, then that alternative has to be chosen.

\begin{lemma}
\label{unanimidad}
Let $f: {\cal R} \rightarrow \Omega$ be SP. Then, for each $R \in {\cal R}$ such that $|\Omega(o(R))| = 2:$
\begin{itemize}
\item[(i)] If $l \, P_i \, r$ for each $i \in N(R)$, then $f(R)= l$.
\item[(ii)] If $r \, P_i \, l$ for each $i \in N(R)$, then $f(R)= r$.
\end{itemize}
\end{lemma}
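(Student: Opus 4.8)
The plan is to prove part $(i)$ by contradiction and obtain part $(ii)$ by the symmetric argument. Since $f(R)\in\Omega(o(R))=\{l,r\}$, if part $(i)$ failed we would have $f(R)=r$ while $l\,P_i\,r$ for every $i\in N(R)$. From this I would extract a profitable unilateral deviation, contradicting SP.

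First I would exhibit a reference profile at which $l$ is selected and which differs from $R$ only on $N(R)$. Because $|\Omega(o(R))|=2$, the definition of $\Omega(o(R))$ supplies some $R^*\in{\cal R}$ with $o(R^*)=o(R)$ and $f(R^*)=l$. As $o(R^*)=o(R)$ forces $N(R^*)=N(R)$, I set $\tilde R=(R^*_{N(R)},R_{-N(R)})$; then $o(\tilde R)=o(R)$ and $\tilde R_{N(R)}=R^*_{N(R)}$, so Lemma \ref{internal} yields $f(\tilde R)=f(R^*)=l$, while $\tilde R_{-N(R)}=R_{-N(R)}$.

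Next, starting at $R$, I would change the preferences of the agents of $N(R)$ one by one from $R_i$ to $\tilde R_i$ until $\tilde R$ is reached. Each intermediate profile keeps the vector of $\Omega$-restricted peaks and dips equal to $o(R)$ (both $R$ and $\tilde R$ share this vector), so its outcome lies in $\Omega(o(R))=\{l,r\}$ by Step 1. Since $f(R)=r$ and $f(\tilde R)=l$, the outcome must switch from $r$ to $l$ at some step. Let $S\subset N(R)$ be the set of agents already switched just before this change and let $i\in N(R)\setminus S$ be the agent whose change triggers it, so that $f(\tilde R_S,R_{-S})=r$ and $f(\tilde R_{S\cup\{i\}},R_{-(S\cup\{i\})})=l$. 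At the profile $(\tilde R_S,R_{-S})$ agent $i$ still holds her original preference $R_i$, and because $i\in N(R)$ the hypothesis gives $l\,P_i\,r$. Reporting $\tilde R_i$ changes the outcome from $r$ to $l$, which $i$ strictly prefers under $R_i$; hence $i$ manipulates $f$ at $(\tilde R_S,R_{-S})$ via $\tilde R_i$, contradicting SP. This forces $f(R)=l$, and the mirror-image construction (interchanging the roles of $l$ and $r$) gives part $(ii)$.

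The main point to watch is the bookkeeping that keeps the whole sequence inside the two-element range: I must ensure every intermediate profile carries the vector $o(R)$ so that Step 1 confines its outcome to $\{l,r\}$, and I must check that at the triggering step the deviating agent's \emph{true} preference is the original $R_i$, so that $l\,P_i\,r$ applies to the correct preference. Both are guaranteed because only the $N(R)$-coordinates are altered and the switch occurs before agent $i$'s own preference has been changed; the nonemptiness of $N(R)$ needed for the sequence to be nontrivial follows from Corollary \ref{medionovacio}.
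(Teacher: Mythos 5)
Your proof is correct, but it takes a different route from the paper's. The paper argues in one line: it takes a profile $R'$ with $o(R')=o(R)$ and $f(R')=l$, applies Lemma \ref{internal} to get $f(R'_{N(R)},R_{-N(R)})=l$, and concludes that the coalition $N(R)$ manipulates $f$ at $R$ --- so $f$ is not GSP, and by Lemma \ref{equivalence} not SP. You instead unfold that group deviation into a one-by-one sequence and locate a single pivotal agent, so you never need the SP--GSP equivalence (which rests on the external indirect-sequential-inclusion result of Barber\`a et al.). Your argument works precisely because the two features needed for the pivot step are both present: every intermediate profile keeps the vector $o(R)$, so Step 1 confines all outcomes to $\{l,r\}$, and every agent in $N(R)$ strictly prefers $l$ to $r$ under her \emph{original} preference, so whichever agent flips the outcome from $r$ to $l$ gains at the pre-flip profile where she still holds $R_i$. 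The trade-off is that the paper's proof is shorter but leans on Lemma \ref{equivalence}, while yours is longer but self-contained at the level of SP. One small slip: the nonemptiness of $N(R)$ follows from Lemma \ref{internal} (if $N(R)=\emptyset$, all profiles with the same $o$-vector would yield the same outcome, contradicting $|\Omega(o(R))|=2$), not from Corollary \ref{medionovacio}, which concerns $\Omega\cap(l,r)\neq\emptyset$; in any case your argument does not actually need $N(R)\neq\emptyset$, since if $N(R)=\emptyset$ then $\tilde R=R$ and Lemma \ref{internal} already forces $f(R)=l$.
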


\begin{proof}
Due to symmetry reasons we only prove $(i)$.
Let $R \in {\cal R}$ be such that $\Omega(o(R))=\{l,r\}$ and for each $i \in N(R)$, $l\, P_i \, r$. 
Suppose by contradiction that $f(R)=r$. 
Since $l \in \Omega(o(R))$, there is a profile $R'\in {\cal R}$ such that $o(R')=o(R)$ and $f(R')=l$. 
By Lemma \ref{internal}, $f(R'_{N(R)}, R_{-N(R)}) = l$. Then, the set of agents $N(R)$ manipulates $f$ at $R$ via $R'_{N(R)}$.
Thus, $f$ is not GSP.
By Lemma \ref{equivalence}, $f$ is not SP.
\end{proof}

\noindent We now study deviations of agents with single-dipped preferences. 
In particular, given a profile $R\in \mathcal{R}$ such that $|\Omega(o(R))|=2$, we are interested in how $\Omega(o(R))$ changes as an agent $i \in D$ changes her preference from $R_i$ to $R'_i$. 
To do that, denote $\min {\Omega}(o(R'_i, R_{-i}))=l'$ and $\max {\Omega}(o(R'_i, R_{-i}))=r'$.

\begin{lemma}
\label{movementdips}
Let $f: {\cal R} \rightarrow \Omega$ be SP. Consider $i\in D$ and $R, (R'_i,R_{-i})\in{\cal R}$ such that $|\Omega(o(R))| = 2$. 
\begin{itemize}
\item[(i)] If $d(R_i) \leq l$, then $l' \in [d(R_i),l]$ and $r' \leq r$.

\item[(ii)] If $d(R_i) \geq r$, then $r' \in [r, d(R_i)]$ and $l' \geq l$.

\item[(iii)] If $i \in N(R)$, then $\{l',r'\} \subseteq [l,r]$. 
\end{itemize}
\end{lemma}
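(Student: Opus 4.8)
\textbf{Proof plan for Lemma \ref{movementdips}.}

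The plan is to prove all three parts by exploiting the monotonicity that strategy-proofness imposes on a single agent $i \in D$ as she moves her dip, together with the structural constraints already established (Step 1, Lemmas \ref{internal}, \ref{unanimidad}, and Corollary \ref{medionovacio}). Due to the symmetry between cases $(i)$ and $(ii)$, I would prove $(i)$ in detail and obtain $(ii)$ by a mirror-image argument, reflecting the roles of $l$ and $r$ and reversing the order. Part $(iii)$ should follow as a relatively direct consequence of the techniques used for the first two parts. The guiding idea throughout is that if agent $i$'s deviation pushed $\Omega(o(R'_i,R_{-i}))$ "too far" in some direction, then at one of the two profiles $R$ or $(R'_i,R_{-i})$ agent $i$ would strictly prefer an attainable outcome to the one actually chosen, contradicting strategy-proofness (equivalently GSP, via Lemma \ref{equivalence}).

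For part $(i)$, assume $d(R_i) \leq l$. First I would bound $r'$ from above. Suppose for contradiction that $r' > r$; the aim is to use Lemma \ref{unanimidad} to force a determinate outcome at each profile and then display a manipulation. Because $d(R_i) \le l < r$ and $i$ has single-dipped preferences, $i$ strictly prefers larger alternatives once we are at or above $l$, so in particular at the deviated profile $i$ would prefer the larger attainable option; comparing $f(R)$ with the outcome at $(R'_i,R_{-i})$ and invoking whichever of $f(R), f(R'_i,R_{-i})$ is smaller gives the contradiction. To establish $l' \in [d(R_i), l]$, I would argue the two one-sided bounds separately: $l' \le l$ because the minimum of the attainable set cannot jump strictly to the right of $l$ without $i$ (whose dip is weakly left of $l$, hence who weakly prefers $l$ to anything larger) being able to manipulate; and $l' \ge d(R_i)$ because an attainable alternative strictly below $d(R_i)$ is strictly worse for $i$ than $d(R_i)$-side alternatives, again yielding a profitable deviation at the appropriate profile. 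Throughout, Corollary \ref{medionovacio} guarantees $\Omega \cap (l,r) \neq \emptyset$, so the degenerate cases are controlled, and Step 1 keeps every attainable set of size at most two so the notation $l',r'$ is well defined.

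For part $(iii)$, where $i \in N(R)$ so that $d(R_i) \in (l,r)$, I would show both $l' \ge l$ and $r' \le r$ by the same single-agent manipulation scheme: since $i$'s dip lies strictly between $l$ and $r$, agent $i$ strictly prefers $l$ to any alternative $\le l$ other than... more precisely she prefers alternatives farther from her dip, so if the attainable set after deviation reached strictly below $l$ or strictly above $r$, one of $R$ or $(R'_i,R_{-i})$ would admit a profitable unilateral deviation by $i$. The main obstacle I anticipate is bookkeeping the direction of $i$'s single-dipped preference relative to $l$, $r$, $l'$, and $r'$ simultaneously and selecting, in each subcase, the correct profile (original versus deviated) and the correct witness outcome so that the manipulation is genuine; the single-dipped structure makes the preference between two alternatives depend on which side of $d(R_i)$ they fall, so the cases must be split carefully according to the position of $d(R_i)$ relative to the interval endpoints. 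The strategy-proofness contradictions themselves are short once the right profile and comparison alternative are identified.
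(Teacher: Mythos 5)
Your overall strategy---single-agent manipulation arguments exploiting the position of $d(R_i)$, proving $(i)$ in detail and getting $(ii)$ by symmetry, and treating $(iii)$ with the same machinery---is the same as the paper's. However, there are two concrete gaps. First, the manipulation cannot generally be exhibited at the profiles $R$ and $(R'_i,R_{-i})$ themselves, as your plan repeatedly suggests (``comparing $f(R)$ with the outcome at $(R'_i,R_{-i})$''). The set $\Omega(o(R'_i,R_{-i}))$ collects the outcomes over \emph{all} profiles with that vector of $\Omega$-restricted peaks and dips, so $f(R'_i,R_{-i})$ need not equal the offending alternative (e.g.\ it may be $l'$ when you want to exploit $r'>r$). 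The paper's proof first fixes an auxiliary profile $\hat R'$ with $o(\hat R')=o(R'_i,R_{-i})$ at which $f$ actually selects the offending alternative, then observes that $(R_i,\hat R'_{-i})$ has $o$-vector $o(R)$ and hence outcome in $\{l,r\}$, and exhibits the manipulation there. Your proposal never introduces these auxiliary profiles, and without them the contradiction does not go through.

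Second, and more seriously, your argument for $l'\geq d(R_i)$ in part $(i)$ (and implicitly for part $(iii)$) rests on a preference claim that is false and points the manipulation in the wrong direction. If $l'<d(R_i)\leq l<r$, then $l'$ and $r$ lie on opposite sides of the dip, so single-dippedness does \emph{not} determine $i$'s ranking of $l'$ versus $r$; in particular $l'$ is not ``strictly worse for $i$'' as you assert, and no manipulation by the true preference $R_i$ is guaranteed. The paper's resolution---which is the key idea missing from your plan---is to replace $R_i$ by a preference $\tilde R_i$ with the \emph{same} dip but with $l'\,\tilde P_i\,r$ (hence $l'\,\tilde P_i\,r\,\tilde P_i\,l$); this is legitimate because only the dip enters $o(\cdot)$, so the attainable set at $(\tilde R_i,\tilde R'_{-i})$ is still $\{l,r\}$, and agent $i$ then profitably deviates to obtain $l'$. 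The same construction is what closes part $(iii)$, where $l'\notin[l,r]$ and $d(R_i)\in(l,r)$ again put the offending alternative on the opposite side of the dip from one of $l,r$. You correctly identify the cross-dip ambiguity as ``the main obstacle,'' but careful case-splitting on the true preference cannot overcome it; the freedom to rechoose the preference subject only to a fixed dip is what is needed.
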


\begin{proof} Due to symmetry reasons we only prove $(i)$ and $(iii)$. Let $i\in D$ and $R, (R'_i, R_{-i}) \in {\cal R}$ be such that $\Omega(o(R))=\{l,r\}$ (with $l\neq r$) and $\Omega(o(R'_i,R_{-i}))=\{l',r'\}$ (possibly $l'=r'$). 

\medskip

\noindent Proof of $(i)$. 
Assume that $d(R_i)\leq l$. 
Suppose first by contradiction that $r' > r$. Consider a preference profile $\hat{R}' \in {\cal R}$ such that $o(\hat{R}') = o(R'_i, R_{-i})$ and $f(\hat{R}') = r'$.\footnote{It can be checked that all preference profiles and preference rankings introduced in the proofs exist. We omit these parts of the proofs, but they can be provided upon request.} 
Since $o(R_i, \hat{R}'_{-i}) = o(R)$, $f(R_i, \hat{R}'_{-i}) \in \{l, r\}$ and agent $i$ manipulates $f$ at this profile via $\hat{R}'_i$ to obtain $r'$. 
Thus, $r' \leq r$. 
Next, suppose by contradiction that $l' \not\in [d(R_i), l]$. If $l' > l$, consider $\hat{R} \in {\cal R}$ such that $o(\hat{R}) = o(R)$ and $f(\hat{R}) = l$. 
Note that $o(R'_i, \hat{R}_{-i}) = o(R'_i, R_{-i})$ and, therefore, $f(R'_i, \hat{R}_{-i}) \in \{l', r'\}$. Thus, agent $i$ manipulates $f$ at $\hat{R}$ via $R'_i$. 
Finally, if $l' < d(R_i)$, consider $\tilde{R}' \in {\cal R}$ such that $o(\tilde{R}') = o(R'_i, R_{-i})$ and $f(\tilde{R}') = l'$. 
Consider also a preference $\tilde{R}_i \in {\cal R}_i$ with $d(\tilde{R}_i) = d(R_i)$ and $l' \, \tilde{P}_i \, r$. Note that $o(\tilde{R}_i, \tilde{R}'_{-i}) = o(R)$ and, therefore, $f(\tilde{R}_i, \tilde{R}'_{-i}) \in \{l, r\}$. Thus, agent $i$ manipulates $f$ at this profile via $\tilde{R}'_i$ to obtain $l'$. 

\medskip

\noindent Proof of $(iii)$. 
Assume that $i\in N(R)$ and suppose by contradiction that $\{l',r'\} \not\subseteq [l,r]$.
Suppose without loss of generality that $l' \not\in [l, r]$. 
Consider $\bar{R}' \in {\cal R}$ such that $o(\bar{R}')=o(R'_i, R_{-i})$ and $f(\bar{R}') = l'$. Consider also $\bar{R}_i \in {\cal R}_i$ such that $d(\bar{R}_i) = d(R_i)$ and $l' \, \bar{P}_i \, w$ for each $w \in \{l, r\}$. Since $o(\bar{R}_i, \bar{R}'_{-i}) = o(R)$, we have that $f(\bar{R}_i, \bar{R}'_{-i}) \in \{l, r\}$. Thus, agent $i$ manipulates $f$ at this profile via $\bar{R}'_i$ to obtain $l'$.
\end{proof}

\noindent We are now ready to prove Step 2. The proof is divided into five cases. 
Consider $R\in{\cal R}$ such that $|\Omega(o(R))|=2$. 
We have to show that for each $i\in D$ and each $R'_i\in{\cal R}_{i}$, $\{l',r'\}=\{l,r\}$. 

\bigskip

\noindent \underline{Case 1}: $d(R_i) \in (l, r)$ and $d(R'_i) \in (l, r)$.

\medskip

\noindent Since $i \in N(R)$ by assumption, it follows from Lemma \ref{movementdips}\,$(iii)$ that $\{l', r'\} \subseteq [l, r]$. We distinguish three subcases. 
(a) If $d(R'_i) \leq l'$, we deduce from Lemma \ref{movementdips}\,$(i)$ that $l \in [d(R'_i), l']$ (note that $(R'_i, R_{-i})$ and $R_i$ play the roles of $R$ and $R'_i$, respectively). 
This contradicts that $d(R'_i) \in (l, r)$. 
(b) If $d(R'_i) \geq r'$, a similar contradiction is obtained by applying Lemma \ref{movementdips}\,$(ii)$. 
And (c), if $i \in N(R')$, we apply Lemma \ref{movementdips}\,$(iii)$ to see that $\{l, r\} \subseteq [l', r']$ (note that $(R'_i, R_{-i})$ and $R_i$ play the roles of $R$ and $R'_i$, respectively). This together with $\{l', r'\} \subseteq [l, r]$ implies that $l = l'$ and $r = r'$. This concludes the proof of Case 1.

\medskip

\noindent We introduce a claim that is needed later on.

\begin{claim}
\label{nose}
Let $f: {\cal R} \rightarrow \Omega$ be SP. Consider $i\in D$ and $R, (R'_i, R_{-i}) \in {\cal R}$ such that $\Omega(o(R)) = \{l, r\}$, $\Omega(o(R'_i, R_{-i})) = \{l', r\}$, $d(R_i) \leq l' < l < r$, and $d(R'_i) \in (l', r)$. If $N(R) \cap D \neq \emptyset$, then for each $j \in N(R) \cap D$ and each $R'_j \in {\cal R}_j$ such that $d(R'_j) \in (l', l]:$ 
$$\Omega(o(R'_{\{i,j\}}, R_{-\{i,j\}})) = \{l', r\}  \mbox{ and } \Omega(o(R'_j, R_{-j})) = \{\hat{l}, r\},\, \mbox{ where } \hat{l} \in [l, r).$$
\end{claim}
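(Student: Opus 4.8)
The plan is to treat the two conclusions separately: the first follows from the argument of Case 1, while the second combines the movement lemmas with an explicit manipulation argument whose hard part is ruling out a collapse of the range to a single alternative.

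\noindent For the first conclusion I would apply the reasoning of Case 1 to the profile $(R'_i,R_{-i})$ (in the role of $R$) and to agent $j$ (in the role of $i$). Indeed $|\Omega(o(R'_i,R_{-i}))|=2$ with extremes $l'<r$, and since $d(R_j)\in(l,r)\subseteq(l',r)$ and $d(R'_j)\in(l',l]\subseteq(l',r)$, agent $j$ keeps its dip strictly inside $(l',r)$ both before and after changing from $R_j$ to $R'_j$. Case 1 then gives $\Omega(o(R'_{\{i,j\}},R_{-\{i,j\}}))=\{l',r\}$. For the second conclusion, note that $j\in N(R)$, so Lemma~\ref{movementdips}$(iii)$ yields $\Omega(o(R'_j,R_{-j}))\subseteq[l,r]$, and by Step 1 this set has one or two elements. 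To place $r$ in it, I would fix $\bar R$ with $o(\bar R)=o(R)$ and $f(\bar R)=r$ and replace $\bar R_j$ by any $\hat R_j$ with dip $d(R'_j)$; writing $s=f(\hat R_j,\bar R_{-j})\in[l,r]$, if $s<r$ then $d(R'_j)\le l\le s<r$ forces $r\,\hat P_j\,s$, so $j$ would manipulate at $(\hat R_j,\bar R_{-j})$ by reporting $\bar R_j$ to obtain $r$; hence $s=r$ and $\max\Omega(o(R'_j,R_{-j}))=r$.

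\noindent It remains to show $\hat l:=\min\Omega(o(R'_j,R_{-j}))<r$, i.e.\ that this set is not the singleton $\{r\}$; this is the crux. I would argue by contradiction, assuming $\Omega(o(R'_j,R_{-j}))=\{r\}$, so that $f\equiv r$ on every profile with peak/dip vector $o(R'_j,R_{-j})$. The enabling observation is that $\Omega(o(\cdot))$ depends only on the vector of peaks and dips, so in any witness I may assign agents any admissible preferences with the prescribed peaks/dips; in particular, a dip agent's ranking of two alternatives that straddle its dip is free. Two facts then drive the contradiction. First, agent $i$ has dip $d(R_i)\le l'<l<r$, which forces $r\,P_i\,l\,P_i\,l'$. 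Using this, for any preference $\hat R_i$ of $i$ with dip $d(R'_i)$ and any $\tilde R$ with $o(\tilde R)=o(R)$ and $f(\tilde R)=l$, I can show $f(\hat R_i,\tilde R_{-i})=l'$: the outcome lies in $\{l',r\}$ since $o(\hat R_i,\tilde R_{-i})=o(R'_i,R_{-i})$, and if it were $r$ then $i$ (true preference $\tilde R_i$, for which $r\,\tilde P_i\,l$) would manipulate at $\tilde R$ via $\hat R_i$.

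\noindent Now fix such a $\tilde R$, choose $\hat R_i$ with dip $d(R'_i)$ and $r\,\hat P_i\,l'$, and $\hat R_j$ with dip $d(R'_j)$ and $l'\,\hat P_j\,r$ (both straddling, hence admissible), and set $P^\star=(\hat R_i,\hat R_j,\tilde R_{-\{i,j\}})$. Since $o(P^\star)=o(R'_{\{i,j\}},R_{-\{i,j\}})$, the first conclusion gives $f(P^\star)\in\{l',r\}$. If $f(P^\star)=r$, agent $j$ manipulates at $P^\star$ by reporting $\tilde R_j$, since the resulting profile is $(\hat R_i,\tilde R_{-i})$ with outcome $l'$ and $l'\,\hat P_j\,r$. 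If $f(P^\star)=l'$, agent $i$ manipulates at $P^\star$ by reporting $R_i$, since the resulting profile has peak/dip vector $o(R'_j,R_{-j})$ with outcome $r$ and $r\,\hat P_i\,l'$. Either case contradicts SP, so $\Omega(o(R'_j,R_{-j}))$ has two elements; with $\max=r$ and inclusion in $[l,r]$ this gives $\Omega(o(R'_j,R_{-j}))=\{\hat l,r\}$, $\hat l\in[l,r)$. The main obstacle is exactly this last step: Lemma~\ref{movementdips} is by itself consistent with the range collapsing to $\{r\}$, so one must construct a genuine deviation; the trick is to use that $\Omega(o(\cdot))$ depends only on peaks and dips to pre-set the free rankings $r\,\hat P_i\,l'$ and $l'\,\hat P_j\,r$, together with the rigid ranking $r\,P_i\,l\,P_i\,l'$ forced by $d(R_i)\le l'$, so that whichever value $f(P^\star)$ takes triggers a profitable move into the forced-$r$ class.
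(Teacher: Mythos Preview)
Your proof is correct and follows essentially the same approach as the paper: Case~1 applied to $(R'_i,R_{-i})$ and $j$ gives the first conclusion, Lemma~\ref{movementdips}$(iii)$ gives the inclusion in $[l,r]$, a direct manipulation by $j$ forces $r\in\Omega(o(R'_j,R_{-j}))$, and then a contradiction argument rules out $\hat l=r$. The only difference is organizational: the paper first proves two forward implications (every profile with $o=o(R'_{\{i,j\}},R_{-\{i,j\}})$ and $r\,P_i\,l'$ yields $r$; then every profile with $o=o(R'_i,R_{-i})$ and $r\,P_i\,l'$ yields $r$) and derives the contradiction at a profile with $o=o(R)$, whereas you prove one backward implication ($f(\hat R_i,\tilde R_{-i})=l'$ whenever $f(\tilde R)=l$) and then do a case split on $f(P^\star)$. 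The underlying manipulations linking the four $o$-classes $o(R)$, $o(R'_i,R_{-i})$, $o(R'_j,R_{-j})$, $o(R'_{\{i,j\}},R_{-\{i,j\}})$ are identical.
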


\begin{proof}
Let $i \in D$ and $R, (R'_i, R_{-i}) \in {\cal R}$ be such that $\Omega(o(R)) = \{l, r\}$, $\Omega(o(R'_i, R_{-i})) = \{l', r\}$, $d(R_i) \leq l' < l < r$, and $d(R'_i) \in (l', r)$. 
Suppose also that $N(R) \cap D \neq \emptyset$. Then, consider $j \in N(R) \cap D$ and $R'_j \in {\cal R}_j$ with $d(R'_j) \in (l', l]$. Denote $\Omega(o(R'_{\{i, j\}}, R_{-\{i, j\}}))=\{\bar{l},\bar{r}\}$ (possibly $\bar{l}=\bar{r}$) and $\Omega(o(R'_j, R_{-j}))=\{\hat{l},\hat{r}\}$ (possibly $\hat{l}=\hat{r}$). 

\medskip

\noindent We first show that $\{\bar{l}, \bar{r}\} = \{l', r\}$. 
It follows from $j \in N(R) \cap D$ and $l' < l$ that $d(R_j) \in (l', r)$.
Also, $d(R'_j) \in (l', r)$ by assumption. 
Thus, by Case 1 ---with $(R'_i,R_{-i})$ playing the role of $R$ and $j$ playing the role of $i$---, we have that $\bar{l}=l'$ and $\bar{r}=r$.

\medskip

\noindent Next, focus on $\{\hat{l}, \hat{r}\}$. 
By Lemma \ref{movementdips}\,$(iii)$ ---with $j$ playing the role of $i$---, $\{\hat{l}, \hat{r}\} \subseteq [l, r]$. 

\medskip

\noindent To show that $\hat{r}=r$, suppose by contradiction that $\hat{r}<r$. 
Consider $\tilde{R}\in {\cal R}$ such that $o(\tilde{R})=o(R)$ and $f(\tilde{R})=r$. 
Consider also $\tilde{R}'_j\in{\cal R}_j$ such that $d(\tilde{R}'_j)=d(R'_j)$ and for each $w\in\{\hat{l},\hat{r}\}$, $r \, \tilde{P}'_j \, w$. 
Since $o(\tilde{R}'_j, \tilde{R}_{-j}) = o(R'_j, R_{-j})$ and $\Omega(R'_j, R_{-j}) = \{\hat{l}, \hat{r}\}$, we have $\Omega(\tilde{R}'_j, \tilde{R}_j) = \{\hat{l}, \hat{r}\}$. Hence, $f(\tilde{R}'_j, \tilde{R}_j) \in \{\hat{l}, \hat{r}\}$ and agent $j$ manipulates $f$ at this profile via $\tilde{R}_j$. 

\medskip

\noindent To show that $\hat{l} \in [l, r)$, suppose by contradiction that $\hat{l} = r$. 

\noindent First, we show that for each $\bar{R} \in {\cal R}$ such that $o(\bar{R}) = o(R'_{\{i, j\}}, R_{-\{i, j\}})$ and $r \, \bar{P}_i \, l'$, $f(\bar{R}) = r$. Suppose by contradiction that there exists $\bar{R} \in {\cal R}$ such that $o(\bar{R}) = o(R'_{\{i, j\}}, R_{-\{i, j\}})$ and $r \, \bar{P}_i \, l'$, but $f(\bar{R}) = l'$. Since $o(R_i, \bar{R}_{-i}) = o(R'_j, R_{-j})$ and $\Omega(o(R'_j, R_{-j})) = r$,  $f(R_i, \bar{R}_{-i}) = r$.  Therefore, agent $i$ manipulates $f$ at $\bar{R}$ via $R_i$.

\noindent Second, we show that for each $\hat{R} \in {\cal R}$ such that $o(\hat{R}) = o(R'_i, R_{-i})$ and $r \, \hat{P}_i \, l'$, $f(\hat{R}) = r$. Suppose by contradiction that there exists $\hat{R} \in {\cal R}$ such that $o(\hat{R}) = o(R'_i, R_{-i})$ and $r \, \hat{P}_i \, l'$, but $f(\hat{R}) = l'$. Consider $\hat{R}'_j \in {\cal R}_j$ such that $d(\hat{R}'_j) = d(R'_j)$ and $l' \, \hat{P}'_j \, r$. Observe that $o(\hat{R}'_j, \hat{R}_{-j}) = o(R'_{\{i, j\}}, R_{-\{i, j\}})$. Then, $f(\hat R'_j,\hat R_{j})=r$ by the previous paragraph. Therefore, agent $j$ manipulates $f$ at this profile via $\hat R_j$.

\noindent Finally, consider $\tilde{R} \in{\cal R}$ such that $o(\tilde{R}) = o(R)$ and $f(\tilde{R})=l$. Also, let $\tilde{R}'_i\in{\cal R}_i$ be such that $d(\tilde{R}'_{i}) = d(R'_i)$ and $r \, \tilde{P}'_{i} \, l'$. Observe that $o(\tilde{R}'_i, \tilde{R}_{-i}) = o(R'_i, R_{-i})$. Then, $f(\tilde{R}'_i, \tilde{R}_{-i})=r$ by the previous paragraph. Since $d(\tilde{R}_i) = d(R_i) < l < r$, $r \, \tilde{P}_i \, l$ and agent $i$ manipulates $f$ at $\tilde{R}$ via $\tilde{R}'_{i}$. 

\noindent This concludes the proof of the claim.
\end{proof}

\noindent As a consequence of Claim \ref{nose}, for each $j\in D\setminus\{i\}$ and each $R'_j \in {\cal R}_j$ such that $d(R'_j) \in (l', l]$, $N(R'_j, R_{-j}) \cap D \subset N(R) \cap D$. 
The successive application of Claim \ref{nose} yields the following corollary.

\begin{corollary}
\label{claim}
Let $f: {\cal R} \rightarrow \Omega$ be SP. 
Consider $i\in D$ and $R, (R'_i, R_{-i}) \in {\cal R}$ such that $\Omega(o(R)) = \{l, r\}$, $\Omega(o(R'_i, R_{-i})) = \{l', r\}$, $d(R_i) \leq l' < l < r$, and $d(R'_i) \in (l', r)$. 
Then, there exists a profile $(R_i, \bar{R}_{-i}) \in {\cal R}$ such that $\Omega(o(R_i, \bar{R}_{-i})) = \{\hat{l}, r\}$, $\Omega(o(R'_i, \bar{R}_{-i})) = \{l', r\}$, with $\hat{l} \in [l, r)$, and $N(R_i, \bar{R}_{-i}) \cap D = \emptyset$.
\end{corollary}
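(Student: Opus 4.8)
The plan is to prove the corollary by induction on the cardinality $|N(R)\cap D|$, using Claim \ref{nose} as the basic reduction step that removes one single-dipped agent from the open interval $(l,r)$ at a time while preserving all the structural features needed to iterate. Throughout, observe that $d(R_i)\leq l'<l$ already forces $i\notin N(R)$, so agent $i$ never interferes with the count.

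For the base case, suppose $N(R)\cap D=\emptyset$. I would simply take $\bar R_{-i}=R_{-i}$ and $\hat l=l$: then $(R_i,\bar R_{-i})=R$ gives $\Omega(o(R_i,\bar R_{-i}))=\{l,r\}=\{\hat l,r\}$ with $\hat l\in[l,r)$, $\Omega(o(R'_i,\bar R_{-i}))=\{l',r\}$ holds by hypothesis, and $N(R_i,\bar R_{-i})\cap D=\emptyset$, so the conclusion is immediate.

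For the inductive step, assume $N(R)\cap D\neq\emptyset$ and fix any $j\in N(R)\cap D$. I would pick $R'_j\in\mathcal{R}_j$ with $d(R'_j)\in(l',l]$ (for concreteness $d(R'_j)=l$; such a preference exists since $l\in\Omega\subseteq X$) and invoke Claim \ref{nose}, which yields $\Omega(o(R'_{\{i,j\}},R_{-\{i,j\}}))=\{l',r\}$ and $\Omega(o(R'_j,R_{-j}))=\{\hat l,r\}$ with $\hat l\in[l,r)$. Setting $\tilde R=(R'_j,R_{-j})$, the key is to verify that $\tilde R$ again satisfies all of the corollary's hypotheses, now with $\hat l$ in the role of $l$: indeed $\Omega(o(\tilde R))=\{\hat l,r\}$; since $(R'_i,\tilde R_{-i})=(R'_{\{i,j\}},R_{-\{i,j\}})$ we get $\Omega(o(R'_i,\tilde R_{-i}))=\{l',r\}$; and because $l'<l\leq\hat l<r$ together with $d(R_i)\leq l'$ and $d(R'_i)\in(l',r)$ unchanged, the inequalities $d(R_i)\leq l'<\hat l<r$ and the condition on $d(R'_i)$ both persist.

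The crucial point that makes the induction go through is the strict decrease of the relevant index: since $d(R'_j)\leq l\leq\hat l$ we have $j\notin N(\tilde R)$, and because $\hat l\geq l$ gives $(\hat l,r)\subseteq(l,r)$ no agent of $D$ can newly enter the interval, so $N(\tilde R)\cap D\subsetneq N(R)\cap D$. Applying the induction hypothesis to $\tilde R$ (with the same $R_i$ and $R'_i$) then delivers a profile $(R_i,\bar R_{-i})$ with $\Omega(o(R_i,\bar R_{-i}))=\{\hat{\hat l},r\}$, $\hat{\hat l}\in[\hat l,r)\subseteq[l,r)$, $\Omega(o(R'_i,\bar R_{-i}))=\{l',r\}$, and $N(R_i,\bar R_{-i})\cap D=\emptyset$, which is exactly the desired conclusion. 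I expect the main obstacle to be precisely this re-establishment of every hypothesis of the corollary for $\tilde R$ after a single application of Claim \ref{nose}, combined with the strict reduction of $|N(\cdot)\cap D|$; the latter guarantees termination because $D$ is finite, and everything else is straightforward bookkeeping.
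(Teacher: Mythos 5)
Your proof is correct and follows essentially the same route as the paper, which justifies the corollary precisely by ``the successive application of Claim \ref{nose}''; your induction on $|N(R)\cap D|$ is simply the careful formalization of that iteration, including the key bookkeeping that each application of the claim preserves the corollary's hypotheses (with $\hat{l}$ replacing $l$) and strictly shrinks $N(\cdot)\cap D$.
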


\noindent We now prove the following lemma.

\begin{lemma}
\label{impossibility}
Let $f: {\cal R} \rightarrow \Omega$ be SP. Then, there is no $R, (R'_i, R_{-i}) \in {\cal R}$, with $i\in D$, such that $\Omega(o(R)) = \{l, r\}$, $\Omega(o(R'_i, R_{-i})) = \{l', r\}$, $d(R_i) \leq l' < l < r$, and $d(R'_i) \in (l', r)$. 
\end{lemma}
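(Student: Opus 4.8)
The plan is to argue by contradiction: assume profiles $R$ and $(R'_i,R_{-i})$ with the stated properties exist, and then exhibit a profitable deviation for agent $i$. First I would invoke Corollary~\ref{claim} to replace $R_{-i}$ by a subprofile $\bar R_{-i}$ for which $\Omega(o(R_i,\bar R_{-i}))=\{\hat l,r\}$ with $\hat l\in[l,r)$ and $N(R_i,\bar R_{-i})\cap D=\emptyset$, while $\Omega(o(R'_i,\bar R_{-i}))=\{l',r\}$, where $d(R_i)\le l'<l\le\hat l<r$ and $d(R'_i)\in(l',r)$. The target is then to choose the complete preferences so that $f(R_i,\bar R_{-i})=\hat l$ while $f(R'_i,\bar R_{-i})=r$. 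Since $d(R_i)\le l'<\hat l<r$ places both $\hat l$ and $r$ weakly to the right of agent $i$'s dip, single-dippedness forces $r\,P_i\,\hat l$; hence agent $i$, whose truthful preference is $R_i$ and who obtains $\hat l$, would strictly gain by reporting $R'_i$ and obtaining $r$, contradicting SP.

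The construction rests on the fact that the two ranges, the sets $N(\cdot)$, and the conclusion of Corollary~\ref{claim} all depend on a profile only through its vector $o$ of $\Omega$-restricted peaks and dips. Thus I may re-specify every agent's full preference, holding $o(\bar R_{-i})$ and the two dips of $i$ fixed, without losing any of these properties. I would assign to each agent of $A$ with peak in $(\hat l,r)$ the strict ranking $\hat l\,P\,r\,P\,l'$ on these three alternatives --- admissible because $l'<\hat l$ both lie to the left of the peak (forcing $\hat l\,P\,l'$) and $r$ may then be inserted between them --- and assign to every remaining agent of $N(R'_i,\bar R_{-i})$, namely the agents of $A$ with peak in $(l',\hat l]$, the agents of $D$ with dip in $(l',\hat l]$, and agent $i$ through $R'_i$, a preference ranking $r$ above $l'$, which is admissible because for each of them $l'$ and $r$ lie on opposite sides of the peak/dip. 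Then every agent of $N(R_i,\bar R_{-i})$, which by Corollary~\ref{claim} belongs to $A$, prefers $\hat l$ to $r$, so Lemma~\ref{unanimidad}$(i)$ gives $f(R_i,\bar R_{-i})=\hat l$; and every agent of $N(R'_i,\bar R_{-i})$ prefers $r$ to $l'$, so Lemma~\ref{unanimidad}$(ii)$ gives $f(R'_i,\bar R_{-i})=r$, which completes the contradiction.

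The step I expect to be the crux, and the reason Corollary~\ref{claim} is needed in full strength, is the compatibility of the two unanimity verdicts: the very agents who must prefer $\hat l$ to $r$ (to pin $f=\hat l$ on the range $\{\hat l,r\}$) must simultaneously prefer $r$ to $l'$ (to pin $f=r$ on the range $\{l',r\}$). For a peaked agent with peak in $(\hat l,r)$ this is consistent, as the ranking $\hat l\,P\,r\,P\,l'$ shows. A single-dipped agent with dip in $(\hat l,r)$, however, would be forced to rank $l'$ above $\hat l$, since both lie to the left of its dip, where alternatives farther from the dip are preferred; such an agent could not satisfy $\hat l\,P\,r$ and $r\,P\,l'$ at once and would instead tilt the second-step decision toward $l'$, destroying the argument. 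The emptiness condition $N(R_i,\bar R_{-i})\cap D=\emptyset$ is precisely what excludes these agents. The only remaining verifications are routine: that the displayed orders are genuine single-peaked and single-dipped preferences, and that $R_i$, having dip at most $l'$, ranks $r$ above $\hat l$; both follow directly from the definitions.
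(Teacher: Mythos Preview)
Your proposal is correct and follows essentially the same approach as the paper: invoke Corollary~\ref{claim} to pass to a profile with $N(R_i,\bar R_{-i})\cap D=\emptyset$, then re-specify full preferences (keeping the $o$-vector fixed) so that Lemma~\ref{unanimidad} pins $f=\hat l$ on the range $\{\hat l,r\}$ and $f=r$ on the range $\{l',r\}$, yielding a manipulation for agent~$i$ whose dip lies weakly left of $l'$. Your explanation of why the emptiness condition on $N(R_i,\bar R_{-i})\cap D$ is essential---namely, that a dipped agent with dip in $(\hat l,r)$ could not simultaneously rank $\hat l$ above $r$ and $r$ above $l'$---is exactly the point, and matches the paper's construction.
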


\begin{proof}
Suppose by contradiction that there exists $R, (R'_i, R_{-i}) \in {\cal R}$, with $i\in D$, such that $\Omega(o(R)) = \{l, r\}$, $\Omega(o(R'_i, R_{-i})) = \{l', r\}$, $d(R_i) \leq l' < l < r$, and $d(R'_i) \in (l', r)$. By Corollary \ref{claim}, there exists a profile $(R_i, \bar{R}_{-i}) \in {\cal R}$ such that $\Omega(o(R_i, \bar{R}_{-i})) = \{\hat{l}, r\}$, $\Omega(o(R'_i, \bar{R}_{-i})) = \{l', r\}$, with $\hat{l} \in [l, r)$, and $N(R_i, \bar{R}_{-i}) \cap D = \emptyset$. Since $|\Omega(o(R_i, \bar{R}_{-i}))| = 2$, we have by Lemma \ref{internal} that $N(R_i, \bar{R}_{-i}) \neq \emptyset$. 
This implies that $N(R_i, \bar{R}_{-i}) \subseteq A$. 
Consider a profile $\hat{R} \in {\cal R}$ such that ($i$) $o(\hat{R}) = o(R_i, \bar{R}_{-i})$, ($ii$) for each $j \in N(R_i, \bar{R}_{-i})$, $\hat{l} \, \hat{P}_j \, r \, \hat{P}_j \, l'$, and ($iii$) for each $k \in N$ such that $o(\hat{R}_k) \in (l', \hat{l}]$, $r \, \hat{P}_k \, l'$. 
Consider also $\hat{R}'_i \in {\cal R}_i$ such that $d(\hat{R}'_i) = d(R'_i)$ and $r \, \hat{P}'_i \, l'$. 
Observe that $o(\hat{R}'_i, \hat{R}_{-i}) = o(R'_i, \bar{R}_{-i})$ so that $\Omega(o(\hat{R}'_i, \hat{R}_{-i})) = \{l', r\}$. 
By Lemma \ref{unanimidad}, we have $f(\hat{R}) = \hat{l}$ and $f(\hat{R}'_i, \hat{R}_{-i}) = r$. 
Since $r \, \hat{P}_i \, \hat{l}$, agent $i$ manipulates $f$ at $\hat{R}$ via $\hat{R}'_i$.
\end{proof}

\noindent We are now ready to proceed with the remaining cases.

\medskip

\noindent \underline{Case 2}: [$d(R_i) \leq l$ and $d(R'_i) \leq l$] or [$d(R_i) \geq r$ and $d(R'_i) \geq r$].

\medskip

\noindent Due to symmetry reasons we only prove the case when $d(R_i) \leq l$ and $d(R'_i) \leq l$. 
Since by assumption $d(R_i) \leq l$, we have by Lemma \ref{movementdips}\,$(i)$ that $l' \in [d(R_i), l]$ and $r' \leq r$.

\medskip

\noindent Suppose first that $d(R'_i) < d(R_i)$. Since $d(R_i) \leq l'$, it follows that $d(R'_i) < l'$. 
Apply Lemma \ref{movementdips}\,$(i)$ ---with $(R'_i, R_{-i})$ and $R_i$ playing the roles of $R$ and $R'_i$, respectively--- to see that $l \in [d(R'_i), l']$ and $r \leq r'$. 
Therefore, $l = l'$ and $r = r'$. 

\medskip

\noindent Suppose next that $d(R_i) < d(R'_i)$. 
We first show that $d(R'_i) \leq l'$. 
Suppose by contradiction that $d(R_i) \leq l' < d(R'_i)$. 
We distinguish two cases. 
\begin{itemize}
\item If $r'>d(R'_i)$, then $i\in N(R'_i,R_{-i})$ and it follows from Lemma \ref{movementdips}\,$(iii)$ ---with $(R'_i,R_{-i})$ and $R_i$ playing the roles of $R$ and $R'_i$, respectively--- that $\{l,r\}\subseteq [l',r']$.
Since we already know that $r' \leq r$, $r'=r$.
However, this contradicts Lemma \ref{impossibility}.

\item If $r'\leq d(R'_i)$, apply Lemma \ref{movementdips}\,$(ii)$ ---with $(R'_i,R_{-i})$ and $R_i$ playing the roles of $R$ and $R'_i$, respectively--- to see that $r \in [r', d(R'_i)]$, which contradicts $d(R'_i) \leq l < r$.
\end{itemize}
We have shown that $d(R'_i) \leq l'$. Then, it follows from Lemma \ref{movementdips}\,$(i)$ ---with $(R'_i,R_{-i})$ and $R_i$ playing the roles of $R$ and $R'_i$, respectively--- that $l \in [d(R'_i), l']$ and $r \leq r'$. 
Therefore, $l = l'$ and $r = r'$. 

\medskip

\noindent \underline{Case 3}: $d(R_i) \not\in (l, r)$ and $d(R'_i) \in (l, r)$.

\medskip

\noindent Assume without loss of generality that $d(R_i) \leq l$. 
Lemma \ref{movementdips}\,$(i)$ implies that $l' \in [d(R_i), l]$ and $r' \leq r$. 
If $r' \leq d(R'_i)$, it follows from Lemma \ref{movementdips}\,$(ii)$ ---with $(R'_i, R_{-i})$ and $R_i$ playing the roles of $R$ and $R'_i$, respectively--- that $r \in [r', d(R'_i)]$. This contradicts the assumption $d(R'_i) \in (l, r)$.

\medskip

\noindent If $r' > d(R'_i)$, then $i \in N(R'_i,R_{-i})$.
Then, by Lemma \ref{movementdips}\,$(iii)$ ---with $(R'_i, R_{-i})$ and $R_i$ playing the roles of $R$ and $R'_i$, respectively---, $\{l, r\} \subseteq [l', r']$. 
Hence, $r' = r$. 
To finally show that $l' = l$, suppose by contradiction that $l' < l$. 
Hence, $d(R_i) \leq l' < l < d(R'_i) < r = r'$. However, this contradicts Lemma \ref{impossibility}.

\medskip

\noindent \underline{Case 4}: $d(R_i) \in (l,r)$ and $d(R'_i) \not\in (l, r)$.

\medskip

\noindent Assume without loss of generality that $d(R'_i) \leq l$. 
Since $i \in N(R)$, $\{l', r'\} \subseteq [l, r]$ by Lemma \ref{movementdips}\,$(iii)$.
It follows from $d(R'_i) \leq l$ and $l \leq l'$ that $d(R'_i) \leq l'$. 
Then, by Lemma \ref{movementdips}\,$(i)$ ---with $(R'_i, R_{-i})$ and $R_i$ playing the roles of $R$ and $R'_i$, respectively---, $r \leq r'$ and we conclude that $r' = r$. 
If $l' \in [d(R_i), r]$, then $d(R_i) \leq l'$ and $d(R'_i) \leq l'$. 
Apply Case 2 ---with $(R'_i, R_{-i})$ and $R_i$ playing the roles of $R$ and $R'_i$, respectively--- to see that $l = l'$.
If $l' \in [l, d(R_i))$, then $d(R_i) \in (l', r')$ and $d(R'_i) \not \in (l',r')$. 
Apply Case 3 ---with $(R'_i,R_{-i})$ and $R_i$ playing the roles of $R$ and $R'_i$, respectively--- to see that $l = l'$.
\medskip

\noindent\underline{Case 5}: [$d(R_i) \leq l$ and $d(R'_i) \geq r$] or [$d(R_i) \geq r$ and $d(R'_i) \leq l$].

\medskip

\noindent Due to symmetry reasons we only prove the case when $d(R_i) \leq l$ and $d(R'_i) \geq r$.
Since $|\Omega(o(R))|=2$, we have by Corollary \ref{medionovacio} that $\Omega \cap (l,r) \neq \emptyset$. 
Consider $\bar{R}_i \in {\cal R}_i$ such that $d(\bar{R}_i) \in (l, r)$. 
Apply Case 3 ---with $\bar{R}_i$ playing the role of $R'_i$--- to see that $\Omega(o(\bar{R}_i, R_{-i})) = \{l, r\}$. 
Finally, apply Case 4 ---with $(\bar{R}_i, R_{-i})$ playing the role of $R$--- to conclude that $l' = l$ and $r' = r$.

\bigskip

\noindent\textit{Step 3: We prove that if $|\Omega(o(R))|=1$, then for each $i \in D$ and $R'_i \in {\cal R}_i$, $|\Omega(o(R'_i,R_{-i}))|=1$.}

\medskip

\noindent Let $|\Omega(o(R))|=1$. 
Suppose by contradiction that there exists $i \in D$ and $R'_i \in {\cal R}_i$ such that $|\Omega(o(R'_i,R_{-i}))|\neq 1$. 
Then, by Step 1, $|\Omega(o(R'_i,R_{-i}))|= 2$. 
Thus, $\Omega(o(R'_i,R_{-i})) = \Omega(o(R))$ by Step 2 ---with $(R'_i,R_{-i})$ and $R_i$ playing the roles of $R$ and $R'_i$, respectively.
This contradicts that $|\Omega(o(R))|=1$.

\bigskip

\noindent\textit{Step 4: We prove that if $|\Omega(o(R))|=1$, then one of the following holds: (a) for each $R'_D \in {\cal R}^D$, $\Omega(o(R'_D, R_{-D}))=\Omega(o(R))$ or (b) there is an alternative $x\in \Omega \setminus \{\Omega(o(R))\}$ such that for each $R'_D \in {\cal R}^D$, either $\Omega(o(R'_D, R_{-D}))=\Omega(o(R))$ or $\Omega(o(R'_D, R_{-D}))=x$}.

\medskip

\noindent Let the mapping $h_f^R : {\cal R}^{D} \rightarrow \Omega$ be such that for each $R'_D \in {\cal R}^D$, $h_f^R(R'_D)=\Omega(o(R'_D, R_{A}))=f(R'_D, R_A)$. 
The mapping $h_f^R$ is well-defined because the successive application of Step 3 implies that for each $R'_D\in{\cal R}^D$, $|\Omega(o(R'_D, R_A))|=1$. 
Since $f$ is SP by assumption, $h_f^R$ must as well be SP. 
The domain of $h_f^R$ is the set of all profiles of single-dipped preferences. 
The result follows because \cite{barbera2012domains} shows for this domain that the range of $h_f^R$ contains at most two alternatives. 

\bigskip

\noindent \textit{Step 5: We prove that $|\Omega(p(R))| \leq 2$.}

\medskip

\noindent By Step 1, $|\Omega(o(R))|\leq 2$. 

\medskip

\noindent Suppose first that $|\Omega(o(R))|=2$ and define $\Omega(p(R)) = \Omega(o(R))$. 
We have to show that for each $R'\in{\cal R}$ such that $p(R')=p(R)$, $\Omega(o(R))=\Omega(o(R'))$. 
Consider $R'\in{\cal R}$ such that $p(R')=p(R)$. 
Starting at $R$, construct a sequence of profiles in which the preferences of the agents $i \in D$ are changed one by one from $R_i$ to $R'_i$ so that the sequence ends at $(R'_D,R_{A})$. 
Then, by successive applications of Step 2, we obtain that $\Omega(o(R'_D,R_A))=\Omega(o(R))$. 
Since $p(R')=p(R)$, we have that $o(R')=o(R'_D,R_A)$ and, therefore, $\Omega(o(R'))=\Omega(o(R'_D,R_A))$. 
Thus, $\Omega(o(R'))=\Omega(o(R))$.

\medskip

\noindent Suppose next that $|\Omega(o(R))|=1$. 
To prove the step for this case, we show that there exists $x \in \Omega \setminus \{\Omega(o(R))\}$ such that for each $R'\in{\cal R}$ with $p(R')=p(R)$, $[\Omega(o(R'))=\Omega(o(R))$ or $\Omega(o(R'))=x]$. 
Consider $R'\in{\cal R}$ such that $p(R')=p(R)$. 
By Step 4, there exists $x \in \Omega \setminus \{\Omega(o(R))\}$ such that either $\Omega(o(R'_D, R_A))=\Omega(o(R))$ or $\Omega(o(R'_D, R_A))=x$.
Since $p(R')=p(R)$, we have that $o(R')=o(R'_D,R_A)$ and, therefore, $\Omega(o(R'))=\Omega(o(R'_D,R_A))$. Thus, either $\Omega(o(R'))=\Omega(o(R))$ or $\Omega(o(R'))=x$. If $\Omega(o(R')) = x$ for some $R'\in{\cal R}$ with $p(R')=p(R)$, then we define $\Omega(p(R)) = \{\Omega(o(R)), x\}$. Otherwise, we define $\Omega(p(R)) = \Omega(o(R))$.

\subsection*{Proof of Proposition \ref{omega}}

%We first establish a lemma that will help us in the proof. After that, we show in the first part of the proof that if $\omega({\bf p})$ belongs to $\Omega^2$ for some vector of $\Omega$-restricted peaks ${\bf p}$, then these alternatives are contiguous in $\Omega$ and, thus, $\omega({\bf p}) \in \Omega^2_C$. {\color{red} Then,} in the second part of the proof we show that for each interior alternative $x$ of $\Omega$, there is a vector of $\Omega$-restricted peaks ${\bf p}$ such that only $x$ is preselected by $\omega$, \emph{i.e.}, $\omega({\bf p}) = x$. {\color{red} That is, these first two parts prove $(i)$. Finally, we show the implications $(ii)$ and $(iii)$ concerning the extreme points of $\Omega$.}

%{\color{red} We first establish a lemma that will help us in the proof.}

\subsubsection*{A necessary lemma}

Corollary \ref{structure0} explains that each SP rule $f$ depends on a set of functions. The first of them, $\omega$, determines the set of alternatives that are preselected when the agents of $A$ have declared their $\Omega$-restricted peaks. That is, $\omega(\mathbf{p})$ gives the alternatives that can be selected by $f$ when the vector of $\Omega$-restricted peaks is equal to $\mathbf{p}$. This set of preselected alternatives includes at most two alternatives. The other functions are binary decision functions $\{g_{\omega(\mathbf{p})}\}_{\omega(\mathbf{p})\in \Omega^{2}}$ that allows to choose the winning alternative when there are two preselected.

\medskip

\noindent Based on these binary decision functions, we now define a binary decision function for each ${\bf p} \in \Omega^a$ such that $\omega({\bf p}) \in \Omega^2$ denoted by $g_{\bf{p}} : {\cal R} \rightarrow \omega({\bf p})$. Each of these functions is defined simply by $g_{\bf{p}}(R) = g_{\omega(\bf{p})}(R)$. We now introduce a particular structure for these functions. To do that, we first introduce the following notation.
We define $L(R)=\{i \in (N(R) \cap A) \cup D : \underline{\omega}(p(R)) \, P_i \, \overline{\omega}(p(R))\}$. Then, the class of voting by collections of left-decisive sets can be defined by specifying a set of coalitions $W(g_{\bf{p}}) \subseteq 2^N$, called left-decisive sets, such that $g_{\bf{p}}$ chooses $\underline{\omega}({\bf p})$ in a profile $R\in{\cal R}$ with $p(R)=\bf{p}$ if $L(R)$ belongs to $W(g_{\bf{p}})$, and $\overline{\omega}({\bf p})$ otherwise. We introduce the formal definition of these binary decision functions.

\begin{definition}
\label{manjunath}
Given $\mathbf{p} \in \Omega^a$ such that $\omega({\bf p}) \in \Omega^2$, the binary decision function $g_{\bf{p}}$ is called a voting by collections of left-decisive sets if there is a non-empty set of non-empty coalitions $W(g_{\bf{p}}) \subseteq 2^N$ such that for each $R \in {\cal R}$ with $p(R) = \bf{p}$,
	$$g_{\bf{p}}(R) = \left\{
	\begin{array}{ll}
	\underline{\omega}({\bf p}) & \mbox{ if } L(R) \in W(g_{\bf{p}}) \\*[5pt]
	\overline{\omega}({\bf p}) & \mbox{ otherwise, } 
	\end{array}
	\right. \vspace{0.5cm}$$
	and the following conditions are satisfied:
	\begin{itemize}
		\item $W(g_{p(R)}) \subseteq 2^{(N(R)\cap A)\cup D}$.
		\item If $B \in W(g_{p(R)})$ and $B \subset C \subseteq [(N(R)\cap A)\cup D]$, then $C \in W(g_{p(R)})$.
	\end{itemize}
\end{definition}

\noindent Definition \ref{manjunath} imposes some conditions on the left-decisive sets. Let $R\in\cal{R}$. The first condition requires that the left-decisive sets of a binary decision function $g_{p(R)}$ have to be subsets of $(N(R)\cap A)\cup D$. This condition implies that, once the $\Omega$-restricted peaks $p(R)$ are known, the decision between $\underline{\omega}(p(R))$ and $\overline{\omega}(p(R))$ only depends on the opinion of the agents with single-dipped preferences and those agents with single-peaked preferences whose $\Omega$-restricted peaks at $R$ are located between the two preselected alternatives. 
The second condition, a monotonicity property, says that all supersets of a left-decisive set are also left-decisive sets. 

\medskip

\noindent The following lemma shows that SP imposes that each binary decision function $g_{\bf{p}} : {\cal R} \rightarrow \omega({\bf p})$ should be a voting by collections of left-decisive sets.

\begin{lemma}
\label{structure2}
Let $f: \mathcal{R}\rightarrow\Omega$ be SP. Then, there is a function $\omega : \Omega^a \rightarrow \Omega \cup \Omega^2$ and a set of voting by collections of left-decisive sets $g_{{\bf p}} : {\cal R} \rightarrow \omega({\bf p})$ $($one for each ${\bf p} \in \Omega^a$ such that $\omega({\bf p}) \in \Omega^2)$ such that for each $R \in {\cal R}$ with $p(R)=\bf{p}$, $$f(R) = \left\{
	\begin{array}{ll}
	\omega({\bf p})  & \mbox{ if } \omega({\bf p}) \in \Omega \\*[5pt]
	
	g_{{\bf p}}(R) & \mbox{ if } \omega({\bf p}) \in \Omega^2.
	\end{array}
	\right.$$
\end{lemma}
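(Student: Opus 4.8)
The plan is to take the function $\omega$ and the decomposition $f(R)=g_{\omega(\mathbf p)}(R)$ supplied by Corollary \ref{structure0} (so that $g_{\mathbf p}:=g_{\omega(\mathbf p)}$) and to show that, for each fixed $\mathbf p\in\Omega^a$ with $\omega(\mathbf p)\in\Omega^2$, the function $g_{\mathbf p}$ is a voting by collections of left-decisive sets in the sense of Definition \ref{manjunath}. Write $l=\underline\omega(\mathbf p)$ and $r=\overline\omega(\mathbf p)$, and recall from Corollary \ref{structure0} and Proposition \ref{barbera} that $f(R)\in\{l,r\}$ for every $R$ with $p(R)=\mathbf p$. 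Since only the $A$-agents determine $p(R)=\mathbf p$, the set $N(R)\cap A=\{i\in A: p(R_i)\in(l,r)\}$ is the same for all such profiles, so the index set $(N(R)\cap A)\cup D$ of $L(\cdot)$ is constant. I would then define the candidate collection $W(g_{\mathbf p})=\{L(R): p(R)=\mathbf p,\ f(R)=l\}$ and verify that $g_{\mathbf p}$ is the rule it induces.

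The heart of the argument is a monotonicity claim: if $f(R)=l$ with $p(R)=\mathbf p$ and a single agent $i$ deviates to $R'_i$ with $p(R'_i,R_{-i})=\mathbf p$ in a way that does not switch her ranking of the pair from ``$l$ over $r$'' to ``$r$ over $l$'' (that is, $r\,P_i\,l$ at $R$, or $l\,P'_i\,r$ at the deviation), then $f(R'_i,R_{-i})=l$. This follows from two one-shot strategy-proofness arguments. If $i$ prefers $r$ at $R$ and the deviation yielded $r$, then $i$ manipulates at $R$ via $R'_i$; if instead $i$ prefers $l$ both before and after and the deviation yielded $r$, then $i$ manipulates at $(R'_i,R_{-i})$ by reverting to $R_i$. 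Because $p$ is held fixed at $\mathbf p$, the preselected pair stays $\{l,r\}$ along every such deviation (in particular for deviations of $D$-agents, whose dips do not affect $p$), so the outcome remains in $\{l,r\}$ and both manipulation arguments are legitimate.

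Iterating the claim one agent at a time along a path from $R$ to $\tilde R$ whenever $L(R)\subseteq L(\tilde R)$ yields $f(R)=l\Rightarrow f(\tilde R)=l$. Indeed, the only transition excluded by the claim is an agent switching from preferring $l$ to preferring $r$, and this never occurs: an index-set agent in $L(R)$ also belongs to $L(\tilde R)$ by hypothesis, and an $A$-agent with peak outside $(l,r)$ has her $l$-versus-$r$ ranking pinned down by single-peakedness and unchanged since $p=\mathbf p$ is fixed. Taking $L(R)=L(\tilde R)$ and applying this in both directions shows that $f$ restricted to $\{R:p(R)=\mathbf p\}$ depends only on $L(R)$, whence $g_{\mathbf p}(R)=l$ if and only if $L(R)\in W(g_{\mathbf p})$, matching the branching rule of Definition \ref{manjunath}.

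It then remains to check the remaining conditions. The collection $W(g_{\mathbf p})$ is non-empty because $l\in\omega(\mathbf p)=\Omega(\mathbf p)$, so some $R$ with $p(R)=\mathbf p$ has $f(R)=l$. Each of its members is non-empty, since $L(R)=\emptyset$ would mean every agent of $N(R)$ prefers $r$, forcing $f(R)=r$ by Lemma \ref{unanimidad}, a contradiction. The inclusion $W(g_{\mathbf p})\subseteq 2^{(N(R)\cap A)\cup D}$ is immediate from the definition of $L(\cdot)$. Finally, for monotonicity of $W(g_{\mathbf p})$, given $B\in W(g_{\mathbf p})$ and $B\subset C\subseteq (N(R)\cap A)\cup D$, I would exhibit a profile $R'$ with $p(R')=\mathbf p$ and $L(R')=C$ (choosing for each agent a preference with the prescribed ranking of $l$ versus $r$, which is routine to realize) and apply the iterated claim from the profile realizing $B$ to conclude $f(R')=l$, i.e. $C\in W(g_{\mathbf p})$. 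The main obstacle is the monotonicity claim together with the bookkeeping that the preselected pair is never disturbed by the relevant deviations; once that is secured, the verification of Definition \ref{manjunath} is straightforward.
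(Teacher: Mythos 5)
Your proof is correct and follows the same overall plan as the paper's: you define $W(g_{\mathbf p})$ as the collection of sets $L(R)$ realized at profiles with $p(R)=\mathbf p$ and $f(R)=\underline\omega(\mathbf p)$, and then verify well-definedness, monotonicity, non-emptiness and $\emptyset\notin W(g_{\mathbf p})$. Where you differ is in the engine driving these verifications. The paper proves well-definedness (its Step 1) by first invoking Lemma \ref{internal} to pass to a profile agreeing with the target on $(N(R)\cap A)\cup D$ and then running a switch-point argument along a sequence, and it proves monotonicity (its Step 2) via a \emph{group} deviation by $C\setminus B$ together with the SP--GSP equivalence of Lemma \ref{equivalence}. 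You instead isolate a single one-agent monotonicity claim (a deviation that does not flip the deviator from $\underline\omega(\mathbf p)$-preferring to $\overline\omega(\mathbf p)$-preferring cannot move the outcome away from $\underline\omega(\mathbf p)$), prove it with two elementary individual-SP manipulations, and iterate it along paths; this simultaneously yields well-definedness, monotonicity of $W(g_{\mathbf p})$, and in fact $\emptyset\notin W(g_{\mathbf p})$, all without appealing to group strategy-proofness or to Lemma \ref{internal}. That is a genuinely leaner argument. One small slip: to rule out $\emptyset\in W(g_{\mathbf p})$ you cite Lemma \ref{unanimidad}, but that lemma is stated for profiles with $|\Omega(o(R))|=2$, and a profile $R$ with $p(R)=\mathbf p$ and $\omega(\mathbf p)\in\Omega^2$ may well have $|\Omega(o(R))|=1$ (see Step 5 of the proof of Proposition \ref{barbera}), so the hypothesis need not hold. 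The fix is already in your toolbox: if $f(R)=\underline\omega(\mathbf p)$ with $L(R)=\emptyset$, then $\emptyset=L(R)\subseteq L(R')$ for every $R'$ with $p(R')=\mathbf p$, so your iterated claim forces $f\equiv\underline\omega(\mathbf p)$ on that subdomain, contradicting $\overline\omega(\mathbf p)\in\Omega(\mathbf p)$.
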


\begin{proof} By Corollary \ref{structure0}, it only remains to be shown that $g_{\mathbf{p}}$ can be defined as a voting by collections of left-decisive sets. This is equivalent to show that there is a set of coalitions $W(g_{\bf{p}}) \subseteq 2^N$ that satisfies the conditions in Definition \ref{manjunath}. Given $\mathbf{p}\in\Omega^{a}$ such that $\omega({\bf p}) \in \Omega^2$, we define, for each $g_{\bf{p}}$, a set $W(g_{\bf{p}}) \subseteq 2^N$ in the following way: $B \in W(g_{\bf{p}})$ if there is $R \in {\cal R}$ such that $p(R)=\mathbf{p}$, $L(R) = B$, and $g_{\bf{p}}(R) = \underline{\omega}({\bf p})$. Observe that, by definition, we have that for each $R\in{\cal R}$, $W(g_{p(R)}) \subseteq 2^{(N(R)\cap A)\cup D}$.

\bigskip

\noindent \textit{Step 1: We show that if for some $\mathbf{p}\in \Omega^{a}$ such that $\omega({\bf p}) \in \Omega^2$, $B \in W(g_{\bf{p}})$, then for each $R' \in {\cal R}$ such that $p(R') = \bf{p}$ and $L(R') = B$, $g_{\bf{p}}(R') = \underline{\omega}({\bf p})$.}

\medskip

\noindent Let $\mathbf{p}\in \Omega^{a}$ such that $\omega({\bf p}) \in \Omega^2$ and $B \in W(g_{\bf{p}})$. Then, by definition, there is $R \in {\cal R}$ such that $p(R)=\bf{p}$, $L(R) = B$, and $g_{\bf{p}}(R) = \underline{\omega}({\bf p})$. Consider $\bar{R} \in {\cal R}$ such that $p(\bar{R}) = \bf{p}$ and $L(\bar{R}) = B$. Suppose by contradiction that $g_{\bf{p}}(\bar{R}) = \overline{\omega}({\bf p})$. Starting at $R$, construct a sequence of profiles in which the preferences of all agents $i \in [(N(R)\cap A)\cup D]$ are changed one by one from $R_i$ to $\bar{R}_i$ so that the sequence ends at $(\bar{R}_{(N(R)\cap A)\cup D}, R_{-((N(R)\cap A)\cup D)})$. Since $o(\bar{R}) = o(\bar{R}_{(N(R)\cap A)\cup D}, R_{-((N(R)\cap A)\cup D)})$, by Lemma \ref{internal} $f(\bar{R}) = f(\bar{R}_{(N(R)\cap A)\cup D}, R_{-((N(R)\cap A)\cup D)})$. We can therefore conclude that $g_{\bf{p}}(\bar{R}_{(N(R)\cap A)\cup D}, R_{-((N(R)\cap A)\cup D)}) = \overline{\omega}(\bf{p})$. It follows from $g_{\bf{p}}(R) \neq g_{\bf{p}}(\bar{R}_{(N(R)\cap A)\cup D}, R_{-((N(R)\cap A)\cup D)})$ that the outcome must have changed along the sequence. So, let $S \subset [(N(R)\cap A)\cup D]$ be the set of agents that have changed preferences in the sequence the last time $g_{\bf{p}}$ selects $\underline{\omega}({\bf p})$, and let $i\in [(N(R)\cap A)\cup D]$ be the next agent changing preferences in the sequence. Then, $g_{\bf{p}}(R) = g_{\bf{p}}(\bar{R}_S, R_{-S})= \underline{\omega}({\bf p}) \neq \overline{\omega}({\bf p}) = g_{\bf{p}}(\bar{R}_{S \cup \{i\}}, R_{-(S \cup \{i\})}) = g_{\bf{p}}(\bar{R}_{(N(R)\cap A)\cup D}, R_{-((N(R)\cap A)\cup D)})$. Therefore, $f(R) = f(\bar{R}_S, R_{-S}) = \underline{\omega}(p(R)) \neq  \overline{\omega}(p(R)) = f(\bar{R}_{S \cup \{i\}}, R_{-(S \cup \{i\})}) = f(\bar{R}_{(N(R)\cap A)\cup D},$ $R_{-((N(R)\cap A)\cup D)})$. If $i \in B$, $\underline{\omega}(p(R)) \, \bar{P}_i \, \overline{\omega}(p(R))$ and agent $i$ manipulates $f$ at $(\bar{R}_{S \cup \{i\}},$ $R_{-(S \cup \{i\})})$ via $R_i$. Otherwise, if $i \notin B$, $\overline{\omega}(p(R)) \, P_i \, \underline{\omega}(p(R))$ and agent $i$ manipulates $f$ at $(\bar{R}_S, R_{-S})$ via $\bar{R}_i$.

\bigskip

\noindent \textit{Step 2: We show that if for some $R\in{\cal R}$, $B \in W(g_{p(R)})$ and $B \subset C \subseteq [(N(R)\cap A)\cup D]$, then $C \in W(g_{p(R)})$.}

\medskip

\noindent Consider $R\in{\cal R}$, $B\in W(g_{p(R)})$, and $C \subseteq [(N(R)\cap A)\cup D]$ such that $B\subset C$. Suppose by contradiction that $C \notin W(g_{p(R)})$. Consider $R' \in {\cal R}$ such that $p(R') =p(R)$ and $L(R')=B$. Note that since $B \in W(g_{p(R)})=W(g_{p(R')})$, then, by Step 1, $g_{p(R)}(R')=\underline{\omega}(p(R))$ and, therefore, $f(R') = \underline{\omega}(p(R))$. Consider now $R''_{C\setminus B} \in {\cal R}^{C \setminus B}$ such that $p(R''_{C\setminus B}) = p(R_{C\setminus B})$ and for each $j\in C\setminus B$, $\underline{\omega}(p(R)) \, P''_j \, \overline{\omega}(p(R))$. Observe that $p(R''_{C\setminus B}, R'_{-(C \setminus B)}) = p(R)$ and, then $\omega(p(R''_{C\setminus B}, R'_{-(C \setminus B)})) = \omega(p(R))$. Since $L(R''_{C\setminus B}, R'_{-(C \setminus B)})=C\notin W(g_{p(R)})=W(g_{p(R''_{C\setminus B}, R'_{-(C \setminus B)})})$, we have $g_{p(R)}(R''_{C\setminus B}, R'_ {- (C\setminus B)})=\overline{\omega}(p(R))$. Thus, $f(R''_{C\setminus B}, R'_{-(C \setminus B)}) = \overline{\omega}(p(R))$. However, the agent set $C \setminus B$ manipulates $f$ at this profile via $R'_{C\setminus B}$ to obtain $\underline{\omega}(p(R))$ and $f$ is not GSP. By Lemma \ref{equivalence}, $f$ is not SP.
\bigskip

\noindent \textit{Step 3: We show that for each $\mathbf{p} \in \Omega^a$ such that $\omega({\bf p}) \in \Omega^2$, $\emptyset \notin W(g_{\bf{p}}) \neq \emptyset$.}

\medskip

\noindent Consider $\mathbf{p}\in \Omega^{a}$ such that $\omega({\bf p}) \in \Omega^2$. We only prove that $W(g_{\bf{p}}) \neq \emptyset$ because the other part is similar and thus omitted. To do it, we are going to show that, given $R \in {\cal R}$ such that $p(R) = \bf{p}$, and $L(R) = [(N(R)\cap A)\cup D]$, then $g_{\bf{p}}(R) = \underline{\omega}({\bf p})$ and, therefore, $[(N(R)\cap A)\cup D] \in W(g_{\bf{p}})$. Suppose otherwise that $g_{\bf{p}}(R) = \overline{\omega}({\bf p})$ and, therefore, $f(R) = \overline{\omega}({\bf p})$. Since $\underline{\omega}({\bf p}) \in \omega({\bf p})$, there is $R' \in {\cal R}$ such that $p(R') = \bf{p}$ and $f(R') = \underline{\omega}({\bf p})$. Then, starting at $R$, consider a sequence of profiles in which the preferences of all agents $i \in N$ are changed one by one from $R_i$ to $R'_i$ so that the sequence ends at $R'$. Observe that all profiles of the sequence have the same vector of $\Omega$-restricted peaks than $R$ and, therefore, $f$ chooses in each of them either $\underline{\omega}({\bf p})$ or $\overline{\omega}({\bf p})$. Since $f(R) \neq f(R')$, the outcome must have changed along the sequence. So, let $S \subset N$ be the set of agents that have changed preferences in the sequence the last time the rule selects $f(R)$, and let $i\in N$ be the next agent changing preferences in the sequence. Then, $f(R) = f(R'_S, R_{-S}) = \overline{\omega}({\bf p}) \neq  \underline{\omega}({\bf p}) = f(R'_{S \cup \{i\}}, R_{-(S \cup \{i\})}) = f(R')$. If $i\in L(R)$, then $\underline{\omega}({\bf p}) \, P_i \, \overline{\omega}({\bf p})$ and agent $i$ manipulates $f$ at $(R'_S, R_{-S})$ via $R'_i$. Otherwise, if $i\notin L(R)$, then $i\in A\setminus (N(R)\cap A)$. By Lemma \ref{internal}, $f(R'_S, R_{-S}) = f(R'_{S \cup \{i\}}, R_{-(S \cup \{i\})})$, which is a contradiction.
\end{proof}

\subsubsection*{First part of the proof of Proposition~\ref{omega} $(i)$}\label{firststepProp2}

\noindent This first part shows that if $\omega({\bf p})$ belongs to $\Omega^2$ for some vector of $\Omega$-restricted peaks ${\bf p}$, then these alternatives are contiguous in $\Omega$ and, thus, $\omega({\bf p}) \in \Omega^2_C$.

\medskip

\noindent A series of lemmas is needed to prove this statement.
Throughout these lemmas we consider $i \in A$ and $R, (R'_i,R_{-i}) \in {\cal R}$ such that $\omega(p(R))\in \Omega^2$. For simplicity, we denote $\underline{\omega}(p(R))\equiv l< r \equiv \overline{\omega}(p(R))$,  $l'\equiv\underline{\omega}(p(R'_i,R_{-i}))$, and $r' \equiv \overline{\omega}(p(R'_i,R_{-i}))$ (possible $l'=r'$). %The concrete objective is to show in Lemma \ref{movepeaks} that if $p(R'_i) \in (l, r)$, then $\{l', r'\} = \{l, r\}$.

\medskip

\begin{lemma}\label{mediopeaks}
Let $f: {\cal R} \rightarrow \Omega$ be SP. If $p(R_i)\in (l,r)$, then $\{l', r'\} \cap (l, r) = \emptyset$.	
\end{lemma}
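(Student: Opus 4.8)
The plan is to argue by contradiction. Suppose $\{l',r'\} \cap (l,r) \neq \emptyset$ and fix some $v \in \{l',r'\}$ with $v \in (l,r)$. Since $\{l',r'\} = \Omega(p(R'_i,R_{-i}))$, the value $v$ is attained by $f$ somewhere on the subdomain of peaks $p(R'_i,R_{-i})$: there is a profile $\bar R \in {\cal R}$ with $p(\bar R) = p(R'_i,R_{-i})$ and $f(\bar R) = v$. The structural fact I will exploit is that the preselected set depends only on the vector of peaks, so I am free to replace agent $i$'s report by any single-peaked preference with the same peak without altering which pair is preselected. In particular, the profiles I build from $\bar R$ by changing only agent $i$'s report stay in the relevant subdomains.

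The core step is to construct a convenient report $\hat R_i \in {\cal R}_i$ for agent $i$ with peak $p(R_i)$ such that $v \, \hat P_i \, l$ and $v \, \hat P_i \, r$ simultaneously. Because $v, p(R_i) \in (l,r)$, single-peakedness already forces $v$ above the endpoint on the same side of the peak: if $v \leq p(R_i)$ then $l < v \leq p(R_i)$ yields $v \, \hat P_i \, l$, while the cross-side comparison of $v$ with $r$ is unconstrained and can be set to $v \, \hat P_i \, r$; the case $v \geq p(R_i)$ is symmetric with the roles of $l$ and $r$ interchanged, and if $v = p(R_i)$ then $v$ is the peak and beats everything. Now consider the profile $(\hat R_i, \bar R_{-i})$. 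Its vector of peaks equals $p(R)$, so its outcome lies in $\Omega(p(R)) = \omega(p(R)) = \{l,r\}$; call it $m$. Replacing $\hat R_i$ by $\bar R_i$ turns this profile into $\bar R$, whose outcome is $v$. Since $v \in (l,r)$ we have $v \neq m$, and by construction $v \, \hat P_i \, m$ for either possible value $m \in \{l,r\}$. Hence agent $i$, with true preference $\hat R_i$, manipulates $f$ at $(\hat R_i, \bar R_{-i})$ via $\bar R_i$, contradicting SP. This is a single-agent deviation, so SP is contradicted directly. Running the argument for each $v \in \{l',r'\} \cap (l,r)$ gives $\{l',r'\} \cap (l,r) = \emptyset$.

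The step I expect to require the most care is the construction of $\hat R_i$: I must place $v$ strictly above \emph{both} endpoints while respecting single-peakedness with peak $p(R_i)$, which is precisely why the case distinction $v \leq p(R_i)$ versus $v \geq p(R_i)$ is needed. On each side single-peakedness pins down the comparison with the near endpoint for free, and the only genuine degree of freedom is the cross-side comparison with the far endpoint; once that is set appropriately, $v$ is the top element among $\{l,v,r\}$. As elsewhere in the paper, I would simply assert (with verification available on request) that the realizing profile $\bar R$ and the single-peaked preference $\hat R_i$ with the stated ranking exist. Everything else is routine, since the peak-only dependence of the first step confines the outcomes of $(\hat R_i, \bar R_{-i})$ and $\bar R$ to $\{l,r\}$ and to $\{l',r'\}$, respectively.
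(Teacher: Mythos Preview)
Your proof is correct and follows essentially the same approach as the paper's: both pick a profile realizing the offending value $v\in\{l',r'\}\cap(l,r)$, construct a single-peaked preference with peak $p(R_i)$ that ranks $v$ above both $l$ and $r$, and exhibit a single-agent manipulation. The only cosmetic differences are that the paper treats the case $l'\in(l,r)$ and dismisses $r'\in(l,r)$ as symmetric, whereas you handle both at once via the generic $v$, and you spell out the case distinction $v\le p(R_i)$ versus $v\ge p(R_i)$ that the paper relegates to a footnote.
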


\begin{proof} Suppose by contradiction that $p(R_i) \in (l, r)$ but $l' \in (l, r)$.
The case when $r' \in (l, r)$ is similar and thus omitted. 
Let $\bar{R}' \in {\cal R}$ be such that $p(\bar{R}')=p(R'_i,R_{-i})$ and $f(\bar{R}')=l'$. 
Consider now $\bar{R}_i \in {\cal R}_i$ such that $p(\bar{R}_i)=p(R_i)$ and for each $v\in \{l,r\}$, $l' \, \bar{P}_i \, v$. 
Since $p(\bar{R}_i, \bar{R}'_{-i}) = p(R)$, $\omega(p(\bar{R}_i, \bar{R}'_{-i}))=\{l,r\}$. 
Thus, $f(\bar{R}_i, \bar{R}'_{-i}) \in \{l, r\}$ and agent $i$ manipulates $f$ at this profile via $\bar{R}'_i$ to obtain $l'$.
\end{proof}

\begin{lemma}
\label{doble}
Let $f: {\cal R} \rightarrow \Omega$ be SP. If $[p(R_i) \leq l$, $p(R'_i) \in (l, r)$, and $r' = r]$ or $[p(R_i) \geq r$, $p(R'_i) \in (l, r)$, and $l' = l]$, then $\{l', r'\} = \{l, r\}$.
\end{lemma}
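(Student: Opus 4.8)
The plan is to use the symmetry stated in the lemma and prove only the first case: assume $p(R_i)\le l$, write $q:=p(R'_i)\in(l,r)$, and $r'=r$, and show $l'=l$. Since $l'\le r'=r$ always holds, it is enough to establish $l\le l'$ and $l'\le l$ separately.

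First I would prove $l\le l'$. Suppose $l'<l$ and pick a profile $\bar R'$ realizing $\underline{\omega}(p(R'_i,R_{-i}))=l'$. Because agent $i$'s peak satisfies $l'<l<q$, single-peakedness forces $l\,\bar P'_i\,l'$ for \emph{every} admissible $\bar R'_i$. Now replace $\bar R'_i$ by a preference $\bar R''_i$ with peak $p(R_i)\le l$; this turns the peak vector into $p(R)$, so $f(\bar R''_i,\bar R'_{-i})\in\{l,r\}$, and one may choose $\bar R''_i$ with $l'\,\bar P''_i\,r$ (this ranking is either forced or free to impose, since $l'<l$ and the peak of $\bar R''_i$ is at most $l$). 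If the outcome is $l$, agent $i$ manipulates $\bar R'$ via $\bar R''_i$ (using $l\,\bar P'_i\,l'$); if it is $r$, agent $i$ manipulates $(\bar R''_i,\bar R'_{-i})$ via $\bar R'_i$ (using $l'\,\bar P''_i\,r$). Either way SP is violated, so $l'\ge l$.

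The substantive part is $l'\le l$, i.e.\ ruling out $l'\in(l,r]$. Suppose $l'>l$. The engine is a one-deviation argument in which the other agents are frozen at $R^*_{-i}$, taken from a profile $R^*$ that realizes $l$ at the peak vector $p(R)$ (such $R^*$ exists because $l\in\Omega(p(R))$); then any preference for agent $i$ with restricted peak $p(R_i)\le l$ returns $f=l$. Placing agent $i$ instead at restricted peak $q$ gives an outcome in $\{l',r\}$ (or $r$ when $l'=r$), and by Lemma \ref{independencia} I may keep the restricted peak equal to $q$ while choosing the true peak so that the manipulation is judged against a preference that strictly ranks $l$ above the alternative actually selected; deviating to $R^*_i$ then secures $l$, contradicting SP. The singleton case $l'=r$ is the easy instance, since the outcome at $p(R'_i,R_{-i})$ is forced to $r$ independently of the frozen agents.

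The hard part, and the step I expect to be the main obstacle, is the genuine pair $l'\in(l,r)$ together with the configurations in which agent $i$'s restricted peak lies in the right half of $(l,r)$. There agent $i$ can be \emph{pivotal} for the binary choice between $l'$ and $r$ — she can always obtain her favorite of the two and is forced by single-peakedness to prefer $l'$ to $l$ — so no deviation by agent $i$ herself is profitable, and the contradiction cannot come from $i$ alone. Handling this requires using the interior alternative $l'$ as an anchor and the voting-by-collections structure of the second step (Lemma \ref{structure2}) together with Lemma \ref{mediopeaks} to relocate the manipulation onto a suitably chosen agent and a suitably realized outcome; the accompanying bookkeeping of which true peaks are compatible with a prescribed restricted peak is where the care is needed. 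Everything else reduces to single-peakedness and the one-at-a-time SP deviations used throughout the preceding lemmas.
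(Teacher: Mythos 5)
Your first half (ruling out $l'<l$) is correct and is essentially a direct-manipulation variant of the paper's Step~1; your one-deviation ``engine'' with the frozen subprofile $R^*_{-i}$ also correctly disposes of the cases $l'=r'$, $l'>p(R'_i)$, and the case where the frozen profile returns $r$. But the lemma is not proved, because the configuration you yourself flag as ``the main obstacle'' --- $l'\in(l,p(R'_i)]$ with the frozen profile returning $l'$ --- is exactly the substantive content of the lemma, and for it you offer only a pointer (``use the voting-by-collections structure \dots to relocate the manipulation onto a suitably chosen agent'') rather than an argument. In that configuration single-peakedness forces $l'\,P_i\,l$, so agent $i$ strictly gains from the deviation in one direction and strictly loses in the other, and no unilateral deviation by $i$ between the peak vectors $p(R)$ and $p(R'_i,R_{-i})$ can ever produce a contradiction. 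The contradiction has to come from somewhere else entirely, and identifying where is the whole difficulty.

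The paper's resolution occupies Steps~2--4 and is genuinely nontrivial. Step~2 shows that every left-decisive coalition $C\in W(g_{p(R'_i,R_{-i})})$ for the pair $(l',r)$ projects to a left-decisive coalition $C\cap D\in W(g_{p(R)})$ for the pair $(l,r)$; this is done by constructing a profile $\hat R$ in which exactly the members of $C$ rank $l'$ above $r$, exploiting that for agents in $D$ the ranking $l'\,P\,r$ forces $l\,P\,r$, and letting agent $i$ (who prefers $l'$ to $r$ because $p(R_i)\le l<l'<r$) switch peaks. Step~3 shows that $W(g_{p(R'_i,R_{-i})})$ contains a coalition disjoint from $D$, and this splits into two different constructions according to whether $l'=p(R'_i)$ or $l'\in(l,p(R'_i))$; in the second subcase one shows $\{i\}$ itself is left-decisive, using the monotonicity of the left-decisive sets established in Lemma~\ref{structure2}. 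Step~4 combines the two to conclude $\emptyset\in W(g_{p(R)})$, contradicting the non-emptiness requirement in Definition~\ref{manjunath}. None of these constructions, nor any substitute for them, appears in your proposal, so as it stands the proof has a gap precisely where the lemma's difficulty is concentrated.
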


\begin{proof} Due to symmetry reasons we only prove the case when $p(R_i)\leq l$, $p(R'_i) \in (l, r)$, and $r' = r$. 
Suppose by contradiction that $l' \neq l$.

\bigskip

\noindent \textit{Step 1: We show that $l' \in (l, p(R'_i)]$.}

\medskip

\noindent Suppose by contradiction that $l' \not\in (l, p(R'_i)]$. 
If $l' < l$, then $p(R'_i) \in (l',r')$.
Then, by Lemma \ref{mediopeaks} ---with $R'_i$ playing the role of $R_i$ and vice versa---, $l \not \in (l',r')$, which contradicts the assumption $l' < l$.
Otherwise, if  $l'> p(R'_i)$, consider $\bar{R}\in{\cal R}$ such that $p(\bar{R})=p(R)$ and $f(\bar{R})=l$. 
Also, consider $\bar{R}'_i\in{\cal R}_i$ such that $p(\bar{R}'_i)=p(R'_i)$ and $l \, \bar{P}'_i \, l'$.
Since $p(\bar{R}'_i, \bar{R}_{-i})=p(R'_i,R_{-i})$, $f(\bar{R}'_i, \bar{R}_{-i})\in \{l', r'\}$. 
Therefore, agent $i$ manipulates $f$ at this profile via $\bar{R}_i$ to obtain $l$. 

\bigskip

\noindent \textit{Step 2: We show that for each $C \in W(g_{p(R'_i,R_{-i})})$, $C \cap D \in W(g_{p(R)})$.}

\medskip

\noindent Let $C \in W(g_{p(R'_i,R_{-i})})$. 
Suppose by contradiction that $C \cap D \notin W(g_{p(R)})$. 
By Step 1, $l' \in (l, p(R'_i)]$. 
Consider $\hat{R} \in {\cal R}$ such that

\begin{enumerate}
\item[($i$)] $p(\hat{R}) = p(R'_i,R_{-i})$,

\item[($ii$)] for each $j \in [(N(\hat{R})\cap A)\cup D]$, $l' \, \hat{P}_j \, r$ if and only if $j \in C$, and 

\item[($iii$)] for each $k \in [N(R)\cap A] \cup [D\setminus C]$, $r \, \hat{P}_k \, l$.
\end{enumerate}

\noindent Then, $\omega(p(\hat{R}))=\{l',r\}$ and $L(\hat{R}) = C$. 
Observe that for each $j \in D$ such that $l' \, \hat{P}_j \, r$, we also have that $l \, \hat{P}_j \, r$. 
Since $p(R_i, \hat{R}_{-i}) = p(R)$, $\omega(p(R_i, \hat{R}_{-i}))=\{l,r\}$. 
By construction, $L(R_i, \hat{R}_{-i}) = C \cap D$. 
Since $C \in W(g_{p(R'_i,R_{-i})}) = W(g_{p(\hat{R})})$ and $C \cap D \notin W(g_{p(R)}) = W(g_{p(R_i, \hat{R}_{-i})})$, we have that $f(\hat{R}) = l'$ and $f(R_i, \hat{R}_{-i}) = r$. 
Finally, $p(R_i) \leq l < l' < r$ implies that $l' \, P_i \, r$. 
So, agent $i$ manipulates $f$ at $(R_i, \hat{R}_{-i})$ via $\hat{R}_i$.

\bigskip

\noindent \textit{Step 3: We show that there exists $B \in W(g_{p(R'_i,R_{-i})})$ such that $B \cap D = \emptyset$.}

\medskip

\noindent By Step 1, $l' \in (l, p(R'_i)]$. The proof is divided into two cases depending on $l'$. 

\begin{itemize}
\item Suppose that $l' = p(R'_i)$. 
By Lemma \ref{structure2} (exactly by the non-emptyness requirement in Definition \ref{manjunath}), $W(g_{p(R)}) \neq \emptyset$. 
Let $C \in W(g_{p(R)})$. 
Observe that $[C \cap N(R'_i, R_{-i}) \cap A] \cap D = \emptyset$. 
We complete the proof by showing that $[C \cap N(R'_i, R_{-i}) \cap A] \in W(g_{p(R'_i, R_{-i})})$.
Suppose by contradiction that $[C \cap N(R'_i, R_{-i}) \cap A] \notin W(g_{p(R'_i, R_{-i})})$. 
Consider $\tilde{R} \in {\cal R}$ such that
\begin{enumerate}
\item[($i$)] $p(\tilde{R}) = p(R)$,

\item[($ii$)] for each $j \in (N(\tilde{R})\cap A)$, $l \, \tilde{P}_j \, r$ if and only if $j \in C$,

\item[($iii$)] for each $k \in D$, $d(\tilde{R}_k)=l'$ and [$l \, \tilde{P}_k \, r$ if and only if $k \in C$], and

\item[($iv$)] for each $m \in [N(R'_i, R_{-i}) \cap  (A \setminus (C\cap A))]$, $r \, \tilde{P}_m \, l'$.
\end{enumerate}

Then, $\omega(p(\tilde{R}))=\{l,r\}$ and $L(\tilde{R}) = C$. 
Consider  $\tilde{R}'_i \in {\cal R}_i$ such that $p(\tilde{R}'_i) = p(R'_i)$ and $l \, \tilde{P}'_i \, r$. 
Observe that for each $j\in N(R'_i,R_{-i})\cap A$ such that $l \, \tilde{P}_j \, r$, we have that $l' \, \tilde{P}_j \, r$. 
Since $p(\tilde{R}'_i, \tilde{R}_{-i}) = p(R'_i, R_{-i})$, $\omega(p(\tilde{R}'_i, \tilde{R}_{-i}))=\{l',r\}$. 
By construction, $L(\tilde{R}'_i, \tilde{R}_{-i}) = C \cap N(R'_i, R_{-i}) \cap A$. 
Since $C \in W(g_{p(R)}) = W(g_{p(\tilde{R})})$ and $[C \cap N(R'_i, R_{-i}) \cap A] \notin W(g_{p(R'_i,R_{-i})}) = W(g_{p(\tilde{R}'_i,\tilde{R}_{-i})})$, we have that $f(\tilde{R}) = l$ and $f(\tilde{R}'_i, \tilde{R}_{-i}) = r$. 
Therefore, agent $i$ manipulates $f$ at $(\tilde{R}'_i, \tilde{R}_{-i})$ via $\tilde{R}_i$.

\item Suppose that $l' \in (l, p(R'_i))$. 
We complete the proof by showing that $\{i\} \in W(g_{p(R'_i, R_{-i})})$.
Suppose by contradiction that $\{i\} \notin W(g_{p(R'_i, R_{-i})})$.
Consider $R'' \in {\cal R}$ such that
\begin{enumerate}
\item[($i$)] $p(R'') = p(R'_i,R_{-i})$,

\item[($ii$)] $l' \, P''_i \, l \, P''_i \, r$,

\item[($iii$)] for each $j \in D$, $l \, P''_j \, r \, P''_j \, l'$,

\item[($iv$)] for each $k \in [(N(R'_i,R_{-i})\cap A) \setminus \{i\}] $, $r \, P''_k \, l'$, and

\item[($v$)] for each $m \in [(N(R) \setminus N(R'_i,R_{-i}))\cap A]$, $l \, P''_m \, r$.
\end{enumerate}

Then, $\omega(p(R''))=\{l',r\}$ and $L(R'')=\{i\}$. 
Since $p(R_i, R''_{-i})=p(R)$, we have that $\omega(p(R_i, R''_{-i}))=\{l,r\}$. 
By construction, $L(R_i, R''_{-i}) = [(N(R) \setminus N(R'_i,R_{-i}))\cap A]\cup D$. 
Since $\{i\} \notin W(g_{p(R'_i,R_{-i})})=W(g_{p(R'')})$, we have that $f(R'')=r$. 
We know from Lemma \ref{structure2} ---exactly from the non-emptiness requirement and the second point in Definition \ref{manjunath}--- that $[(N(R'_i,R_{-i}) \cap A) \cup D]\in W(g_{p(R'')}) = W(g_{p(R'_i, R_{-i})})$. 
By Step 2, $[(N(R'_i,R_{-i}) \cap A )\cup D] \cap D =D \in W(g_{p(R)}) = W(g_{p(R_i, R''_{-i})})$. 
Given that $ D \subseteq [(N(R) \setminus N(R'_i,R_{-i}))\cap A]\cup D$, it follows from Lemma \ref{structure2} ---exactly from the second point in Definition \ref{manjunath}--- that $[(N(R) \setminus N(R'_i,R_{-i}))\cap A]\cup D\in W(g_{p(R_i, R''_{-i})})$. Then, $f(R_i, R''_{-i})=l$. 
Therefore, agent $i$ manipulates $f$ at $R''$ via $R_i$.
\end{itemize}

\noindent \textit{Step 4: We find a contradiction.}

\medskip

\noindent By Step 3, there exists $B \in W(g_{p(R'_i,R_{-i})})$ such that $B \cap D = \emptyset$. 
Then, by Step 2 ---with $B$ playing the role of $C$--- we have that $B \cap D \in W(g_{p(R)})$. 
Therefore, $\emptyset \in W(g_{p(R)})$. 
Since $\omega(p(R)) \in \Omega^2$, this contradicts Lemma \ref{structure2} (exactly the non-emptiness requirement in Definition \ref{manjunath}). \end{proof}

\begin{lemma}
\label{nosalto}
Let $f: {\cal R} \rightarrow \Omega$ be SP.
\begin{itemize}
\item[$(i$)] If $p(R_i) \leq l$, then $l' \geq l$ and $r' \geq r$.
\item[$(ii$)] If $p(R_i) \geq r$, then $l' \leq l$ and $r' \leq r$.
\end{itemize}
\end{lemma}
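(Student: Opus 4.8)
The plan is to reduce to part $(i)$ by the left--right symmetry of the domain, and then to prove the two inequalities $r'\ge r$ and $l'\ge l$ separately, each by exhibiting a single-agent manipulation. Throughout I would use that, since $i\in A$ and $p(R_i)\le l$, we have $o(R_i)=p(R_i)\notin(l,r)$, so $i\notin N(R)$ and, by Lemma \ref{internal}, agent $i$'s ranking never affects the outcome at profiles whose $\Omega$-restricted peaks equal $p(R)$. I would also lean on Lemma \ref{structure2}: $g_{p(R)}$ returns $\overline{\omega}(p(R))=r$ exactly when the set $L(\cdot)$ of ``$l$-supporters'' contains no left-decisive coalition, $\emptyset\notin W(g_{p(R)})$, and the full relevant coalition $(N(R)\cap A)\cup D$ is always left-decisive.

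For $r'\ge r$ I would argue by contradiction: assume $r'<r$, so $l'\le r'<r$. First build a profile $\hat R$ with $p(\hat R)=p(R)$ in which no relevant agent prefers $l$ to $r$ --- placing every dip at $l$ and setting every agent of $A$ with peak in $(l,r)$ to prefer $r$ makes $L(\hat R)=\emptyset\notin W(g_{p(R)})$, hence $f(\hat R)=r$. Because $i\notin N(R)$, I may freely take $\hat R_i$ to be any single-peaked preference with peak $p(R_i)$, in particular one ranking \emph{both} $l'$ and $r'$ above $r$; such a preference exists because $r$ is the rightmost of $\{l',r',r\}$ and lies on the far side of the peak, so it can be ranked below both. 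When agent $i$ deviates from $\hat R_i$ to $R'_i$, the peaks become $p(R'_i,R_{-i})$, the outcome lies in $\{l',r'\}$, and whichever of the two is selected is strictly preferred to $r$ under $\hat R_i$; this contradicts strategy-proofness, so $r'\ge r$.

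For $l'\ge l$ I would again argue by contradiction: assume $l'<l$, so with $r'\ge r$ we get $l'<l\le r\le r'$ and $\{l',r'\}$ is a genuine pair. The ``prefer everything to $r$'' trick no longer suffices (as $r'$ need not lie below $r$), so I instead force the \emph{deviated} decision to return $l'$. I construct $\hat R$ with $p(\hat R)=p(R)$ that does double duty: every dip is placed at $l$, and every agent whose $\Omega$-restricted peak/dip lies in $(l',r')$ is set to prefer $r$ over $l$ and simultaneously $l'$ over $r'$. This is consistent because for any such agent $l'$ and $r'$ lie on opposite sides of (or coincide with) its peak/dip --- here $l'<\mathrm{loc}<r'$ uses $r'\ge r$ --- so the $l'$-versus-$r'$ comparison is free and independent of $l$-versus-$r$; as before this gives $L(\hat R)=\emptyset$ and $f(\hat R)=r$, and I take $\hat R_i$ with $l'\,\hat P_i\,r$, which is possible since $p(R_i)\le l<r$. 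Agent $i$ then reports some single-peaked preference with peak $p(R'_i)$ ranking $l'$ above $r'$ (or simply irrelevant when $p(R'_i)\notin(l',r')$): the relevant ``$l'$-supporters'' now form the full coalition $(N(R'_i,R_{-i})\cap A)\cup D$, which is left-decisive, so the outcome is $l'$, strictly preferred to $r$ under $\hat R_i$ --- again a contradiction, giving $l'\ge l$.

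The main obstacle is precisely the decoupling in the third paragraph: engineering one profile that forces $r$ between $l$ and $r$ yet $l'$ between $l'$ and $r'$. The two facts that make this possible are that agents with $\Omega$-restricted peak/dip in $(l',l]$ are invisible to the $\{l,r\}$-decision (Lemma \ref{internal}) while active in the $\{l',r'\}$-decision, and that the manipulating report need not be the given $R'_i$ but only \emph{some} preference with peak $p(R'_i)$ realizing $l'$. Lemmas \ref{mediopeaks} and \ref{doble} can be invoked to dispose of the residual configurations with $p(R'_i)\in(l,r)$, but the substance of the proof is the two manipulation arguments above.
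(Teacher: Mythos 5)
Your proof is correct, but it takes a genuinely different route from the paper's. The paper first proves only the conditional claim that $l'\ge l$ implies $r'\ge r$ (taking an arbitrary profile with outcome $r$, where the hypothesis $l\le l'$ is what forces $l'$ and $r'$ to sit between agent $i$'s peak and $r$ and hence be automatically preferred to $r$), and then obtains $l'\ge l$ through a lengthy case analysis --- $p(R_i)<l$ versus $p(R_i)=l$, then $r'<r$ versus $r'\ge r$, then three subcases on the position of $p(R'_i)$ --- that leans on Lemmas \ref{mediopeaks} and \ref{doble}. You instead prove $r'\ge r$ unconditionally by engineering $L(\hat R)=\emptyset$ so that $f(\hat R)=r$ is forced by Lemma \ref{structure2}, which leaves you free to tailor $\hat R_i$; and you prove $l'\ge l$ with a single profile that pins down both binary decisions at once. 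The decoupling you rely on does check out: agents of $D$ with dip at $l\in(l',r')$ have $r\succ l$ forced and $l'$ versus $r'$ free; agents of $A$ with peak in $(l,r)$ admit the single-peaked order $r\succ l\succ l'\succ r'$; and agents with peak in $(l',l]$ have $l\succ r$ forced but lie outside $N(\hat R)\cap A$ and hence outside $L_{\{l,r\}}(\hat R)$, exactly as you observe. Two minor quibbles: your preamble overstates Lemma \ref{internal}, which requires the whole vector $o(R)$ (dips included) to be held fixed rather than just $p(R)$ --- though your actual arguments run through Lemma \ref{structure2}, so nothing breaks; and the closing appeal to Lemmas \ref{mediopeaks} and \ref{doble} is unnecessary, since your two constructions already cover every position of $p(R'_i)$. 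Your route is shorter and more uniform and dispenses with those auxiliary lemmas for this step; the paper's route avoids fully engineered profiles but pays for it with the case analysis.
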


\begin{proof} Due to symmetry reasons we only prove ($i$).
Suppose that $p(R_i)\leq l$. 
We divide the proof into two steps.

\bigskip

\noindent \textit{Step 1: We show that if $l' \geq l$, then $r' \geq r$.}

\medskip

\noindent Suppose by contradiction that $p(R_i) \leq l\leq l'$ but $r' < r$. 
Consider $\bar{R} \in{\cal R}$ such that $p(\bar{R})=p(R)$ and $f(\bar{R}) = r$. 
Since $p(R'_i, \bar{R}_{-i}) = p(R'_i, R_{-i})$, $f(R'_i, \bar{R}_{-i}) \in \{l', r'\}$. 
It follows from $p(\bar{R}_i) \leq l' \leq r' < r$ that for each $w \in \{l', r'\}$, $w \, \bar{P}_i \, r$. 
Then, agent $i$ manipulates $f$ at $\bar{R}$ via $R'_i$.

\bigskip

\noindent \textit{Step 2: We show that $l' \geq l$.}

\medskip

\noindent Suppose by contradiction that $l' < l$. 

\medskip

\noindent First, suppose that $p(R_i) < l$ and consider $\hat{R}' \in {\cal R}$ such that $p(\hat{R}')=p(R'_i,R_{-i})$ and $f(\hat{R}')=l'$. 
Also, let $\hat{R}_i \in {\cal R}_i$ be such that $p(\hat{R}_i) = p(R_i)$ and $l' \, \hat{P}_i \, l$. 
Since $p(\hat{R}_i, \hat{R}'_{-i})=p(R)$, $f(\hat{R}_i, \hat{R}'_{-i}) \in \{l,r\}$. 
Agent $i$ manipulates $f$ at this profile via $\hat{R}'_i$ to obtain $l'$. 
This contradicts that $p(R_i) < l$. Then, assume from now on that $p(R_i) = l$.

\medskip
 
\noindent If $r' < r$, then consider $\tilde{R} \in {\cal R}$ such that
\begin{itemize}
\item[($i$)] $p(\tilde{R}) = p(R)$,

\item[($ii$)] $l' \, \tilde{P}_i \, r$, and

\item[($iii$)] for each $j \in [(N(R)\cap A)\cup D]$, $r \, \tilde{P}_j \, l$.
\end{itemize}

\noindent Then, $\omega(p(\tilde{R}))=\{l,r\}$ and $L(\tilde{R}) = \emptyset$. 
By Lemma \ref{structure2} ---exactly by the non-emptiness requirement in Definition \ref{manjunath}---, $f(\tilde{R}) = r$. 
Since $p(R'_i, \tilde{R}_{-i})=p(R'_i, R_{-i})$, $f(R'_i, \tilde{R}_{-i}) \in \{l',r'\}$. 
Finally, since $w \, \tilde{P}_i \, r$ for each $w \in \{l', r'\}$, agent $i$ manipulates $f$ at $\tilde{R}$ via $R'_i$. This contradicts that $r' < r$. Then, assume from now on that $r \leq r'$. We consider three subcases.

%\medskip

%\noindent Finally, if $r \leq r'$, we consider three subcases.
\begin{itemize}
\item Suppose that $p(R'_i) \in (l', r')$.
It follows from Lemma \ref{mediopeaks} ---with $R'_i$ playing the role of $R_i$ and vice versa--- that $\{l, r\} \cap (l', r') = \emptyset$.
This contradicts that $l \in (l', r')$.

\item Suppose that $p(R'_i) \geq r'$.
Consider $R''\in{\cal R}$ such that $p(R'')=p(R'_i,R_{-i})$ and $f(R'')=l'$. 
Since $p(R_i,R''_{-i})=p(R)$, $f(R_i,R''_{-i}) \in \{l,r\}$. 
Since $v\, P''_i \, l'$ for each $v\in\{l,r\}$, agent $i$ manipulates $f$ at $R''$ via $R_i$.

\item Suppose that $p(R'_i) \leq l'$.
Then, we have that $p(R'_i)\leq l'<l$. 
It follows from Step 1 ---with $R'_i$ playing the role of $R_i$ and vice versa--- that $r \geq r'$. 
Thus, $r' = r$. 
Observe that $p(R'_i) \leq l'$, $p(R_i) \in (l', r')$, and $r = r'$. 
By Lemma \ref{doble} ---with $R'_i$ playing the role of $R_i$ and vice versa---, $\{l, r\} = \{l', r'\}$.
This contradicts that $l' < l$.
\end{itemize}
This concludes the proof of the lemma.  \end{proof}

\begin{lemma}
\label{movepeaks}
Let $f: {\cal R} \rightarrow \Omega$ be SP. If $p(R'_i) \in (l, r)$, then $\{l', r'\} = \{l, r\}$.
\end{lemma}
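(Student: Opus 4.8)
The plan is to argue by cases on the position of the \emph{original} restricted peak $p(R_i)$ relative to the current pair $\{l,r\}$, distinguishing $p(R_i)\le l$, $p(R_i)\ge r$ and $p(R_i)\in(l,r)$. The first two are symmetric, so it suffices to treat $p(R_i)\le l$ and $p(R_i)\in(l,r)$. The recurring device is to apply the already-established Lemmas \ref{mediopeaks}, \ref{doble} and \ref{nosalto} also in the \emph{reverse} direction, i.e.\ with $(R'_i,R_{-i})$ playing the role of the base profile $R$; this is legitimate exactly when $\omega(p(R'_i,R_{-i}))$ is again a pair, that is when $l'<r'$. Hence the singleton collapse $\omega(p(R'_i,R_{-i}))=\{s\}$ must be handled separately, and that is where I expect the real work to be.

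Suppose first $p(R_i)\le l$ and, for the moment, $l'<r'$. Lemma \ref{nosalto}$(i)$ gives $l'\ge l$ and $r'\ge r$, and $p(R'_i)\in(l,r)$ forces $p(R'_i)<r\le r'$. I then locate $p(R'_i)$ inside $\{l',r'\}$: the option $p(R'_i)\ge r'$ is impossible since $p(R'_i)<r'$; the option $p(R'_i)\le l'$ is impossible because the reverse form of Lemma \ref{nosalto}$(i)$ would give $l\ge l'$, contradicting $l'\ge p(R'_i)>l$. Thus $l'<p(R'_i)<r'$, and the reverse form of Lemma \ref{mediopeaks} yields $\{l,r\}\cap(l',r')=\emptyset$; as $l'<r$ this forces $r'\le r$, so (recall $r'\ge r$) $r'=r$, and Lemma \ref{doble} gives $\{l',r'\}=\{l,r\}$. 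The case $p(R_i)\ge r$ is symmetric, using Lemma \ref{nosalto}$(ii)$ and the second clause of Lemma \ref{doble}.

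Now suppose $p(R_i)\in(l,r)$ and again $l'<r'$. Lemma \ref{mediopeaks} gives $\{l',r'\}\cap(l,r)=\emptyset$, so either both of $l',r'$ lie weakly left of $l$, or both weakly right of $r$, or the pair brackets $[l,r]$. If both lie weakly left of $l$ then $p(R'_i)>l\ge r'$, and the reverse form of Lemma \ref{nosalto}$(ii)$ gives $r\le r'\le l$, contradicting $l<r$; the symmetric argument with Lemma \ref{nosalto}$(i)$ excludes both lying weakly right of $r$. In the remaining bracketing case $l'\le l$ and $r'\ge r$, whence $p(R'_i)\in(l',r')$, and the reverse form of Lemma \ref{mediopeaks} gives $\{l,r\}\cap(l',r')=\emptyset$, forcing $l'=l$ and $r'=r$.

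What remains, and what I expect to be the crux, is to exclude the singleton collapse $\omega(p(R'_i,R_{-i}))=\{s\}$ in every branch; by Lemma \ref{mediopeaks} $s\notin(l,r)$, say $s\le l$ (the case $s\ge r$ is symmetric). When $s<l$ this is easy: choosing a completion $\bar R$ of $p(R)$ with $f(\bar R)=l$ (which exists since $l\in\omega(p(R))$), the profile $(R'_i,\bar R_{-i})$ yields $s$, and since $s<l<p(R'_i)$ single-peakedness forces $l\,P'_i\,s$, so agent $i$ profitably deviates from $R'_i$ to $\bar R_i$. The delicate point is $s=l$, since agent $i$ may prefer $l$ to $r$ under $R'_i$ and the naive swap is then unprofitable. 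Here I would exploit that $p(R'_i)$ is \emph{interior}: there is a preference $R^{0}_i$ with the \emph{same} restricted peak $p(R'_i)$ but with $r\,P^{0}_i\,l$, whence $\omega(p(R^{0}_i,R_{-i}))=\omega(p(R'_i,R_{-i}))=\{l\}$; taking a completion $\bar R$ with $f(\bar R)=r$ (as $r\in\omega(p(R))$), the profile $(R^{0}_i,\bar R_{-i})$ yields $l$ while $R^{0}_i$ prefers $r$, so agent $i$ manipulates by reporting $\bar R_i$. This contradiction, with its mirror image for $s\ge r$, removes the singleton case and closes every branch. I anticipate that the main source of care is exactly this bookkeeping—selecting the right completion and the right auxiliary preference—rather than any single hard estimate.
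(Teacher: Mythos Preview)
Your argument is correct and uses the same toolbox (Lemmas \ref{mediopeaks}, \ref{doble}, \ref{nosalto}) as the paper, but the case analysis is laid out differently. The paper first excludes $r'\le p(R'_i)$ and $l'\ge p(R'_i)$ by applying Lemma \ref{nosalto} in reverse, which immediately forces $p(R'_i)\in(l',r')$ (hence $l'<r'$), and only then splits on the position of $p(R_i)$; you instead split on $p(R_i)$ first and carry the pair/singleton distinction through each branch. Your explicit treatment of the singleton collapse is actually a gain in rigor: the paper's reverse applications of Lemma \ref{nosalto} presuppose $\omega(p(R'_i,R_{-i}))\in\Omega^2$, a precondition the paper never checks, so your direct manipulation argument for $\omega(p(R'_i,R_{-i}))=\{s\}$ closes a gap the paper leaves implicit.

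One small wrinkle: in the singleton discussion you write ``by Lemma \ref{mediopeaks} $s\notin(l,r)$'', but Lemma \ref{mediopeaks} needs $p(R_i)\in(l,r)$, so that citation only covers one of your three branches. In the branches $p(R_i)\le l$ and $p(R_i)\ge r$ the correct reference is the forward Lemma \ref{nosalto} (which does not require $l'<r'$): it gives $s=l'=r'\ge r$ (resp.\ $s\le l$) directly, and your symmetric manipulation argument then applies. With that citation corrected, every branch closes as you describe.
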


\begin{proof}
First, if $r' \leq p(R'_i)$, apply Lemma \ref{nosalto}\,($ii$) ---with $R'_i$ playing the role of $R_i$ and vice versa--- to obtain that $l \leq l'$ and $r \leq r'$. This contradicts that $r' \leq p(R'_i) < r$.  
If $l' \geq p(R'_i)$, a similar contradiction is reached by applying Lemma \ref{nosalto}\,$(i)$. 
Hence, $p(R'_i)\in(l',r')$. 
Apply Lemma \ref{mediopeaks} ---with $R'_i$ playing the role of $R_i$ and vice versa--- to see that $\{l,r\} \cap (l', r')=\emptyset$; {\it i.e.}, $\{l',r'\}\subseteq [l,r]$.

\medskip

\noindent If $p(R_i) \in (l, r)$, we have by Lemma \ref{mediopeaks} that $\{l',r'\}\cap (l,r)=\emptyset$. 
Thus, $l' \leq l$ and $r' \geq r$. 
Since $\{l',r'\}\subseteq [l,r]$, we conclude that $\{l', r'\} = \{l, r\}$. 
\medskip

\noindent If $p(R_i) \not\in (l, r)$, we can assume without loss of generality that $p(R_i) \leq l$. 
By Lemma \ref{nosalto}\,$(i)$, $r'\geq r$. 
Since $\{l',r'\}\subseteq [l,r]$, $l' \in [l, p(R'_i))$ and $r' = r$.
Then, $p(R_i)\leq l$, $p(R'_i)\in(l,r)$, and $r'=r$. 
Apply Lemma \ref{doble} to see that $\{l',r'\}=\{l,r\}$.
\end{proof}

\noindent We are now ready to prove this part. 
So, let ${\bf p} \in \Omega^a$ such that $\omega({\bf p}) \in \Omega^2$ and suppose by contradiction that there is an alternative $x \in \Omega \cap (\underline{\omega}({\bf p}), \overline{\omega}({\bf p}))$.  
Then, there exists a profile $\bar{R} \in {\cal R}$ such that $x \in \omega(p(\bar{R}))$. Consider also a profile $R \in {\cal R}$ such that $p(R)=\textbf{p}$ and a subprofile $\hat{R}_A \in {\cal R}^A$ such that for each $i \in A$, $p(\hat{R}_i) = x$. 
Starting at $R$, construct a sequence of profiles in which the preferences of all agents $i\in A$ are changed one by one from $R_i$ to $\hat{R}_i$ so that the sequence ends at $(\hat{R}_A,R_{D})$. 
It follows from the successive application of Lemma \ref{movepeaks} that $\omega(p(\hat{R}_A, R_D)) = \{\underline{\omega}({\bf p}), \overline{\omega}({\bf p})\}$. 
Consider now a profile $\tilde{R}\in{\cal R}$ such that $p(\tilde{R})=p(\bar{R})$ and $f(\tilde{R})=x$. 
Since $p(\hat{R}_A, \tilde{R}_D)=p(\hat{R}_A,R_D)$, $f(\hat{R}_A, \tilde{R}_D) \in \{\underline{\omega}({\bf p}), \overline{\omega}({\bf p})\}$. 
Therefore, the agent set $A$ manipulates $f$ at this profile via $\tilde{R}_A$ to obtain $x$. 
Hence, $f$ is not GSP.
By Lemma \ref{equivalence}, $f$ is not SP.%\bigskip

\subsubsection*{Second part of the proof of Proposition~\ref{omega} $(i)$}\label{seconstepProp2}

\noindent We now show that for each interior alternative $x$ of $\Omega$, there is a vector of $\Omega$-restricted peaks ${\bf p}$ such that only $x$ is preselected by $\omega$, \emph{i.e.}, $\omega({\bf p}) = x$.

\medskip

\noindent Consider $x \in \textbf{int}(\Omega)$ and suppose by contradiction that for each $R \in {\cal R}$, $\omega(p(R)) \neq x$. 
Since $x \in \Omega$, there exists $R' \in {\cal R}$ such that for some $y \in \Omega \setminus \{x\}$, $\omega(p(R')) = \{x,y\}$. 
Assume without loss of generality that $x < y$. By the first part of the proof, $(x, y) \cap \Omega = \emptyset$. Since $x \in \textbf{int}(\Omega)$, there exists $z \in \Omega$ such that $z < x$. 
Then, there exists $\bar{R} \in {\cal R}$ such that $z \in \omega(p(\bar{R}))$.

\medskip

\noindent Consider $\tilde{R} \in {\cal R}$ such that
\begin{itemize}
\item[($i$)] for each $i \in A$, $p(\tilde{R}_i)=x$ and for each $w \leq x$, $w \, \tilde{P}_i \, y$, and

\item[($ii$)]for each $j \in D$, $d(\tilde{R}_j)=y$.
\end{itemize}

\noindent We first show that $f(R'_A, \tilde{R}_D) = x$. 
Since $p(R'_A, \tilde{R}_D) = p(R')$, $\omega(p(R'_A, \tilde{R}_D)) = \{x,y\}$. 
Observe that by construction, $N(R'_A, \tilde{R}_D)\cap A=\emptyset$ and $L(R'_A, \tilde{R}_D) = D$. Then, by Lemma \ref{structure2} ---exactly by a combination of the non-emptiness requirement and the second point in Definition \ref{manjunath}---, we have that $f(R'_A, \tilde{R}_D) = x$.

\medskip

\noindent Next, we show that $x \in \omega(p(\tilde{R}))$. 
Suppose by contradiction that $x \notin \omega(p(\tilde{R}))$. 
Then, $f(\tilde{R}) \neq x$ and the agent set $A$ manipulates $f$ at $\tilde{R}$ via $R'_A$ to obtain $x$.
Hence $f$ is not GSP. 
By Lemma \ref{equivalence}, $f$ is not SP.

\medskip

\noindent Since $\omega(p(\tilde{R})) \neq x$ by assumption, it follows from  the first part of this proof that $\omega(p(\tilde{R})) \in \{\{w, x\}, \{x, y\}\}$, with $w < x$ and $(w, x) \cap \Omega = \emptyset$. Suppose first that $\omega(p(\tilde{R})) = \{ w, x\}$. 
Observe that by construction, $N(\tilde{R})\cap A=\emptyset$ and $L(\tilde{R})=D$. 
Then, by Lemma \ref{structure2} ---exactly by a combination of the non-emptiness requirement and the second point in Definition \ref{manjunath}---, $f(\tilde{R}) = w$. 
The agent set $A$ thus manipulates $f$ at this profile via $R'_A$ to obtain $x$.
Hence, $f$ is not GSP. 
By Lemma \ref{equivalence}, $f$ is not SP.

\medskip

\noindent Finally, suppose that $\omega(p(\tilde{R})) = \{x, y\}$. Consider $\tilde{R}'\in {\cal R}$ such that
\begin{itemize}
\item[($i$)] $\tilde{R}'_A = \tilde{R}_A$, and

\item[($ii$)] for each $j \in D$, $d(\tilde{R}'_j)=x$.
\end{itemize}

\noindent Since $p(\tilde{R}') = p(\tilde{R})$, we have that $\omega(p(\tilde{R}')) = \{x, y\}$. 
Observe that by construction,  $N(\tilde{R}')\cap A=\emptyset$ and $L(\tilde{R}') = \emptyset$. 
Then, by Lemma \ref{structure2} ---exactly by the non-emptiness requirement in Definition \ref{manjunath}---, $f(\tilde{R}')=y$. 
Consider now $(\bar{R}_A,\tilde{R}'_D)\in{\cal R}$. 
Since $p(\bar{R}_A,\tilde{R}'_D)=p(\bar{R})$, it must be the case that $\omega(p(\bar{R}_A,\tilde{R}'_D))=\omega(p(\bar{R}))$. Given that $z\in \omega(p(\bar{R}))$, we have by the first part of this proof that $\omega(p(\bar{R}_A,\tilde{R}'_D))\in\{z, \{u,z\}\}$, where  $u \leq x$ and $u \neq z$. 
Therefore, $f(\bar{R}_A,\tilde{R}'_D) \leq x$. 
Since for each $i\in A$ and each $w\leq x$, $w\, \tilde{P}'_i\, y$ , the agent set $A$ manipulates $f$ at $\tilde{R}'$ via $\bar{R}_A$.
Hence, $f$ is not GSP. By Lemma \ref{equivalence}, $f$ is not SP.

\medskip

\subsubsection*{Proof of Proposition~\ref{omega} $(ii)$ and $(iii)$}

We only prove ($ii$) because the proof of $(iii)$ is similar. Suppose that $\min \Omega$ exists, but there is no $x \in \Omega$ such that $(\min \Omega, x) \in r_{\omega}$. If $\min \Omega \notin r_{\omega}$, then $\min \Omega \not\in \Omega$, which is a contradiction. 

\subsection*{Proof of Proposition \ref{generalized}}

We first define $r_\omega$. We include $\min \Omega$ in $r_\omega$ if and only if $f(R)=\min \Omega$ for all $R \in {\cal R}$ such that $p(R_i) = \min \Omega$ for each $i \in A$.
Similarly, we include $\max \Omega$ in $r_\omega$ if and only if $f(R)=\max \Omega$ for all $R \in {\cal R}$ such that $p(R_i) = \max \Omega$ for each $i \in A$.
Next, given $r_\omega$, define a correspondence $\mathcal{L} : r_\omega \rightarrow 2^A$ such that for each $\alpha \in r_\omega$, $C \in {\cal L}(\alpha)$ if [$C\in {\cal L}(\beta)$ for some $\beta <^* \alpha$] or [there is $R \in {\cal R}$ such that ($p(R_i) \leq^* \alpha \Leftrightarrow i\in C$) and $\omega(p(R))= \alpha$].

\bigskip

\noindent We establish the following lemma.

\begin{lemma}\label{lemmamedian}
Let $f: {\cal R} \rightarrow \Omega$ be SP. Consider $\alpha \in r_\omega$ and $C \subseteq 2^A$ such that for each $\beta <^* \alpha$, $C \in {\cal L}(\alpha)\setminus {\cal L}(\beta)$. 
\begin{itemize}
\item[$(i)$] If $R'\in{\cal R}$ is such that $p(R'_i) \leq^* \alpha$ for each $i \in C$, then $\omega(p(R')) \leq^* \alpha$.
\item[$(ii)$] If $R' \in {\cal R}$ is such that $p(R'_i) >^* \alpha$ for each $i \in A \setminus C$, then $\omega(p(R'))\geq^{*}\alpha$.
\end{itemize}
\end{lemma}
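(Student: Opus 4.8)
The plan is to convert the hypothesis into a concrete \emph{witness profile} and then propagate the outcome bound using two monotonicity properties of $\omega$. Throughout I identify each value $\omega(\mathbf p)$, which by Proposition~\ref{omega} is either a singleton of $\Omega$ or a contiguous pair, with the corresponding element of the $<^*$-ordered set $\Omega\cup\Omega^2_C$, so that every comparison $\leq^*$ in the statement is between elements of one totally ordered set. For the \emph{witness}: by hypothesis $C\in{\cal L}(\alpha)$ while $C\notin{\cal L}(\beta)$ for each $\beta<^*\alpha$, so in the definition of ${\cal L}$ the first disjunct (``$C\in{\cal L}(\beta)$ for some $\beta<^*\alpha$'') fails; hence the second disjunct holds, giving a profile $R\in{\cal R}$ with $p(R_i)\leq^*\alpha\Leftrightarrow i\in C$ and $\omega(p(R))=\alpha$. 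This $R$ anchors the argument. I prove $(i)$; part $(ii)$ is symmetric, reversing left and right.

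\emph{The two engine properties.} For a single agent $i\in A$ changing only her peak I would establish: \textbf{(A) Uncompromisingness}: if $p(R_i)$ and $p(R'_i)$ lie on the same closed side of $\omega(p(R))$ (both $\leq^*\underline{\omega}(p(R))$, or both $\geq^*\overline{\omega}(p(R))$), then $\omega(p(R'_i,R_{-i}))=\omega(p(R))$; and \textbf{(B) Monotonicity}: if $p(R'_i)\leq^* p(R_i)$ then $\omega(p(R'_i,R_{-i}))\leq^*\omega(p(R))$, with the symmetric statement for rightward moves. When $\omega(p(R))\in\Omega^2_C$ both follow from lemmas already proved: applying Lemma~\ref{nosalto} in both directions forces $l=l'$ and $r=r'$ for same-side moves, which is (A), while Lemmas~\ref{nosalto} and \ref{movepeaks} deliver the directional shift (B). When $\omega(p(R))$ is a singleton $\mu\in\Omega$ the rule is constant equal to $\mu$ on the whole peak-fibre, so (A) and (B) follow from a direct one-agent strategy-proofness argument: a jump of the outcome \emph{against} the direction of the peak move would let $i$, whose peak sits strictly on one side of $\mu$, manipulate between the two constant outcomes.

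\emph{Propagation from the witness.} Set $C'=\{i\in A:p(R'_i)\leq^*\alpha\}$, so $C\subseteq C'$. I transform $R$ into $R'$ in three stages, maintaining the invariant $\omega\leq^*\alpha$. In \textbf{Stage 1} I move every $i\in A\setminus C'$ from its witness peak ($>^*\alpha$) to $p(R'_i)$ ($>^*\alpha$); these stay on the right closed side of the current outcome $\alpha$, so (A) keeps the outcome at $\alpha$. In \textbf{Stage 2} I move every $i\in C$ from its witness peak ($\leq^*\alpha$) to $p(R'_i)$ ($\leq^*\alpha$); these stay on the left closed side of the still-unchanged outcome $\alpha$, so by (A) the outcome is still $\alpha$. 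In \textbf{Stage 3} I move every $i\in C'\setminus C$ from its witness peak ($>^*\alpha$) leftward to $p(R'_i)$ ($\leq^*\alpha$); each move is leftward, so by (B) the outcome only weakly decreases and hence remains $\leq^*\alpha$. The end profile is $R'$, giving $\omega(p(R'))\leq^*\alpha$. The ordering of the stages is essential: it guarantees that at each step every peak still to be moved is on a \emph{strict} side of $\alpha$, which is what lets the invariant survive.

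\emph{Main obstacle.} The delicate point is establishing (A) and (B) when the current outcome is a singleton of $\Omega$, and especially treating the \emph{boundary} peaks that rest exactly at an endpoint of the current outcome (e.g.\ a peak equal to the singleton $\mu$, or equal to an endpoint of a pair). ``Same side'' must be read as the closed side, and one must verify that a peak sitting at the pivotal endpoint cannot tip the outcome across it when moved within that side. For pair outcomes this is exactly the content of Lemma~\ref{nosalto} (and the reason its proof required Lemmas~\ref{mediopeaks}–\ref{doble}); for singleton outcomes it has to be argued separately via the constancy of $f$ on the peak-fibre, and it is this singleton/boundary case, together with checking that the staged construction never forces a boundary peak to move within the wrong side, that constitutes the real work. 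All uses of strategy-proofness are funneled through Lemmas~\ref{nosalto}, \ref{movepeaks}, \ref{equivalence} and the singleton argument, so no fresh manipulation needs to be built once (A) and (B) are in place.
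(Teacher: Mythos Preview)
Your approach is genuinely different from the paper's: the paper never isolates single–agent ``uncompromisingness'' and ``monotonicity'' of $\omega$, but instead builds three tailored profiles $\bar R,\bar R',\hat R'$ with prescribed preferences on both sides of $\alpha$ and runs a one-by-one replacement argument directly at the level of $f$ (using Lemma~\ref{equivalence} to pass between SP and GSP). Your route via (A)/(B) would be cleaner \emph{if} those two engine properties held as stated.

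The gap is in (A) for singleton outcomes. Your (A) asserts that when $\omega(p(R))=\mu\in\Omega$ and both $p(R_i),p(R'_i)$ lie on the same closed side of $\mu$, the outcome is unchanged. This is false at the boundary: take $\Omega=\{1,2,3\}$, $A=\{i\}$, $D=\{j\}$, $r_\omega=\{1,2,(2,3)\}$ with ${\cal L}(1)={\cal L}(2)=\{\{i\}\}$ and ${\cal L}(2,3)=2^A$. This is an SP rule by Theorem~\ref{theorem}, yet $\omega(1)=1$ while $\omega(2)=2$, so with $\mu=1$ and both peaks $\geq 1$ the outcome moves. The strict version (both peaks strictly on one side of $\mu$) \emph{is} provable, but in Stage~2 the witness peaks of agents in $C$ may sit exactly at $\alpha$ when $\alpha$ is a singleton, so your claim ``by (A) the outcome is still $\alpha$'' is not justified. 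Worse, once the first Stage~2 move drops $\omega$ strictly below $\alpha$, the next agent in $C$ may have both witness and target peak lying strictly to the right of the new outcome; a rightward shuffle of that peak is controlled by neither (A) nor (B) relative to $\alpha$, so nothing in your toolkit prevents the intermediate $\omega$ from overshooting $\alpha$. You flag this boundary case as ``the real work'' but do not do it. One repair is to insert an intermediate normalization: first move every $i\in C$ to peak $\underline\alpha$ (each such rightward move keeps $\omega=\alpha$, by combining (B) with its reverse), and only then move each $i\in C$ leftward to $p(R'_i)$, so that every subsequent move is a leftward one and (B) alone carries the invariant $\omega\leq^*\alpha$ through Stages~2 and~3. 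Without this (or an equivalent device), the staged argument does not close.
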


\noindent \begin{proof} Due to symmetry reasons we only prove $(i)$. 
Let $\alpha \in r_\omega$ and consider $C \subseteq 2^A$ such that for each $\beta <^* \alpha$, $C \in {\cal L}(\alpha) \setminus {\cal L}(\beta)$. Then, by definition of $\cal{L}(\alpha)$, there is $R \in {\cal R}$ such that $[p(R_i) \leq^* \alpha \Leftrightarrow i\in C]$ and $\omega(p(R))= \alpha$. 
Consider $R'\in{\cal R}$ such that $p(R'_i)\leq^{*}\alpha$ for each $i \in C$, and suppose by contradiction that $\omega(p(R')) >^* \alpha$. 
Let $C' = \{i\in A: p(R'_i) \leq^* \alpha\}$. Note that $C \subseteq C'$. Consider now $\bar{R}, \bar{R}', \hat{R}' \in {\cal R}$ such that
\begin{enumerate}
\item[($i$)] $p(\bar{R})=p(R)$ and $p(\bar{R}') = p(\hat{R}') = p(R')$,
\item[($ii$)] $f(\bar{R})=\underline{\omega}(p(R))$ and $f(\hat{R}') = \overline{\omega}(p(R'))$,
\item[($iii$)] for each $j \in A$ and each $v, w \in \Omega$ with $v \leq^* \alpha <^* w$, [$v \, \bar{P}_j \, w \Leftrightarrow j \in C$], and
\item[($iv$)] for each $j \in A$ and each $v, w \in \Omega$ with $v \leq^* \alpha <^* w$, [$v \, \bar{P}'_j \, w \Leftrightarrow v \, \hat{P}'_j \, w \Leftrightarrow j \in C'$].
\end{enumerate}

\noindent We first show that $f(\bar{R}'_A, \bar{R}_D) \leq \underline{\omega}(p(R))$. 
Suppose by contradiction that $f(\bar{R}'_A, \bar{R}_D) > \underline{\omega}(p(R))$. 
Starting at $\bar{R}$, construct a sequence of profiles in which the preferences of all agents $i\in A$ are changed one by one from $\bar{R}_i$ to $\bar{R}'_i$ such that the sequence ends at $(\bar{R}'_A, \bar{R}_D)$.
Since $f(\bar{R})=\underline{\omega}(p(R))< f(\bar{R}'_A, \bar{R}_D)$, there exists $S \subset A$ and $i \in (A \setminus S)$ such that $f(\bar{R}'_{S}, \bar{R}_{-S})\leq \underline{\omega}(p(R))$ and $f(\bar{R}'_{S\cup \{i\}}, \bar{R}_{-(S\cup \{i\})})> \underline{\omega}(p(R))$. 
If $i\in C'$, then, by construction, $v \, \bar{P}'_i \, w$ for each $v, w \in \Omega$ with $v \leq^* \alpha <^* w$. 
Thus, $f(\bar{R}'_S, \bar{R}_{-S}) \, \bar{P}'_i \, f(\bar{R}'_{S\cup \{i\}}, \bar{R}_{-(S\cup \{i\})})$.
Agent $i$ then manipulates $f$ at $(\bar{R}'_{S\cup \{i\}}, \bar{R}_{-(S\cup \{i\})})$ via $\bar{R}_i$. 
Otherwise, if $i\in A\setminus C'$, then $i \not \in C$ because we have already seen that $C \subseteq C'$.
Since, by construction, $[v \, \bar{P}_j \, w \Leftrightarrow j \in C]$ for each $v, w \in \Omega$ with $v \leq^* \alpha <^* w$, we have that $f(\bar{R}'_{S\cup \{i\}}, \bar{R}_{-(S\cup \{i\})}) \, \bar{P}_i \, f(\bar{R}'_S, \bar{R}_{-S})$.
Agent $i$ then manipulates $f$ at $(\bar{R}'_{S}, \bar{R}_{-S})$ via $\bar{R}'_i$.
We can thus conclude that $f(\bar{R}'_A, \bar{R}_D) \leq \underline{\omega}(p(R))$. 

\medskip

\noindent Since $\omega(p(R')) >^* \alpha$ by assumption and $p(\bar{R}'_A, \bar{R}_D) = p(R')$, we have that $\omega(p(\bar{R}'_A, \bar{R}_D)) >^* \alpha$. 
If $\omega(p(R)) \in \Omega^2_C$ or if [$\omega(p(R)) \in \Omega$ and $\omega(p(R)) \not\in \omega(p(\bar{R}'_A, \bar{R}_D))$], then $f(\bar{R}'_A, \bar{R}_D) > \underline{\omega}(p(R))$.
This contradicts that $f(\bar{R}'_A, \bar{R}_D) \leq \underline{\omega}(p(R))$.  
We can thus conclude that $\omega(p(R)) \in \Omega$ and $\omega(p(R)) \in \omega(p(\bar{R}'_A, \bar{R}_D))$. 
Then, $\omega(p(R)) = \alpha$ and $\omega(p(\bar{R}'_A, \bar{R}_D)) = \{\alpha, \gamma\}$, with $\gamma > \alpha$. 
Since $f(\hat{R}') = \overline{\omega}(p(R'))$ and $p(\hat{R}')=p(R')=p(\bar{R}'_A, \bar{R}_D)$, we have that $f(\hat{R}') = \gamma$. 

\medskip

\noindent Next, we show that $f(\bar{R}_A, \hat{R}'_{D}) \geq \gamma$. 
Suppose by contradiction that $f(\bar{R}_A, \hat{R}'_{D}) < \gamma$. 
Starting at $\hat{R}'$, construct a sequence of profiles in which the preferences of all agents $i\in A$ are changed one by one from $\hat{R}'_i$ to $\bar{R}_i$ such that the sequence ends at $(\bar{R}_A, \hat{R}'_D)$. Since $f(\hat{R}')=\gamma> f(\bar{R}_A, \hat{R}'_D)$, there exists $S \subset A$ and $i \in (A \setminus S)$ such that $f(\bar{R}_S, \hat{R}'_{-S}) \geq \gamma$ and $f(\bar{R}_{S\cup \{i\}}, \hat{R}'_{-(S\cup \{i\})})< \gamma$.
Since, by Proposition \ref{omega}, $(\alpha, \gamma) \in \Omega^2_C$, we have $f(\bar{R}_{S\cup \{i\}}, \hat{R}'_{-(S\cup \{i\})}) \leq \alpha$. 

\medskip

\noindent If $i\in C'$, then, by construction,  $v \, \hat{P}'_i \, w$ for each $v, w \in \Omega$ with $v \leq \alpha < w$. 
Thus,  $f(\bar{R}_{S\cup\{i\}}, \hat{R}'_{-(S\cup\{i\})}) \, \hat{P}'_i \, f(\bar{R}_{S}, \hat{R}'_{-S})$.
Agent $i$ then manipulates $f$ at $(\bar{R}_{S}, \hat{R}'_{-S})$ via $\bar{R}_i$. 
Otherwise, if $i\in A\setminus C'$, then $i \not \in C$ because we have already seen that $C \subseteq C'$.
Since, by construction, $[v \, \bar{P}_j \, w \Leftrightarrow j \in C]$ for each $v, w \in \Omega$ with $v \leq^* \alpha <^* w$, we have that $f(\bar{R}_{S}, \hat{R}'_{-S}) \, \bar{P}_i \, f(\bar{R}_{S\cup\{i\}}, \hat{R}'_{-(S\cup\{i\})})$.
Agent $i$ then manipulates $f$ at $(\bar{R}_{S\cup\{i\}}, \hat{R}'_{-(S\cup\{i\})})$ via $\hat{R}'_i$. 
We can thus conclude that $f(\bar{R}_A, \hat{R}'_D) \geq \gamma$.

\medskip

\noindent Finally, it follows from $f(\bar{R}_A, \hat{R}'_D) \geq \gamma$ that $f(\bar{R}_A, \hat{R}'_D) > \alpha$. 
Since $\omega(p(R)) = \alpha$ and $p(\bar{R}_{A}, \hat{R}'_{D}) = p(R)$, we have that $f(\bar{R}_{A}, \hat{R}'_{D})=\alpha$, which is a contradiction. \end{proof}

\noindent To complete the proof of the proposition, we show that Definition \ref{voterscheme} is satisfied.

\medskip

\noindent $\Leftarrow$) By Lemma \ref{lemmamedian} ($i$) and ($ii$), we have that for each $R \in {\cal R}$ and each $\alpha \in r_\omega$, if $\{i \in A \, : \, p(R_i) \leq^* \alpha\} \in 
{\cal L}(\alpha) \setminus {\cal L}(\beta)$ for each $\beta <^* \alpha$, then $\omega(p(R))= \alpha$. 

\medskip

\noindent $\Rightarrow)$ Consider $\alpha \in r_\omega$, $\textbf{p} \in \Omega^a$ and $R \in {\cal R}$ such that $p(R) = \textbf{p}$ and $\omega(\textbf{p}) = \alpha$. 
We have to show that (a) $\{i \in A \, : \, p(R_i) \leq^* \alpha\} \in {\cal L}(\alpha)$ and (b) for each $\beta < \alpha$, $\{i \in A \, : \, p(R_i) \leq^* \beta\} \not\in {\cal L}(\beta)$. 
Since $\omega(\textbf{p}) = \alpha$ and $p(R) = \textbf{p}$, $\omega(p(R)) = \alpha$. 
Then, $\{i \in A \, : \, p(R_i) \leq^* \alpha\} \in {\cal L}(\alpha)$ by construction of ${\cal L}$. 
This establishes condition (a).
Next, suppose by contradiction that there exists $\beta <^* \alpha$ such that $\{i \in A \, : \, p(R_i) \leq^* \beta\} \in {\cal L}(\beta)$. 
Since we have already established that for each $\textbf{p}' \in \Omega^a$ and each $R' \in {\cal R}$ such that $p(R')=\textbf{p}'$, $\{i \in A: p(R'_i) \leq^* \gamma\} \in {\cal L}(\gamma)$ implies that $\omega(\textbf{p}') \leq^* \gamma$, it must be the case that $\{i \in A \, : \, p(R_i) \leq^* \beta\} \in {\cal L}(\beta)$ implies $\omega(\textbf{p}) \leq^* \beta$. 
This contradicts that $\beta <^* \alpha$.

\bigskip

\noindent Finally, we show that ${\cal L}$ complies with the conditions of Definition \ref{leftsystem}.
\begin{itemize}
\item By construction of ${\cal L}$, if $C \in {\cal L}(\alpha)$ for some $\alpha \in r_\omega$, then $C \in {\cal L}(\beta)$ for each $\beta >^* \alpha$. 
Therefore, condition ($ii$) of Definition \ref{leftsystem} is satisfied.  

\item We now show that ${\cal L}$ satisfies condition ($i$) of Definition \ref{leftsystem}. Consider any $\alpha \in r_{\omega}$, any $C \in {\cal L}(\alpha)$ and any $C' \supset C$, and suppose by contradiction that $C' \not\in {\cal L}(\alpha)$.

Suppose first that $\alpha \in \Omega$. Consider $R \in {\cal R}$ such that $p(R_i) \leq^* \alpha \Leftrightarrow i \in C$ and, for each $j \in D$, $d(R_j) = \alpha$. Since $C \in {\cal L}(\alpha)$, we have that by Definition~\ref{voterscheme}, $\omega(p(R)) \leq^* \alpha$  and, then, $f(R) \leq  \alpha$. Consider now a profile $\bar{R} = (R'_{C' \setminus C}, R_{N \setminus (C' \setminus C)}) \in {\cal R}$ such that, for each $i \in C' \setminus C$,  $p(R'_i) \leq^* \alpha$ and $x \, P'_i \, y$ for all $x \leq^* \alpha <^* y$. Then, $p(\bar{R}_i) \leq^* \alpha \Leftrightarrow i \in C'$. Since $C' \not\in {\cal L}(\alpha)$, we have that by Definition~\ref{voterscheme}, $\omega(p(\bar{R})) >^* \alpha$. If $\omega(p(\bar R)) \in \Omega^2_C$ and, since, by construction, $L(\bar{R}) = \emptyset$,\footnote{ 
Note that $\emptyset$ must not belong to $W(g_{p(\bar{R})})$ by the following arguments: ($i$) $W(g_{p(\bar{R})})$ should not contain all possible coalitions (as $\overline{\omega}(p(\bar{R})) \in \omega(p(\bar{R}))$); and ($ii$) Definition \ref{manjunath} implies that any superset of a coalition in $W(g_{p(\bar{R})})$ also belong to $W(g_{p(\bar R)})$.} then $g_{p(\bar{R})}(\bar{R}) = \overline{\omega}(p(\bar{R}))$ by Lemma~\ref{structure2}. Thus, $f(\bar R) > \alpha$ and the agent set $C' \setminus C$ manipulates $f$ at $\bar{R}$ via $R_{C' \setminus C}$. Hence, $f$ is not GSP. By Lemma \ref{equivalence}, $f$ is not SP.

Suppose now that $\alpha \in \Omega^2_C$. Consider $R \in {\cal R}$ such that $p(R_i) \leq^* \alpha \Leftrightarrow i \in C$ and for each $j \in D$, $d(R_j) >^* \alpha$. Since $C \in {\cal L}(\alpha)$, we have that by Definition \ref{voterscheme}, $\omega(p(R)) \leq^* \alpha$. By construction, $L(R) = D$,\footnote{Note that $D$ must belong to $W(g_{p(R)})$ by the following three arguments: ($i$) $W(g_{p(R)})$ should be non-empty (as $\underline{\omega}(p(R)) \in \omega(p(R))$); ($ii$) Definition \ref{manjunath} implies that any superset of a coalition in $W(g_{p(R)})$ also belong to $W(g_{p(R)})$; and ($iii$) $N(R)$ is empty by Proposition \ref{omega} and, then, the maximal possible value for $L(R)$ is $D$.} and therefore, $g_{p(R)}(R) = \underline{\omega}(p(R))$ by Lemma \ref{structure2}. Then, $f(R) <^* \alpha$. Consider now a profile $\bar{R} = (R'_{C' \setminus C}, R_{N \setminus (C' \setminus C)}) \in {\cal R}$ such that, for each $i \in C' \setminus C$,  $p(R'_i) \leq^* \alpha$ and $x \, P'_i \, y$ for all $x \leq^* \alpha <^* y$. Then, $p(\bar{R}_i) \leq^* \alpha \Leftrightarrow i \in C'$. Since $C' \not\in {\cal L}(\alpha)$, we have that by Definition \ref{voterscheme}, $\omega(p(\bar{R})) >^* \alpha$. Thus, $f(\bar{R}) >^* \alpha$, and the agent set $C' \setminus C$ manipulates $f$ at $\bar{R}$ via $R_{C' \setminus C}$. Hence, $f$ is not GSP. By Lemma \ref{equivalence}, $f$ is not SP.

%Consider $\alpha \in r_\omega$ and $C, C'$ such that $C \in {\cal L}(\alpha)$ and $C \subset C'$. Denote $\underline{\beta} = \min \{\beta \in r_{\omega}$ \, : \, C \in {\cal L}(\beta)\}$ and $\overline{\beta} = \max \{\beta \in r_{\omega}$ \, : \, C \not\in {\cal L}(\beta)\}$. It is possible that $\underline{\beta}$ or $\overline{\beta}$ does not exist, but as $X \setminus \Omega$ is either empty or the union of open sets, at least one of the two exists. Suppose without loss of generality that $\underline{\beta}$ exists. Then, $C \in {\cal L}(\underline{\beta})$ but $C \not\in {\cal L}(\gamma)$ for each $\gamma <^* \underline{\beta}$. We then know by construction of ${\cal L}$ that there is $R \in {\cal R}$ such that ($p(R_i) \leq^* \underline{\beta} \Leftrightarrow i\in C$) and $\omega(p(R))= \underline{\beta}$. Consider $R' \in {\cal R}$ such that ($p(R'_i) \leq^* \underline{\beta} \Leftrightarrow i\in C'$). Then, by Lemma \ref{lemmamedian}($i$), we know that $\omega(p(R')) = \beta \leq^* \underline{\beta}$. If $\beta = \underline{\beta}$, then by construction of ${\cal L}$, we obtain that $C' \in {\cal L}(\underline{\beta})$. Otherwise, if $\beta <^* \underline{\beta}$, then, by construction of ${\cal L}$, we know that there is $\hat{C} \subseteq C'$ such that $\hat{C} \in {\cal L}(\beta)$. By the induction hypothesis, we obtain that $C' \in {\cal L}(\beta)$. Then, applying condition ($ii$) of Definition \ref{leftsystem} we have that $C' \in {\cal L}(\alpha)$. Therefore, condition ($i$) of Definition \ref{leftsystem} is satisfied.

\item With respect to condition ($iii$) of Definition \ref{leftsystem}, suppose that $\max\Omega$ exists and that $\max \Omega\notin r_\omega$. 
We show that for each $\alpha\in r_\omega\setminus\{\max r_\omega\}$, $\emptyset\in{\cal L}(\max r_\omega)\setminus {\cal L}(\alpha)$. 
We first show that for some $\alpha\in r_\omega$, $\emptyset\in{\cal L}(\alpha)$. 
Suppose by contradiction that for each $\alpha\in r_\omega$, $\emptyset\notin{\cal L}(\alpha)$. 
Consider $R\in{\cal R}$ such that for each $i\in A$, $p(R_i)=\max \Omega$. 
Then, for each $\alpha \in r_\omega$, $\{i\in A: p (R_i)\leq^{*}\alpha\}=\emptyset$. 
Since $\emptyset\notin{\cal L}(\alpha)$ for each $\alpha\in r_\omega$, we have that $\omega(p(R))\notin r_\omega$,
which is a contradiction. 

Second, we show that for each $\alpha\in r_\omega\setminus \{\max r_\omega\}$, $\emptyset\notin{\cal L}(\alpha)$. 
Suppose by contradiction that  for some $\alpha\in r_\omega\setminus\{ \max r_\omega\}$, $\emptyset\in{\cal L}(\alpha)$. Then, by condition ($i$) of Definition \ref{leftsystem}, we have that ${\cal L}(\alpha)=2^{A}$. 
Then, for each $R \in {\cal R}$, $\omega(p (R)) \leq^{*} \alpha$. 
Since $\max \Omega\notin r_\omega$, we have $\max r_\omega =\{y,\max \Omega\}$, with $y< \max \Omega$. 
Since $\alpha \in r_\omega\setminus\{ \max r_\omega\}$, we have that $\alpha \leq^{*} y$. 
Therefore, $\omega(p (R)) \leq^{*} y$ for each $R \in {\cal R}$.
This contradicts that $\max \Omega \in \Omega$. 

Therefore, condition ($iii$) of Definition \ref{leftsystem} is satisfied.  

\item Regarding condition ($iv$) of Definition \ref{leftsystem}, suppose that $\max\Omega$ does not exist and suppose by contradiction that for some $\alpha\in r_\omega$, $\emptyset\in{\cal L}(\alpha)$. 
Then, by condition ($i$) of Definition \ref{leftsystem}, ${\cal L}(\alpha)=2^{A}$. 
Therefore, $\omega(p(R))\leq^{*}\alpha$ for each $R\in{\cal R}$. Then, since $\mathbb{R} \setminus \Omega$ is either empty or the union of open sets, $\max r_\omega$ exists. 
Thus, $\max \Omega$ also exists, which is a contradiction.
Therefore, condition ($iv$) of Definition \ref{leftsystem} is satisfied.
\end{itemize}

\subsection*{Proof of Proposition \ref{second-step2}}%\label{winningcoalitions}

Given a generalized median voter function $\omega$ on a set $r_\omega$, with $r_{\omega}$ satisfying the conditions of Proposition \ref{omega}, %$\textbf{int}(\Omega) \subseteq {\color{red} r_\omega} \subseteq \Omega\cup \Omega^{2}_{C}$,
we have to prove that $g_{\omega(\mathbf{p})}$ is a voting by collections of left-decisive sets for each $\omega(\mathbf{p}) \in r_{\omega}\cap\Omega^2_{C}$. 
That is, we have to show that for each $\omega(\mathbf{p})\in r_{\omega}\cap\Omega^2_{C}$, there is a minimal set of coalitions $W(g_{\omega(\mathbf{p})}) \in 2^N$ that satisfies the conditions in Definition \ref{winningdef}. 
First, define $W^{*}(g_{\omega(\mathbf{p})}) \subseteq 2^N$ as follows: $C \in W^{*}(g_{\omega(\mathbf{p})})$ if there is $R \in {\cal R}$ such that $\omega(p(R)) = \omega(\mathbf{p})$, $L_{\omega(\mathbf{p})}(R)=C$, and $g_{\omega(\mathbf{p})}(R)=\underline{\omega}(\mathbf{p})$ (and thus $f(R)=\underline{\omega}(\mathbf{p})$). 
By definition, for each $C\in W^{*}(g_{\omega(\mathbf{p})})$, $C\cap A\in {\cal L}(\omega(\mathbf{p}))\setminus {\cal L}(\underline{\omega}(\mathbf{p}))$. 
Next, let $W(g_{\omega(\mathbf{p})})$ be the set of the minimal coalitions of $W^{*}(g_{\omega(\mathbf{p})})$. Observe that $W(g_{\omega(\mathbf{p})})$ satisfies that for each $C \in W(g_{\omega(\mathbf{p})})$, $C \cap A \in {\cal L}(\omega(\mathbf{p}))\setminus {\cal L}(\underline{\omega}(\mathbf{p}))$.

\bigskip

\noindent \textit{Step 1: We show that if $C \in W(g_{\omega(\mathbf{p})})$, then for each $R' \in {\cal R}$ such that $\omega(p(R')) = \omega(\mathbf{p})$ and $C\subseteq L_{\omega(\mathbf{p})}(R')$, we have that $g_{\omega(\mathbf{p})}(R')=\underline{\omega}(\mathbf{p})$.}

\medskip

\noindent Suppose by contradiction that $C \in W(g_{\omega(\mathbf{p})})$ but there is $\bar{R} \in {\cal R}$ such that  $\omega(p(\bar{R}))=\omega(\mathbf{p})$, $C\subseteq L_{\omega(\mathbf{p})}(\bar{R})$, and $g_{\omega(\mathbf{p})}(\bar{R})=\overline{\omega}(\mathbf{p})$ (and thus $f(\bar{R})=\overline{\omega}(\mathbf{p})$). 
Since $C\in W(g_{\omega(\mathbf{p})})$, there is $R \in {\cal R}$ such that $\omega(p(R)) = \omega(\mathbf{p})$, $L_{\omega(\mathbf{p})}(R)=C$, and $g_{\omega(\mathbf{p})}(R)=\underline{\omega}(\mathbf{p})$ (and thus $f(R)=\underline{\omega}(\mathbf{p})$). 

\medskip

\noindent First, suppose that $L_{\omega(\mathbf{p})}(\bar{R})=C$ and consider $(\bar{R}_C, R_{-C})\in{\cal R}$. 
Since $\{i\in A: p((\bar{R}_C, R_{-C})_i)\leq^{*}\omega(\mathbf{p})\}=\{i\in A: p(R_i)\leq^{*}\omega(\mathbf{p})\}\in{\cal L}(\omega(\mathbf{p}))\setminus {\cal L}(\underline{\omega}(\mathbf{p}))$, it follows from Proposition \ref{generalized} that $\omega(p(\bar{R}_C, R_{-C}))=\omega(\mathbf{p})$. 
If $g_{\omega(\mathbf{p})}(\bar{R}_C, R_{-C})=\underline{\omega}(\mathbf{p})$, then $f(\bar{R}_C, R_{-C})=\underline{\omega}(\mathbf{p})$ and the agent set $N\setminus C$ manipulates $f$ at this profile via $\bar{R}_{-C}$. 
Otherwise, if  $g_{\omega(\mathbf{p})}(\bar{R}_C, R_{-C})=\overline{\omega}(\mathbf{p})$, then $f(\bar{R}_C, R_{-C})=\overline{\omega}(\mathbf{p})$ and the agent set $C$ manipulates $f$ at this profile via $R_C$. In both cases, $f$ is not GSP. By Lemma \ref{equivalence}, $f$ is not SP.
Hence, we have shown that for each $R' \in {\cal R}$ such that $\omega(p(R'))=\omega(\mathbf{p})$ and $L_{\omega(\mathbf{p})}(R') = C$, we have that $g_{\omega(\mathbf{p})}(R')=\underline{\omega}(\mathbf{p})$.

\medskip

\noindent Second, suppose that $C\subset L_{\omega(\mathbf{p})}(\bar{R}) \equiv B$ and consider $(\bar{R}_C, R_{-C})\in{\cal R}$. 
Since $\{i\in A: p((\bar{R}_C, R_{-C})_i)\leq^{*}\omega(\mathbf{p})\}=\{i\in A: p(R_i)\leq^{*}\omega(\mathbf{p})\}\in{\cal L}(\omega(\mathbf{p}))\setminus {\cal L}(\underline{\omega}(\mathbf{p}))$ it follows from Proposition \ref{generalized} that $\omega(p(\bar{R}_C, R_{-C}))=\omega(\mathbf{p})$. 
Given that $L_{\omega(\mathbf{p})}(\bar{R}_C, R_{-C})=C$, we have, by the previous paragraph, that $g_{\omega(\mathbf{p})}(\bar{R}_C, R_{-C})=\underline{\omega}(\mathbf{p})$.
Thus, $f(\bar{R}_C, R_{-C})=\underline{\omega}(\mathbf{p})$. 
Consider now $(\bar{R}_{B}, R_{-B})\in {\cal R}$. 
Since $\{i\in A: p((\bar{R}_{B},R_{-B})_i)\leq^{*}\omega(\mathbf{p})\}=\{i\in A: p(\bar{R}_i)\leq^{*}\omega(\mathbf{p})\}\in{\cal L}(\omega(\mathbf{p}))\setminus {\cal L}(\underline{\omega}(\mathbf{p}))$, it follows from Proposition \ref{generalized} that $\omega(p(\bar{R}_{B}, R_{-B}))= \omega(\mathbf{p})$. 
If $g_{\omega(\mathbf{p})}(\bar{R}_{B},R_{-B})=\underline{\omega}(\mathbf{p})$, then $f(\bar{R}_{B},R_{-B})=\underline{\omega}(\mathbf{p})$ and the agent set $N\setminus B$ manipulates $f$ at this profile via $\bar{R}_{N\setminus B}$. 
Otherwise, if $g_{\omega(\mathbf{p})}(\bar{R}_{B},R_{-B})=\overline{\omega}(\mathbf{p})$, then $f(\bar{R}_{B},R_{-B})=\overline{\omega}(\mathbf{p})$ and the agent set $B\setminus C$ manipulates $f$ at this profile via $R_{B \setminus C}$. 
In both cases, $f$ is not GSP. By Lemma \ref{equivalence}, $f$ is not SP.

\bigskip

\noindent \textit{Step 2: We show that for each $C\in W(g_{(\omega(\mathbf{p})})$, $C\cap D\neq \emptyset$.}

\medskip

\noindent Suppose by contradiction that there is $C\in W(g_{\omega(\mathbf{p})})$ such that $C\cap D=\emptyset$. 
Then, $C\cap A=C\in {\cal L}(\omega(\mathbf{p})) \setminus {\cal L}(\underline{\omega}(\mathbf{p}))$. 
Consider $\mathbf{p}' \in \Omega^{a}$ such that for each $i\in A$, $\mathbf{p}'_{i}\leq^{*}\omega(\mathbf{p})$ if and only if $i\in C$. 
Since $C\in {\cal L}(\omega(\mathbf{p})) \setminus {\cal L}(\underline{\omega}(\mathbf{p}))$, $\omega(\mathbf{p}')=\omega(\mathbf{p})$.
Thus, there is $R\in{\cal R}$ such that $p(R)=\mathbf{p}'$ and $f(R)=\overline{\omega}(\mathbf{p})$. 
Observe that $C\subseteq L_{\omega(\mathbf{p})}(R)$.
Therefore, by Step 1, we have that $g_{\omega(\mathbf{p})}(R)=\underline{\omega}(\mathbf{p})$.
Thus, $f(R)=\underline{\omega}(\mathbf{p})$, which is a contradiction.

\bigskip

\noindent \textit{Step 3: We show that for each minimal coalition $B$ of ${\cal L}(\omega(\mathbf{p}))\setminus{\cal L}(\underline{\omega}(\mathbf{p}))$, there is $C\in W(g_{\omega(\mathbf{p})})$ such that $C\cap A=B$.}

\medskip

\noindent Suppose by contradiction that for some minimal coalition $B$ of ${\cal L}(\omega(\mathbf{p}))\setminus{\cal L}(\underline{\omega}(\mathbf{p}))$, there is no $C\in W(g_{\omega(\mathbf{p})})$ such that $C\cap A=B$. 
Consider $\mathbf{p}' \in \Omega^{a}$ such that for each $i\in A$, $\mathbf{p}'_{i}\leq^{*}\omega(\mathbf{p})$ if and only if $i\in B$. Since $B\in{\cal L}(\omega(\mathbf{p}))\setminus{\cal L}(\underline{\omega}(\mathbf{p}))$, $\omega(\mathbf{p}')=\omega(\mathbf{p})$.
Thus, there is $R\in{\cal R}$ such that $p(R)=\mathbf{p}'$ and $f(R)=\underline{\omega}(\mathbf{p})$. 
Observe that $L_{\omega(\mathbf{p})}(R)\cap A=B$. 
Since $B$ is minimal in ${\cal L}(\omega(\mathbf{p}))\setminus{\cal L}(\underline{\omega}(\mathbf{p}))$ and for each $C\in W(g_{\omega(\mathbf{p})})$, $C\cap A\neq B$, we have that there is no $C\in W(g_{\omega(\mathbf{p})})$ such that $C \subseteq L_{\omega(\mathbf{p})}(R)$. 
Therefore, $g_{\omega(\mathbf{p})}(R)=\overline{\omega}(\mathbf{p})$.
Thus, $f(R)=\overline{\omega}(\mathbf{p})$, which is a contradiction. 
%$\hfill\square$

\subsection*{Proof of Theorem \ref{theorem}}

\noindent The equivalence between $(i)$ and $(ii)$ is due to Lemma \ref{equivalence}. 
We complete the proof by showing the equivalence between $(i)$ and $(iii)$.
If $(i)$ holds, then the structure of $f$ is as described in $(iii)$, given Propositions \ref{generalized} and \ref{second-step2}. 
To show that $(iii)$ implies $(i)$, consider any $f$ that is decomposable as described in $(iii)$ with a generalized median voter function $\omega$ on $r_\omega$, with $r_{\omega}$ satisfying the conditions of Proposition \ref{omega}, %\mathbf{int}(\Omega) \subseteq r_\omega \subseteq \Omega\cup \Omega^{2}_{C}$,
and a set of voting by collections of left-decisive sets $\{g_{\omega(\mathbf{p})} \, : \, {\cal R} \rightarrow \omega(\mathbf{p})\}_{\omega(\mathbf{p})\in r_{\omega}\cap\Omega^2_{C}}$. 
Suppose by contradiction that $f$ is not SP. 
Then, there is a profile $R\in{\cal R}$ and an agent $i\in N$ with the alternative preference $R'_i\in{\cal R}_{i}$ such that $f(R'_i,R_{-i}) \, P_i \, f(R)$. 
We assume without loss of generality that $f(R)<f(R'_i,R_{-i})$.

\medskip

\noindent Suppose first that $\omega(p(R))=\omega(p(R'_i,R_{-i}))$. 
If $\omega(p(R)) \in \Omega$, then $f(R)=f(R'_i, R_{-i})$, which contradicts $f(R'_i,R_{-i}) \, P_i \, f(R)$. 
If $\omega(p(R)) \in \Omega^2_C$, then $f(R)=\underline{\omega}(p(R))$ and $f(R'_i,R_{-i})=\overline{\omega}(p(R))$. Since $f(R)=\underline{\omega}(p(R))$, we know from the definition of the voting by collections of left-decisive sets that $g_{\omega(p(R))}(R)=\underline{\omega}(\mathbf{p})$. 
Therefore, $C\subseteq L_{\omega(p(R))}(R)$ for some $C\in W(g_{\omega(p(R))})$.
Since $f(R'_i,R_{-i}) \, P_i \, f(R)$, $i\notin  L_{\omega(p(R))}(R)$. 
Observe then that $L_{\omega(p(R))}(R)\subseteq L_{\omega(p(R))}(R'_i,R_{-i})$.
Thus, $C \subseteq L_{\omega(p(R))}(R'_i,R_{-i})$. 
Therefore, $g_{\omega(p(R))}(R'_i,R_{-i})=\underline{\omega}(p(R))$ and, consequently, $f(R'_i,R_{-i})=\underline{\omega}(p(R))$.
This is a contradiction. 

\medskip

\noindent Suppose now that $\omega(p(R))\neq\omega(p(R'_i,R_{-i}))$. 
Then, $i\in A$. 
Given that $f(R)<f(R'_i,R_{-i})$, we have that $\omega(p(R))<^{*}\omega(p(R'_i,R_{-i}))$. 
Since $f(R'_i,R_{-i}) \, P_i \, f(R)$, we have that if $\omega(p(R)) \in \Omega$, then $p(R_i)> \omega(p(R))$.
Also, if $\omega(p(R)) \in \Omega^2_C$, then $p(R_i)\geq \overline{\omega}(p(R))$. 
Thus, in any case, $p(R_i)>^{*}\omega(p(R))$. 
Then, $i\notin\{j\in A: p(R_j)\leq^{*} \omega(p(R))\}$. 
Observe that $\{j\in A: p(R_j)\leq^{*} \omega(p(R))\}\subseteq \{j\in A: p((R'_i,R_{-i})_{j})\leq^{*} \omega(p(R))\}$. 
Therefore, by the definition of generalized median voter function, $\omega(p(R'_i,R_{-i}))\leq^*\omega(p(R))$.
This is a contradiction. %$\hfill\square$

\subsection*{Proof of Proposition \ref{PE}}

First, we prove $(i)$. 
Consider any SP rule $f$ such that $\Omega=X$. 
Suppose by contradiction that $f$ is not PE. 
Then, there is $x\in X$ and $R\in{\cal R}$ such that $x\, P_i \, f(R)$ for each $i\in N$. 
Since $\Omega=X$, $x\in \Omega$.
Thus, there is a profile $R'\in{\cal R}$ such that $f(R')=x$. 
Then, the agent set $N$ manipulates $f$ at $R$ via $R'$.
Hence, $f$ is not GSP. 
By Lemma \ref{equivalence}, $f$ is not SP.

\medskip

\noindent Second, we prove $(ii)$. 
Consider any SP rule $f$ such that $\Omega\notin\{X,\{\min X, \max X\}\}$. 
Therefore, $X\setminus \Omega\neq \emptyset$. 
Suppose by contradiction that $f$ is PE. 
We consider two subcases.
\begin{itemize}
\item Suppose that $\min X$ and $\max X$ exist, but $\{\min X,\max X\}\not\subset \Omega$. 
Assume without loss of generality that $\min X \notin \Omega$. 
Consider $R\in{\cal R}$ such that for each $i \in A$, $\rho(R_i) = \min X$ and for each $i \in D$, $\delta(R_i)=\max X$. By construction, $\min X \, P_i \, x$ for each $x \in \Omega$ and each $i \in N$. 
Then, $\min X$ Pareto dominates $f(R)$ because $f(R)\in \Omega $. 
\item Suppose that either [$\min X$ or $\max X$ does not exist] or that [both exist but $\{\min X,$ $\max X\} \subsetneq \Omega$]. Let $x \in X \setminus \Omega$ and let $y \in \mathbf{int}(\Omega)$ be the closest alternative to $x$ in $\mathbf{int}(\Omega)$ (if there are two equally closest alternatives, consider any of them).
Hence, $(x, y) \cap \Omega = \emptyset$ whenever $x < y$ and $(y, x) \cap \Omega = \emptyset$ whenever $x > y$. Observe that $x$ and $y$ always exist because $\Omega\notin\{X,\{\min X, \max X\}\}$ and $\mathbb{R} \setminus \Omega$ is either empty or the union of open sets. 
Consider $R'\in{\cal R}$ such that for each $i\in A$, $[\rho(R'_i)=x$ and $p(R'_i)=y]$, and for each $i\in D$, $\delta(R'_i)=y$. 
Then, $\{j\in A: p(R'_j)\leq^{*}y\}=A$ and for each $\alpha<^{*}y$, $\{j\in A: p(R'_j)\leq^{*}\alpha\}=\emptyset$. Therefore, by Theorem \ref{theorem}, $\omega(p(R'))=y$ and $f(R')=y$. 
A contradiction is reached by observing that $x \, P'_i \,y $ for each $i\in N$.
\end{itemize}

\noindent Finally, we prove $(iii)$. Suppose that $\min X$ and $\max X$ exist and consider any SP rule $f$ such that $\Omega=\{\min X, \max X\}\neq X$. We divide the proof into three steps:

\bigskip

\noindent\textit{Step 1: We show that if $A=\emptyset$, then $f$ is PE.}

\medskip

\noindent Since $A=\emptyset$, $N=D\neq \emptyset$. 
Then, for each $R \in {\cal R}$, $\omega(p(R)) = (\min X, \max X)$. 
Observe that for each agent, either $\min X$ or $\max X$ is the most preferred alternative of $X$. 
Consider first $R\in{\cal R}$ such that all agents have the same most preferred alternative of $X$. Assume, without loss of generality, that this most preferred alternative is $\min X$. 
Then, $L_{(\min X, \max X)}(R)=N$ and, by Theorem \ref{theorem}, $g_{(\min X, \max X)}(R)= \min X$ and $f(R)=\min X$. 
Consider now $R'\in{\cal R}$ such that not all agents have the same most preferred alternative. 
Then, there is a non-empty set of agents $S\subset D$ whose most preferred alternative is $\min X$, while for the remaining agents, $N\setminus S$, $\max X$ is the most preferred alternative. 
Since $\Omega=\{\min X, \max X\}$, we have $f(R')\in\{\min X, \max X\}$. 
If, on the one hand, $f(R')=\min X$, then $f(R') \, P'_i \, x$ for each $i\in S$ and each $x\in X\setminus \{\min X\}$. 
If, on the other hand, $f(R')=\max X$, then $f(R') \, P'_{i} \, x$ for each $i\in N\setminus S$ and each $x\in X\setminus \{\max X\}$. 
Hence, $f$ is PE.

\bigskip

\noindent\textit{Step 2: We show that if $A\neq\emptyset$ and $r_\omega = \{(\min X, \max X)\}$, then $f$ is PE.}

\medskip

\noindent Since $\max\Omega=\max X$ exists and $\max \Omega \notin r_\omega$, we know, by Theorem \ref{theorem} (exactly by condition $(iii)$ of Definition \ref{leftsystem}) that $\emptyset\in {\cal L}(\min X, \max X)$. 
Then, $\emptyset$ is a minimal coalition of ${\cal L}(\min X, \max X)$.
By Theorem \ref{theorem} (exactly by condition $(ii)$ of Definition \ref{winningdef}), there is $C\in W(g_{(\min X, \max X)})$ such that $C \cap A = \emptyset$.
Thus, $C\subseteq D$ and $C \neq \emptyset$.
Note that for each agent in $D$, either $\min X$ or $\max X$ is the most preferred alternative of $X$. 
Observe also that since $r_\omega = \{(\min X, \max X)\}$, for each $R\in{\cal R}$, $\omega(p(R))=(\min X,\max X)$. Consider first $R'\in{\cal R}$ such that $C\subseteq L_{(\min X,\max X)}(R')$. 
Then, by Theorem \ref{theorem}, $g_{(\min X,\max X)}(R')=\min X$ and $f(R')=\min X$. 
Since $C\subseteq L_{(\min X,\max X)}(R')$ and $C\subseteq D$, we have that $\min X \, P'_{i} \, x$ for each $i\in C$ and each $x\in X\setminus\{\min X\}$.
Consider now $R''\in{\cal R}$ such that $C\not\subseteq L_{(\min X,\max X)}(R'')$. 
Then, by Theorem \ref{theorem}, $g_{(\min X,\max X)}(R'')=\max X$ and $f(R'')=\max X$. 
Since $C\not\subseteq L_{(\min X,\max X)}(R'')$ and $C\subseteq D$, we have that $\max X \, P''_{i} \, x$ for each $i\in C \setminus L_{(\min X,\max X)}(R'')$ and each $x\in X\setminus\{\max X\}$. 
Hence, $f$ is PE.

\bigskip

\noindent\textit{Step 3: We show that if $A\neq\emptyset$ and $r_\omega \neq \{(\min X,\max X)\}$, then $f$ is not PE.}\medskip

\noindent Since $\Omega=\{\min X, \max X\}$ and $r_\omega \neq \{(\min X,\max X)\}$, we have that $\min X\in r_\omega$ and/or $\max X\in r_\omega$. 
Suppose without loss of generality that $\min X \in r_\omega$. 
Let $x\not\in \Omega$ and consider $R\in {\cal R}$ such that for each $i\in A$, $[\rho(R_i)=x$ and $p(R_i)=\min X]$, and for each $i\in D$, $\delta(R_i)=\min X$. 
Since $\min X\in r_\omega$ and $\{j\in A: p(R_j)\leq \min X\}=A$, it follows from Theorem \ref{theorem} that $\omega(p(R))=\min X$ and $f(R)=\min X$. However, $x \, P_i \, \min X$ for each $i\in N$.
Thus, $f$ is not PE.

\bibliographystyle{ecta}
\bibliography{socialchoice}

\begin{thebibliography}{22}
\newcommand{\enquote}[1]{``#1''}
\expandafter\ifx\csname natexlab\endcsname\relax\def\natexlab#1{#1}\fi

\bibitem[\protect\citeauthoryear{Achuthankutty and Roy}{Achuthankutty and
  Roy}{2018}]{achuthankutty2018dictatorship}
\textsc{Achuthankutty, G. and S.~Roy} (2018): \enquote{Dictatorship on
  top-circular domains,} \emph{Theory and Decision}, 85, 479--493.

\bibitem[\protect\citeauthoryear{Alcalde-Unzu and Vorsatz}{Alcalde-Unzu and
  Vorsatz}{2018}]{alcalde2018strategy}
\textsc{Alcalde-Unzu, J. and M.~Vorsatz} (2018): \enquote{Strategy-proof
  location of public facilities,} \emph{Games and Economic Behavior}, 112,
  21--48.

\bibitem[\protect\citeauthoryear{Alcalde-Unzu and Vorsatz}{Alcalde-Unzu and
  Vorsatz}{2023}]{alcalde2023structure}
---\hspace{-.1pt}---\hspace{-.1pt}--- (2023): \enquote{The structure of
  strategy-proof rules,} \emph{arXiv:2304.12843}.

\bibitem[\protect\citeauthoryear{Barber{\`a}}{Barber{\`a}}{2011}]{barbera2011strategyproof}
\textsc{Barber{\`a}, S.} (2011): \enquote{Strategyproof social choice,}
  \emph{Handbook of Social Choice and Welfare}, 2, 731--831.

\bibitem[\protect\citeauthoryear{Barber{\`a}, Berga, and Moreno}{Barber{\`a}
  et~al.}{2010}]{barbera2010individual}
\textsc{Barber{\`a}, S., D.~Berga, and B.~Moreno} (2010): \enquote{Individual
  versus group strategy-proofness: When do they coincide?} \emph{Journal of
  Economic Theory}, 145, 1648--1674.

\bibitem[\protect\citeauthoryear{Barber{\`a}, Berga, and Moreno}{Barber{\`a}
  et~al.}{2012}]{barbera2012domains}
---\hspace{-.1pt}---\hspace{-.1pt}--- (2012): \enquote{Domains, ranges and
  strategy-proofness: the case of single-dipped preferences,} \emph{Social
  Choice and Welfare}, 39, 335--352.

\bibitem[\protect\citeauthoryear{Barber{\`a}, Berga, and Moreno}{Barber{\`a}
  et~al.}{2022}]{barbera2022restricted}
---\hspace{-.1pt}---\hspace{-.1pt}--- (2022): \enquote{Restricted environments
  and incentive compatibility in interdependent values models,} \emph{Games and
  Economic Behavior}, 131, 1--28.

\bibitem[\protect\citeauthoryear{Barber{\`a} and Jackson}{Barber{\`a} and
  Jackson}{1994}]{barbera1994characterization}
\textsc{Barber{\`a}, S. and M.~Jackson} (1994): \enquote{A characterization of
  strategy-proof social choice functions for economies with pure public goods,}
  \emph{Social Choice and Welfare}, 11, 241--252.

\bibitem[\protect\citeauthoryear{Berga}{Berga}{1998}]{berga1998strategy}
\textsc{Berga, D.} (1998): \enquote{Strategy-proofness and single-plateaued
  preferences,} \emph{Mathematical Social Sciences}, 35, 105--120.

\bibitem[\protect\citeauthoryear{Berga and Serizawa}{Berga and
  Serizawa}{2000}]{berga2000maximal}
\textsc{Berga, D. and S.~Serizawa} (2000): \enquote{Maximal domain for
  strategy-proof rules with one public good,} \emph{Journal of Economic
  Theory}, 90, 39--61.

\bibitem[\protect\citeauthoryear{Black}{Black}{1948{\natexlab{a}}}]{black1948decisions}
\textsc{Black, D.} (1948{\natexlab{a}}): \enquote{The decisions of a committee
  using a special majority,} \emph{Econometrica}, 245--261.

\bibitem[\protect\citeauthoryear{Black}{Black}{1948{\natexlab{b}}}]{black1948rationale}
---\hspace{-.1pt}---\hspace{-.1pt}--- (1948{\natexlab{b}}): \enquote{On the
  rationale of group decision-making,} \emph{Journal of Political Economy}, 56,
  23--34.

\bibitem[\protect\citeauthoryear{Bossert and Peters}{Bossert and
  Peters}{2014}]{bossert2014single}
\textsc{Bossert, W. and H.~Peters} (2014): \enquote{Single-basined choice,}
  \emph{Journal of Mathematical Economics}, 52, 162--168.

\bibitem[\protect\citeauthoryear{Feigenbaum and Sethuraman}{Feigenbaum and
  Sethuraman}{2015}]{feigenbaum2015strategyproof}
\textsc{Feigenbaum, I. and J.~Sethuraman} (2015): \enquote{Strategyproof
  mechanisms for one-dimensional hybrid and obnoxious facility location
  models,} in \emph{Workshops at the twenty-ninth AAAI conference on artificial
  intelligence}.

\bibitem[\protect\citeauthoryear{Gibbard}{Gibbard}{1973}]{gibbard1973manipulation}
\textsc{Gibbard, A.} (1973): \enquote{Manipulation of voting schemes: a general
  result,} \emph{Econometrica: journal of the Econometric Society}, 587--601.

\bibitem[\protect\citeauthoryear{Jennings, Laraki, Puppe, and Varloot}{Jennings
  et~al.}{2023}]{jennings2023new}
\textsc{Jennings, A.~B., R.~Laraki, C.~Puppe, and E.~M. Varloot} (2023):
  \enquote{New characterizations of strategy-proofness under
  single-peakedness,} \emph{Mathematical Programming}, 1--32.

\bibitem[\protect\citeauthoryear{Manjunath}{Manjunath}{2014}]{manjunath2014efficient}
\textsc{Manjunath, V.} (2014): \enquote{Efficient and strategy-proof social
  choice when preferences are single-dipped,} \emph{International Journal of
  Game Theory}, 43, 579--597.

\bibitem[\protect\citeauthoryear{Moulin}{Moulin}{1980}]{moulin1980strategy}
\textsc{Moulin, H.} (1980): \enquote{On strategy-proofness and single
  peakedness,} \emph{Public Choice}, 35, 437--455.

\bibitem[\protect\citeauthoryear{Moulin}{Moulin}{1984}]{moulin1984generalized}
---\hspace{-.1pt}---\hspace{-.1pt}--- (1984): \enquote{Generalized
  Condorcet-winners for single peaked and single-plateau preferences,}
  \emph{Social Choice and Welfare}, 1, 127--147.

\bibitem[\protect\citeauthoryear{Rodr{\'\i}guez-{\'A}lvarez}{Rodr{\'\i}guez-{\'A}lvarez}{2017}]{rodriguez2017single}
\textsc{Rodr{\'\i}guez-{\'A}lvarez, C.} (2017): \enquote{On single-peakedness
  and strategy-proofness: ties between adjacent alternatives,} \emph{Economics
  Bulletin}, 37, 1966--1974.

\bibitem[\protect\citeauthoryear{Satterthwaite}{Satterthwaite}{1975}]{satterthwaite1975strategy}
\textsc{Satterthwaite, M.~A.} (1975): \enquote{Strategy-proofness and Arrow's
  conditions: Existence and correspondence theorems for voting procedures and
  social welfare functions,} \emph{Journal of Economic Theory}, 10, 187--217.

\bibitem[\protect\citeauthoryear{Thomson}{Thomson}{2022}]{thomson2022should}
\textsc{Thomson, W.} (2022): \enquote{Where should your daughter go to college?
  An axiomatic analysis,} \emph{Social Choice and Welfare}, 1--18.

\end{thebibliography}

\end{document}